\documentclass{amsart}
\usepackage[T1]{fontenc}
\usepackage{amsmath,a4wide}
\usepackage{amssymb}
\usepackage{amscd}
\usepackage{epsfig}
\usepackage[T1]{fontenc}
\usepackage{epsfig}
\usepackage{amsmath}
\usepackage{braket}
\usepackage{graphicx}
\usepackage{subfigure}
\usepackage{mathrsfs}
\usepackage{epstopdf}
\usepackage{color}
\usepackage{booktabs}
\usepackage{ifpdf}
\usepackage{cite}
\usepackage{amsthm}
\usepackage{amsmath}
\usepackage{dsfont}
\definecolor{myblue}{rgb}{0.2,0,0.9}
\definecolor{blue-violet}{rgb}{0.54, 0.17, 0.89}
\usepackage{enumitem}
\setlist[enumerate]{label={\upshape(\roman*)}}
\usepackage{algorithm,algorithmic}

\usepackage{pgfplots}
\usepgfplotslibrary{groupplots}
\pgfplotsset{compat=1.12}

\usepackage{mathtools}
\mathtoolsset{showonlyrefs}

\usepackage{tikz}
\usetikzlibrary{arrows,intersections}
\usepackage{boldline}
\usepackage{multirow}
\usepackage[margin=1.32in]{geometry}

\usepackage{varioref}
\usepackage{xr-hyper}
\usepackage{hyperref}
\hypersetup{
colorlinks,linkcolor=blue,citecolor=blue,filecolor=red,urlcolor=blue}

\def\HS{\mathrm{HS}}
\definecolor{myblue}{rgb}{0.2,0,0.9}
\definecolor{blue_violet}{rgb}{0.54, 0.17, 0.89}
\definecolor{darkgreen}{rgb}{0,0.35,0}

\makeatletter
\DeclareRobustCommand*\cal{\@fontswitch\relax\mathcal}
\makeatother

\makeatletter
\newcommand{\labeltext}[3][]{%
\@bsphack%
\csname phantomsection\endcsname
\def\tst{#1}%
\def\labelmarkup{\emph}
\def\refmarkup{}%
\ifx\tst\empty\def\@currentlabel{\refmarkup{#2}}{\label{#3}}%
\else\def\@currentlabel{\refmarkup{#1}}{\label{#3}}\fi%
\@esphack%
\labelmarkup{#2}
}
\makeatother

\newtheorem{theorem}{Theorem}[section]
\newtheorem{proposition}[theorem]{Proposition}

\newtheorem{corollary}[theorem]{Corollary}
\newtheorem{lemma}[theorem]{Lemma}
\numberwithin{equation}{section}

\theoremstyle{definition}
\newtheorem{remark}[theorem]{Remark}
\newtheorem{example}[theorem]{Example}
\newtheorem{definition}[theorem]{Definition}

\newcommand\quotientone[2]{{^{#1}}/{_{#2}}}

\DeclareMathOperator*{\argmin}{arg\,min}

\DeclareMathOperator{\trace}{tr}

\DeclareMathOperator{\Pol}{Pol}

\DeclareMathOperator{\range}{rg}
\DeclareMathOperator{\re}{Re}
\DeclareMathOperator{\im}{Im}
\DeclareMathOperator*{\linspan}{span}

\DeclareMathOperator{\CNOT}{CNOT}
\DeclareMathOperator{\SWAP}{SWAP}
\DeclareMathOperator{\stab}{stab}
\DeclareMathOperator{\dom}{dom}
\DeclareMathOperator{\md}{mod}

\usepackage{scalerel,stackengine}
\stackMath
\newcommand\reallywidehat[1]{%
	\savestack{\tmpbox}{\stretchto{%
			\scaleto{%
				\scalerel*[\widthof{\ensuremath{#1}}]{\kern-.6pt\bigwedge\kern-.6pt}%
				{\rule[-\textheight/2]{1ex}{\textheight}}
			}{\textheight}%
		}{0.5ex}}%
	\stackon[1pt]{#1}{\tmpbox}%
}

\usepackage{xparse}
\makeatletter
\RenewDocumentCommand{\title}{om}{%
\IfNoValueTF{#1}
{\gdef\shorttitle{}}
{\gdef\shorttitle{#1}}%
\gdef\@title{#2}%
}
\makeatother

\makeatletter
\def\@tocline#1#2#3#4#5#6#7{\relax
\ifnum #1>\c@tocdepth 
\else
\par \addpenalty\@secpenalty\addvspace{#2}%
\begingroup \hyphenpenalty\@M
\@ifempty{#4}{%
	\@tempdima\csname r@tocindent\number#1\endcsname\relax
}{%
	\@tempdima#4\relax
}%
\parindent\z@ \leftskip#3\relax \advance\leftskip\@tempdima\relax
\rightskip\@pnumwidth plus4em \parfillskip-\@pnumwidth
#5\leavevmode\hskip-\@tempdima
\ifcase #1
\or\or \hskip 2em \or \hskip 2em \else \hskip 3em \fi%
#6\nobreak\relax
\hfill\hbox to\@pnumwidth{\@tocpagenum{#7}}\par
\nobreak
\endgroup
\fi}
\makeatother

\title[Harmonic analysis and circuit approximation on quantum input spaces]{Harmonic analysis and circuit approximation of\\Sobolev functions on quantum input spaces}

\author[P.~Schmocker]{Philipp Schmocker}
\address{Department of Mathematics, ETH Zurich, Switzerland}
\email{philipp.schmocker@math.ethz.ch}

\thanks{\textit{Key words:} harmonic analysis, spectral approximation, Sobolev functions, compact Lie groups, compact homogeneous spaces, quantum states, quantum circuits, quantum computing, Solovay--Kitaev theorem}
\thanks{\textit{MSC2020 Subject Classification:} primary 43A75, 43A85, 41A25; secondary 46E35, 68Q12, 81P68}

\date{\today}

\begin{document}

\begin{abstract}
	We derive approximation rates for Sobolev functions defined on the complex unit sphere and complex projective space, representing quantum states, and on the special and projective unitary groups, representing quantum unitary operations. By using harmonic analysis on compact groups and compact homogeneous spaces, we characterize the corresponding Peter--Weyl decompositions and Sobolev spaces via irreducible representations and the Laplace--Beltrami operator. Combining spectral truncation with finite-depth quantum-circuit approximation, invoking the Solovay--Kitaev theorem, we establish quantitative rates that separate the spectral truncation error from the circuit approximation error.
\end{abstract}

\vspace{-1.em}
\maketitle

\vspace{-0.6cm}
{
	\hypersetup{linkcolor=black}
	\tableofcontents
}
\vspace{-0.8cm}

\section{Introduction}
\label{sec:intro}

The geometry of quantum computing is intrinsically non-Euclidean. For $d = 2^n$, $n$-qubit states can be represented by normalized vectors on the complex unit sphere $\mathbb{S}^{2d-1}$, while physical $n$-qubit states, obtained by identifying vectors that differ only by a global phase, form the complex projective space $\mathbb{CP}^{d-1}$. In particular, for a single qubit with $n=1$, the space $\mathbb{CP}^{1}$ is naturally identified with the Bloch sphere $\mathbb{S}^2$. The corresponding quantum unitary operations of $n$-qubit state spaces are represented by the unitary group $\mathcal{U}(d)$ and the special unitary group $\mathcal{SU}(d)$, while unitary operations up to global phase form the projective unitary group $\mathcal{PU}(d)$. Hence, functions of quantum states and quantum unitary operations are naturally defined on these non-linear manifolds rather than on Euclidean domains.

Much of the existing literature, however, considers classical functions on Euclidean domains and embeds them into a quantum-computing framework through approximation by parametrized quantum circuits (PQCs), a central tool in quantum machine learning. Early works established universal approximation properties through data re-uploading and characterized the resulting quantum models in terms of truncated Fourier series \cite{perezsalinas20,perezsalinas21,schuld21}. More recent contributions derive quantitative approximation rates in terms of circuit width, circuit depth, and the number of qubits for a range of function classes, e.g., Korobov functions \cite{aftab24}, Fourier-integrable functions~\cite{gonon25}, Sobolev functions~\cite{manzano25}, smooth and H\"older-regular functions \cite{yu24}, and periodic functions~\cite{neufeld25}.

In contrast, we consider functions $f: G \rightarrow \mathbb{C}$ whose inputs are themselves quantum states or quantum unitary operations and are therefore defined on either the compact homogeneous spaces $G \in \lbrace \mathbb{S}^{2d-1}, \mathbb{CP}^{d-1} \rbrace$ or the compact Lie groups $G \in \lbrace \mathcal{SU}(d), \mathcal{PU}(d) \rbrace$. Combining spectral approximation with finite-depth quantum-circuit synthesis of the input, we obtain, for every Sobolev function $f\in H^s(G)$, spectral cutoff $K$, and circuit depth $L$, an approximation rate of the form
\begin{equation}
	\label{eq:intro}
	\Vert f - \mathcal{Q}_{K,L} f \Vert_{L^2(G)} \lesssim_{m,s} \left( (1+K)^{-s} + (1+K)^{\frac{m}{2}+1} \varepsilon_n(L) \right) \Vert f \Vert_{H^s(G)},
\end{equation}
where $m := \dim(G)$ is the real dimension of $G$ and $\varepsilon_n(L)$ denotes the worst-case circuit approximation error at depth $L$. Since the number of retained modes satisfies $N_K \asymp_m (1+K)^m$, the two terms in \eqref{eq:intro} scale as $N_K^{-s/m}$ and $N_K^{1/2+1/m} \varepsilon_n(L)$. Thus, the error \eqref{eq:intro} separates into a spectral term governed by the Sobolev regularity of $f$ and a circuit term governed by $\varepsilon_n(L)$.

In order to control the spectral approximation in \eqref{eq:intro}, we use harmonic analysis on compact Lie groups and compact homogeneous spaces. Using the Peter--Weyl decomposition and the Laplace--Beltrami operator, we characterize the corresponding Sobolev spaces and construct spectral projections onto finitely many eigenspaces, which attain the optimal $L^2$-approximation rate for $H^s$-Sobolev functions. For $\mathcal{U}(d)$, $\mathcal{SU}(d)$, and $\mathcal{PU}(d)$, the spectral components are indexed by their irreducible unitary representations, whereas for $\mathbb{S}^{2d-1}$ and $\mathbb{CP}^{d-1}$, they are described by spherical harmonics and phase-invariant bihomogeneous harmonics, respectively.

For the second term in \eqref{eq:intro}, we approximate the inputs by finite-depth quantum circuits. Fixing a finite universal gate set, the Solovay--Kitaev theorem (see \cite{kitaev97} and also \cite{dawson06,kuperberg23}) yields finite-depth approximations of arbitrary matrices in $\mathcal{SU}(d)$ and $\mathcal{PU}(d)$ to prescribed accuracy, thereby providing an explicit decay rate for the circuit error $\varepsilon_n(L)$ in terms of the circuit depth~$L$. Applying these circuit-generated unitary operations to a fixed reference state induces corresponding approximations on $\mathbb{S}^{2d-1}$ and $\mathbb{CP}^{d-1}$. The resulting effect on the function values is then uniformly controlled by a Bernstein--Nikolskii inequality for the spectral truncation.

This provides a unified approximation framework for functions defined directly on spaces of quantum states and quantum unitary operations. In contrast to approximation on Euclidean input domains, the geometry and symmetries of these spaces determine both the spectral decomposition and the resulting quantitative rates. The bound~\eqref{eq:intro} also reveals the interplay between spectral resolution and circuit depth: increasing the spectral cutoff $K$ reduces the truncation error but amplifies the circuit contribution, which must be compensated by increasing the circuit depth~$L$. Convergence is therefore obtained by suitably balancing $K \rightarrow \infty$ and $L \rightarrow \infty$. In two numerical examples, we illustrate that the approximation error exhibits the theoretically predicted dependence \eqref{eq:intro} on the spectral resolution and the maximal circuit depth.

\subsection{Outline}

The remainder of this article is structured as follows. In Section~\ref{sec:harm_sob}, we review harmonic analysis on compact groups and compact homogeneous spaces and introduce the corresponding Sobolev spaces. In Section~\ref{sec:quantum}, we recall the basic notions of quantum computing and the geometric description of quantum states and quantum unitary operations. In Section~\ref{sec:rate}, we derive approximation rates for Sobolev functions of quantum states and quantum unitary operations. Finally, in Section~\ref{sec:numerics}, we present two numerical examples illustrating the theoretical error bounds.

\subsection{Notation}
\label{sec:notation}

As usual, $\mathbb{N} := \lbrace 1,2,\ldots \rbrace$ and $\mathbb{N}_0 := \mathbb{N} \cup \lbrace 0 \rbrace$ denote the sets of natural numbers, $\mathbb{Z}$ the set of integers, $\mathbb{R}$ the real line, and $\mathbb{C}$ the set of complex numbers (with $\mathbf{i} := \sqrt{-1}$ and complex conjugate $\overline{a+b\mathbf{i}} := a-b\mathbf{i}$, for all $a,b \in \mathbb{R}$). Moreover, $a \lesssim_{c_1,\ldots,c_d} b$ means $a \leq Cb$ for some $C > 0$ depending only on $c_1,\ldots,c_d$, while $a \asymp_{c_1,\ldots,c_d} b$ means $a \lesssim_{c_1,\ldots,c_d} b$ and $b \lesssim_{c_1,\ldots,c_d} a$.

For any $d \in \mathbb{N}$, we denote by $\mathbb{C}^d$ the complex Euclidean space equipped with the norm $\vert x \vert := \big( \sum_{i=1}^d \vert x_i \vert^2 \big)^{1/2}$ induced by the inner product $\langle x, y \rangle_{\mathbb{C}^d} := y^* x := \sum_{i=1}^d x_i \overline{y_i}$, for $x := (x_1,\ldots,x_d) \in \mathbb{C}^d$ and $y := (y_1,\ldots,y_d) \in \mathbb{C}^d$. For any $m,n \in \mathbb{N}$, we denote by $\mathbb{C}^{m \times n}$ the vector space of complex ($m \times n$)-matrices equipped with the Hilbert--Schmidt (or Frobenius) norm $\Vert A \Vert_{\HS} := \big( \sum_{i=1}^m \sum_{j=1}^n \vert a_{i,j} \vert^2 \big)^{1/2}$ induced by the inner product $\langle A,B \rangle_{\mathbb{C}^{m \times n}} := \trace(A B^*) := \sum_{i=1}^m \sum_{j=1}^n a_{i,j} \overline{b_{i,j}}$, for $A := (a_{i,j})_{i=1,\ldots,m, \, j=1,\ldots,n} \in \mathbb{C}^{m \times n}$ and $B := (b_{i,j})_{i=1,\ldots,m, \, j=1,\ldots,n} \in \mathbb{C}^{m \times n}$. If $A^\top$ denotes the transpose of $A$, then $A^* := \overline{A}^\top$ is the Hermitian adjoint of $A$. Moreover, if $m=n$, we denote by $I_n \in \mathbb{C}^{n \times n}$ the identity matrix, by $\trace(A) := \sum_{i=1}^m a_{i,i}$ the trace of $A \in \mathbb{C}^{m \times m}$, by $\det(A)$ its determinant, and by $A^{\otimes k}$ its $k$-fold tensor product.

Throughout the paper, we only consider complex Hilbert spaces $\mathcal{H} := (\mathcal{H},\langle \cdot,\cdot\rangle_\mathcal{H})$, i.e., complex vector spaces $\mathcal{H}$ that are complete with respect to $\Vert x \Vert_\mathcal{H} := \sqrt{\langle x, x \rangle_\mathcal{H}}$. A family $(e_i)_{i \in I} \subseteq \mathcal{H}$ is called an \emph{orthonormal basis (ONB)} of $\mathcal{H}$ if $\langle e_i,e_j \rangle_\mathcal{H} = \delta_{i,j}$ for all $i,j \in I$ and $\linspan\lbrace e_i: i \in I \rbrace$ is dense in $\mathcal{H}$. In addition, $L(\mathcal{H};\widetilde{\mathcal{H}})$ denotes the Banach space of bounded linear operators $T: \mathcal{H} \rightarrow \widetilde{\mathcal{H}}$ into another Hilbert space $\widetilde{\mathcal{H}}$, equipped with the norm $\Vert T \Vert_{L(\mathcal{H};\widetilde{\mathcal{H}})} := \sup_{x \in \mathcal{H}, \, \Vert x \Vert_{\mathcal{H}} \leq 1} \Vert Tx \Vert_{\widetilde{\mathcal{H}}}$, where $L(\mathcal{H}) := L(\mathcal{H};\mathcal{H})$. In particular, the \emph{dual space} $\mathcal{H}^* := L(\mathcal{H};\mathbb{C})$ of $\mathcal{H}$ consists of bounded linear functionals and satisfies $\mathcal{H}^* \cong \mathcal{H}$ by Riesz representation. In addition, the \emph{adjoint} of $T \in L(\mathcal{H};\widetilde{\mathcal{H}})$ is the unique operator $T^* \in L(\widetilde{\mathcal{H}};\mathcal{H})$ such that $\langle Tx, y \rangle_{\widetilde{\mathcal{H}}} = \langle x, T^*y \rangle_\mathcal{H}$, for all $x \in \mathcal{H}$ and $y \in \widetilde{\mathcal{H}}$. 

For a set $G$ and an operator $\cdot: G \times G \rightarrow G$, a tuple $G := (G,\cdot)$ is called a \emph{group} if
\vspace{-0.15cm}
\begin{enumerate}
	\item $g_1 \cdot (g_2 \cdot g_3) = (g_1 \cdot g_2) \cdot g_3$ for all $g_1,g_2,g_3 \in G$,
	\item there exists a unique neutral element $e \in G$ with $e \cdot g = g \cdot e = g$ for all $g \in G$, and
	\item for every $g \in G$ there exists a unique inverse $g^{-1} \in G$ with $g \cdot g^{-1} = g^{-1} \cdot g = e$.
\end{enumerate}
\vspace{-0.15cm}
If $G$ is a group, then $H$ is called a \emph{subgroup} of $G$ (denoted by $H \leqslant G$) if $H \subseteq G$ is a subset and $(H,\cdot)$ is a group itself. The corresponding quotient space $G/H := \lbrace gH: g \in G \rbrace$ consists of left-cosets $gH := \lbrace gh: h \in H \rbrace$. Moreover, the \emph{center} $Z(G) := \lbrace g \in G: gh = hg \text{ for all } h \in G \rbrace$ of $G$ is a subgroup of $G$. In addition, if $G$ acts on a set $X$, we define $\stab_G(x) := \lbrace g \in G: g \cdot x = x \rbrace$ as the \emph{stabilizer} of $G$ at $x \in X$.

A \emph{topological group} $G := (G,\tau_G)$ is a group $G$ equipped with a Hausdorff topology $\tau_G$ such that multiplication $G \times G \ni (g,h) \mapsto g \cdot h \in G$ and inverse map $G \ni g \mapsto g^{-1} \in G$ are continuous. A \emph{Lie group} is a finite-dimensional smooth manifold $G$ with a compatible group structure such that the multiplication and inverse map are smooth. For a topological group $G$, a \emph{homogeneous $G$-space (or simply homogeneous space)} is a topological space $X$ together with a continuous transitive action $G \times X \rightarrow X$ of $G$. For example, if $H \leqslant G$ is a subgroup (and $G$ is compact), then $G/H$ equipped with the quotient topology is a (compact) homogeneous space, where $G$ acts transitively on $G/H$ via $G \times G/H \ni (g_1,g_2 H) \mapsto (g_1 \cdot g_2) H \in G/H$. Furthermore, if $G$ is a (compact) Lie group and $H \leqslant G$ is a closed subgroup, then $G/H$ equipped with the quotient topology is a (compact) homogeneous manifold. For more details, we refer to the textbooks~\cite{helgason00,folland16}.

If $G$ is a locally compact group, there exists by \cite[Theorem~2.10]{folland16} a left Haar measure $\mu_G$ on $G$, i.e., a non-zero Radon measure $\mu_G$ on $G$ with $\mu_G(xE) = \mu_G(E)$, for all $x \in G$ and Borel sets $E \subseteq G$, where $xE := \lbrace xg: g \in E \rbrace$. Moreover, $C^0(G)$ denotes the vector space of continuous functions $f: G \rightarrow \mathbb{C}$, while $L^p(G)$ represents the vector space of (equivalence classes of) measurable functions $f: G \rightarrow \mathbb{C}$ with $\Vert f \Vert_{L^p(G)} := \big( \int_G \vert f(x) \vert^p \, \mu_G(dx) \big)^{1/p} < \infty$, where $1 \leq p < \infty$.

For any fixed $d \in \mathbb{N}$, the \emph{unitary group (of degree $d$)} is defined as
\begin{equation}
	\label{eq:def_U}
	\mathcal{U}(d) := \left\lbrace U \in \mathbb{C}^{d \times d}: U^* U = U U^* = I_d \right\rbrace,
\end{equation}
which is a compact Lie group of real dimension $d^2$. Moreover, the \emph{special unitary group (of degree $d$)} is given as
\vspace{-0.1cm}
\begin{equation}
	\label{eq:def_SU}
	\mathcal{SU}(d) := \left\lbrace U \in \mathcal{U}(d): \det(U) = 1
	\right\rbrace,
\end{equation}
which is a compact Lie subgroup of $\mathcal{U}(d)$ having real dimension $d^2-1$. In addition, the \emph{projective unitary group (of degree $d$)} is defined by
\begin{equation}
	\label{eq:def_PU}
	\mathcal{PU}(d) := \quotientone{\mathcal{U}(d)}{\mathcal{U}(1)} \cong \quotientone{\mathcal{SU}(d)}{Z(\mathcal{SU}(d))},
\end{equation}
which is a compact Lie group and a compact homogeneous $\mathcal{U}(d)$-space. For $U \in \mathcal{U}(d)$, its equivalence class $[U] \in \mathcal{PU}(d)$ is the left coset $[U] := U \, \mathcal{U}(1) = \lbrace e^{\mathbf{i} \theta} U: \theta \in \mathbb{R} \rbrace$, where $\mathcal{U}(1)$ is in \eqref{eq:def_PU} identified with the central subgroup $\lbrace e^{\mathbf{i} \theta} I_d: \theta \in \mathbb{R} \rbrace \subseteq \mathcal{U}(d)$. The isomorphism in \eqref{eq:def_PU} follows from the fact that the canonical homomorphism $\mathcal{SU}(d) \ni U \mapsto [U] \in \mathcal{PU}(d)$ is surjective with kernel $Z(\mathcal{SU}(d)) = \lbrace e^{2\pi \mathbf{i} k/d} I_d: k = 0,\ldots,d-1 \rbrace$.

For any fixed $d \in \mathbb{N} \cap [2,\infty)$, the \emph{unit sphere (in $\mathbb{C}^d$)} is defined by
\begin{equation}
	\label{eq:def_S}
	\mathbb{S}^{2d-1} := \left\lbrace z \in \mathbb{C}^d: \vert z \vert = 1 \right\rbrace \cong \quotientone{\mathcal{U}(d)}{\mathcal{U}(d-1)},
\end{equation}
which is a compact homogeneous $\mathcal{U}(d)$-space, where $\mathcal{U}(d-1)$ is identified with the closed subgroup
\begin{equation}
	\left\lbrace
	\begin{pmatrix}
		1 & 0 \\
		0 & V
	\end{pmatrix}
	\in \mathbb{C}^{d \times d} : V \in \mathcal{U}(d-1) \right\rbrace \leqslant \mathcal{U}(d).
\end{equation}
Indeed, $\mathcal{U}(d)$ acts transitively on $\mathbb{S}^{2d-1}$ via $\mathcal{U}(d) \times \mathbb{S}^{2d-1} \ni (U,z) \mapsto Uz \in \mathbb{S}^{2d-1}$ and the stabilizer of the reference vector $e_1 = (1,0,\ldots,0)^\top \in \mathbb{S}^{2d-1}$ is equal to $\mathcal{U}(d-1)$. Hence, $\mathcal{U}(d)/\mathcal{U}(d-1) \ni U \, \mathcal{U}(d-1) \mapsto Ue_1 \in \mathbb{S}^{2d-1}$ is the $\mathcal{U}(d)$-equivariant diffeomorphism in \eqref{eq:def_S}. Furthermore, the \emph{complex projective space (in $\mathbb{C}^d$)} is defined by
\begin{equation}
	\label{eq:def_CP}
	\mathbb{CP}^{d-1} := \quotientone{\mathbb{C}^d \setminus \lbrace 0 \rbrace}{\sim} \,\cong\, \quotientone{\mathbb{S}^{2d-1}}{\mathcal{U}(1)} \,\cong\, \quotientone{\mathcal{U}(d)}{\mathcal{U}(1) \times \mathcal{U}(d-1)} \,\cong\, \quotientone{\mathcal{SU}(d)}{S(\mathcal{U}(1)\times\mathcal{U}(d-1))},
\end{equation}
which is a compact homogeneous $\mathcal{U}(d)$-space, where $x \sim y$ if and only if $y = \lambda x$ for some $\lambda \in \mathbb{C} \setminus \lbrace 0 \rbrace$, yielding the equivalence classes $[x] := \lbrace \lambda x: \lambda \in \mathbb{C} \setminus \lbrace 0 \rbrace \rbrace$. In \eqref{eq:def_CP}, $\mathcal{U}(1)$ acts on $\mathbb{S}^{2d-1}$ via $\mathcal{U}(1) \times \mathbb{S}^{2d-1} \ni (e^{\mathbf{i} \theta},x) \mapsto e^{\mathbf{i} \theta} x \in \mathbb{S}^{2d-1}$. Since the orbits are the intersections of $[x] \in \mathbb{CP}^{d-1}$ with $\mathbb{S}^{2d-1}$, the map $\mathbb{S}^{2d-1}/\mathcal{U}(1) \ni x \, \mathcal{U}(1) \mapsto [x] \in \mathbb{CP}^{d-1}$ is a diffeomorphism. Moreover, $\mathcal{U}(d)$ acts transitively on $\mathbb{CP}^{d-1}$ via $\mathcal{U}(d) \times \mathbb{CP}^{d-1} \ni (U,[x]) \mapsto [Ux] \in \mathbb{CP}^{d-1}$. Since the stabilizer of $[e_1]$ is
\begin{equation}
	\stab_{\mathcal{U}(d)}([e_1]) = \mathcal{U}(1) \times \mathcal{U}(d-1) \cong \left\lbrace
	\begin{pmatrix}
		z & 0 \\
		0 & V
	\end{pmatrix}
	\in \mathbb{C}^{d \times d}
	:
	z \in \mathcal{U}(1),\,
	V \in \mathcal{U}(d-1)
	\right\rbrace,
\end{equation}
the map $\mathcal{U}(d)/(\mathcal{U}(1) \times \mathcal{U}(d-1)) \ni U \, (\mathcal{U}(1) \times \mathcal{U}(d-1)) \mapsto [Ue_1] \in \mathbb{CP}^{d-1}$ is an $\mathcal{U}(d)$-equivariant diffeomorphism. Similarly, $\mathcal{SU}(d)$ acts transitively on $\mathbb{CP}^{d-1}$ via $\mathcal{SU}(d) \times \mathbb{CP}^{d-1} \ni (U,[x]) \mapsto [Ux] \in \mathbb{CP}^{d-1}$. Since the stabilizer of $[e_1]$ is
\begin{equation}
	\stab_{\mathcal{SU}(d)}([e_1]) = S(\mathcal{U}(1) \times \mathcal{U}(d-1)) :=
	\left\lbrace
	\begin{pmatrix}
		z & 0 \\
		0 & V
	\end{pmatrix}
	:
	z \in \mathcal{U}(1),\,
	V \in \mathcal{U}(d-1),\,
	z\det(V)=1
	\right\rbrace,
\end{equation}
with $S(H)$ denoting the subgroup of all $U \in H \leqslant \mathcal{U}(d)$ with $\det(U) = 1$, the map $\mathcal{SU}(d)/S(\mathcal{U}(1)\times\mathcal{U}(d-1)) \ni U S(\mathcal{U}(1)\times\mathcal{U}(d-1)) \mapsto [Ue_1] \in \mathbb{CP}^{d-1}$ is an $\mathcal{SU}(d)$-equivariant diffeomorphism.

\section{Harmonic analysis and Sobolev spaces on compact groups}
\label{sec:harm_sob}

In this section, we recall the Peter--Weyl decomposition of the space $L^2(G)$ of square-integrable measurable functions $f: G \rightarrow \mathbb{C}$ on a compact group $G$ and introduce the Sobolev space $H^s(G)$.

\subsection{Peter--Weyl decomposition}
\label{sec:pw}

We follow \cite{folland16} and fix a compact group $G$. By \cite[Theorem~2.10]{folland16}, there exists a left Haar measure $\mu_G$ on $G$, i.e., a non-zero Radon measure $\mu_G$ on $G$ with $\mu_G(G) = 1$ and $\mu_G(xE) = \mu_G(E)$, for all $x \in G$ and Borel sets $E \subseteq G$, where $xE := \lbrace xg: g \in E \rbrace$.

A \emph{unitary representation} of $G$ is a homomorphism $\pi: G \rightarrow \mathcal{U}(\mathcal{H}_\pi)$ that is continuous with respect to the strong operator topology, where $\mathcal{H}_\pi$ is a complex Hilbert space and $\mathcal{U}(\mathcal{H}_\pi)$ denotes the group of unitary operators on it. Such a representation $\pi$ is called \emph{irreducible} if $\lbrace 0 \rbrace$ and $\mathcal{H}_\pi$ are the only closed subspaces of $\mathcal{H}_\pi$ that are invariant under $\pi(x)$, for all $x \in G$. Two irreducible unitary representations $\pi: G \rightarrow \mathcal{U}(\mathcal{H}_\pi)$ and $\pi': G \rightarrow \mathcal{U}(\mathcal{H}_{\pi'})$ are equivalent if there exists a unitary operator $U: \mathcal{H}_\pi \rightarrow \mathcal{H}_{\pi'}$ with $U \pi(x) = \pi'(x) U$, for all $x \in G$. The \emph{unitary dual} $\widehat{G}$ of $G$ is the set of equivalence classes $[\pi]$ of irreducible unitary representations of $G$. Since $G$ is compact, every irreducible unitary representation is finite-dimensional with $d_\pi := \dim(\mathcal{H}_\pi) < \infty$.

For an irreducible representation $[\pi] \in \widehat{G}$, the functions $G \ni x \mapsto \phi_{u,v}(x) := \langle \pi(x) u, v \rangle_{\mathcal{H}_\pi} \in \mathbb{C}$, $u,v \in \mathcal{H}_\pi$, are called the \emph{matrix elements} of $\pi$. If $(e^\pi_i)_{i=1,\ldots,d_\pi}$ is an ONB of $\mathcal{H}_\pi$, we define
\begin{equation}
	\label{eq:def_matrix_el}
	\pi_{i,j}(x) := \phi_{e^\pi_j,e^\pi_i}(x) := \langle \pi(x) e^\pi_j, e^\pi_i \rangle_{\mathcal{H}_\pi}, \qquad i,j = 1,\ldots,d_\pi, \, x \in G.
\end{equation}
By defining $\mathcal{E}_\pi := \linspan_\mathbb{C}\lbrace \pi_{i,j}: i,j = 1,\ldots,d_\pi \rbrace$, the algebra $\mathcal{E} := \linspan_\mathbb{C}\big\lbrace \bigcup_{[\pi] \in \widehat{G}} \mathcal{E}_\pi \big\rbrace$ is a dense subset of $C^0(G)$ and $L^p(G)$, for all $1 \leq p < \infty$ (see \cite[Theorem~5.11]{folland16}). 

\begin{theorem}[Peter--Weyl]
	\label{thm:peter_weyl}
	\hspace{-0.1cm}\cite[Theorem~5.12]{folland16}
	Let $G$ be a compact group. Then,
	\begin{equation}
		\label{eq:thm:peter_weyl}
		L^2(G) = \bigoplus_{[\pi] \in \widehat{G}} \mathcal{E}_\pi
	\end{equation}
	and $(\sqrt{d_\pi} \, \pi_{i,j})_{[\pi] \in \widehat{G}, \, i,j = 1,\ldots,d_\pi}$ is an ONB of $L^2(G)$. In particular, for every $f \in L^2(G)$, we have
	\begin{equation}
		f = \sum_{[\pi] \in \widehat{G}} d_\pi \trace\big( \widehat{f}(\pi) \pi(\cdot) \big),
	\end{equation}
	where the sum converges in $L^2(G)$ and $\widehat{f}(\pi) := \int_G f(x) \pi(x)^* \, \mu_G(dx) \in L(\mathcal{H}_\pi) \cong \mathbb{C}^{d_\pi \times d_\pi}$.
\end{theorem}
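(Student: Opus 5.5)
The plan has two parts: orthonormality of the rescaled matrix coefficients via Schur orthogonality, and completeness from the density of $\mathcal{E}$ in $L^2(G)$ recalled just before the statement (\cite[Theorem~5.11]{folland16}). The expansion formula is then a rewriting of the ONB expansion in terms of $\widehat{f}(\pi)$.

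\emph{Orthogonality.} For irreducible $\pi,\pi'$ and any linear $A:\mathcal{H}_\pi\to\mathcal{H}_{\pi'}$, consider the averaged operator
\[
T_A:=\int_G \pi'(x)A\pi(x)^*\,\mu_G(dx).
\]
By invariance of $\mu_G$ it intertwines $\pi$ and $\pi'$. On a compact group the left Haar measure is also right invariant and inversion invariant, which we use freely. By Schur's lemma, $T_A=0$ if $[\pi]\neq[\pi']$. If $\pi=\pi'$, then $T_A=c_A I$, and taking traces gives $c_A=\trace(A)/d_\pi$.

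Now choose $A=e^{\pi'}_{j'}\otimes\overline{e^\pi_j}$, i.e.\ $Au=\langle u,e^\pi_j\rangle e^{\pi'}_{j'}$, and take matrix entries. This yields
\[
\langle \pi_{i,j},\pi'_{i',j'}\rangle_{L^2(G)}=\frac{\delta_{[\pi],[\pi']}\,\delta_{i,i'}\,\delta_{j,j'}}{d_\pi},
\]
so the family $(\sqrt{d_\pi}\,\pi_{i,j})$ is orthonormal. In particular the spaces $\mathcal{E}_\pi$ are mutually orthogonal for inequivalent $\pi$; choosing one representative per class makes $\mathcal{E}_\pi$ well defined, since equivalent representations give the same span.

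\emph{Completeness.} The span of the orthonormal family is $\mathcal{E}$, which is dense in $L^2(G)$ by the cited density result. An orthonormal family with dense span is an ONB, and the orthogonal direct sum \eqref{eq:thm:peter_weyl} follows.

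\emph{Expansion.} Writing $f=\sum d_\pi\langle f,\pi_{i,j}\rangle\pi_{i,j}$, I compute
\[
\langle f,\pi_{i,j}\rangle=\int_G f(x)\overline{\pi_{i,j}(x)}\,\mu_G(dx)=\widehat f(\pi)_{j,i},
\]
using $\overline{\pi_{i,j}(x)}=(\pi(x)^*)_{j,i}$. Hence
\[
\sum_{i,j}\widehat f(\pi)_{j,i}\,\pi_{i,j}(x)=\trace\big(\widehat f(\pi)\pi(x)\big),
\]
which gives the stated formula. Convergence in $L^2(G)$ holds because the partial sums are the ONB expansion, and the sum over $\widehat G$ converges unconditionally.

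The main obstacle is the completeness step. If one may not rely on the density theorem quoted just before the statement, density of $\mathcal{E}$ requires the spectral theory of compact self-adjoint convolution operators $f\mapsto \psi*f$ with $\psi$ an approximate identity. Their nonzero eigenspaces are finite-dimensional and right-translation invariant, and hence consist of matrix coefficients. Since that result is stated earlier, I would simply invoke it.
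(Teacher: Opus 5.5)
Your proposal is correct and takes the same route as the proof behind the cited result \cite[Theorem~5.12]{folland16}, which the paper invokes without reproducing. There, Schur orthogonality gives orthonormality of $(\sqrt{d_\pi}\,\pi_{i,j})$, the density of $\mathcal{E}$ from \cite[Theorem~5.11]{folland16} gives completeness, and the identity $\langle f,\pi_{i,j}\rangle_{L^2(G)}=[\widehat{f}(\pi)]_{j,i}$ (which the paper also uses in the proof of Corollary~\ref{cor:peter_weyl}) turns the ONB expansion into the trace formula.
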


\begin{example}[Peter--Weyl decomposition of $\mathcal{U}(d)$]
	\label{ex:pw_U}
	For $d \in \mathbb{N}$, the unitary group $\mathcal{U}(d)$ consisting of unitary matrices $U \in \mathbb{C}^{d \times d}$ is a compact Lie group of real dimension $d^2$. Its irreducible unitary representations are finite-dimensional and parametrized by the dominant integral weights
	\begin{equation}
		\label{eq:dom_int_w}
		\Omega_{\mathcal{U}(d)} := \left\lbrace \omega := (\omega_1,\ldots,\omega_d) \in \mathbb{Z}^d: \omega_1 \geq \cdots \geq \omega_d \right\rbrace,
	\end{equation}
	whence $\widehat{\mathcal{U}(d)} \cong \Omega_{\mathcal{U}(d)}$ (see, e.g., \cite[Proposition~15.47]{fulton04}). More precisely, for every $\omega \in \Omega_{\mathcal{U}(d)}$, define $\mu_\omega := (\omega_1-\omega_d,\ldots,\omega_{d-1}-\omega_d,0)$. Then, the irreducible representation $\pi_\omega$ can be realized on the Schur module $\mathcal{H}_\omega := \mathcal{S}_{\mu_\omega}(\mathbb{C}^d)$, which is constructed from tensor powers of $\mathbb{C}^d$ with Young symmetrizers, i.e., $\mathcal{U}(d) \ni U \mapsto \pi_\omega(U) := \det(U)^{\omega_d} \mathcal{S}_{\mu_\omega}(U) \in \mathcal{U}(\mathcal{H}_\omega)$ (see \cite[Section~6.1]{fulton04}), where $d_\omega := \dim(\mathcal{H}_\omega) = \prod_{1 \leq i < j \leq d} \frac{\omega_i-\omega_j+j-i}{j-i}$ by the Weyl dimension formula (see, e.g., \cite[Corollary~24.4]{fulton04}). Choosing an ONB $(e^\omega_i)_{i=1,\ldots,d_\omega}$ of $\mathcal{H}_\omega$, we can define the matrix elements $\pi_{\omega,i,j}(U)$ as above. Hence, by using the identification $\mathcal{H}_\omega \otimes \mathcal{H}_\omega^* \ni u \otimes \ell \mapsto (U \mapsto \ell(\pi_\omega(U)u)) \in \mathcal{E}_{\pi_\omega} := \linspan_\mathbb{C}\lbrace \pi_{\omega,i,j}: i,j = 1,\ldots,d_\omega \rbrace$, we obtain from Theorem~\ref{thm:peter_weyl} the Peter--Weyl decomposition
	\begin{equation}
		\label{eq:pw_Ud}
		L^2(\mathcal{U}(d)) = \bigoplus_{\omega \in \Omega_{\mathcal{U}(d)}} \mathcal{E}_{\pi_\omega} \cong \bigoplus_{\omega \in \Omega_{\mathcal{U}(d)}} \left( \mathcal{H}_\omega \otimes \mathcal{H}_\omega^* \right)
	\end{equation}
	and that $(\sqrt{d_\omega} (\pi_\omega)_{i,j})_{\omega \in \Omega_{\mathcal{U}(d)}, \, i,j=1,\ldots,d_\omega}$ is an ONB of $L^2(\mathcal{U}(d))$.
\end{example}

We now derive a version of Theorem~\ref{thm:peter_weyl} for compact homogeneous spaces. Let $G$ be a compact group and let $H \leqslant G$ be a closed subgroup. We equip the compact homogeneous space $G/H$ with the normalized $G$-invariant measure. For every $[\pi] \in \widehat{G}$, we define $\mathcal{H}_\pi^H := \lbrace u \in \mathcal{H}_\pi: \pi(h)u=u \text{ for all } h \in H \rbrace$ with $d_\pi^H := \dim(\mathcal{H}_\pi^H)$. Moreover, for every $[\pi] \in \widehat{G}_H := \lbrace [\pi] \in \widehat{G}: d_\pi^H \geq 1 \rbrace$, we choose an ONB $(e_i)_{i=1,\ldots,d_\pi}$ of $\mathcal{H}_\pi$ such that $(e_j)_{j=1,\ldots,d_\pi^H}$ is an ONB of $\mathcal{H}_\pi^H$, and set
\vspace{-0.1cm}
\begin{equation} 
	\pi_{i,j}(gH) := \left\langle \pi(g)e_j,e_i \right\rangle_{\mathcal{H}_\pi}, \qquad i=1,\ldots,d_\pi, \quad j=1,\ldots,d_\pi^H. 
	\vspace{-0.07cm}
\end{equation}
This is well defined since $\pi_{i,j}(xhH) = \langle \pi(x) \pi(h) e_j, e_i \rangle_{\mathcal{H}_\pi} = \pi_{i,j}(xH)$, for all $h \in H$. In addition, for every $[\pi] \in \widehat{G}_H$, we define $\mathcal{E}_\pi^H := \linspan\lbrace \pi_{i,j}: i=1,\ldots,d_\pi,\, j=1,\ldots,d_\pi^H \rbrace$.

\begin{corollary}[Peter--Weyl] 
	\label{cor:peter_weyl} 
	Let $G$ be a compact group, let $H \leqslant G$ be a closed subgroup, and equip the compact homogeneous space $G/H$ with the normalized $G$-invariant measure. Then,
	\vspace{-0.1cm}
	\begin{equation} 
		\label{eq:cor:peter_weyl} L^2(G/H) = \bigoplus_{[\pi] \in \widehat{G}_H} \mathcal{E}_\pi^H,
		\vspace{-0.07cm}
	\end{equation}
	and $(\sqrt{d_\pi} \, \pi_{i,j})_{[\pi] \in \widehat{G}_H, \, i=1,\ldots,d_\pi, \, j=1,\ldots,d_\pi^H}$ is an ONB of $L^2(G/H)$. In particular, for $f \in L^2(G/H)$,
	\vspace{-0.1cm}
	\begin{equation} 
		\label{eq:cor:peter_weyl2} 
		f(xH) = \sum_{[\pi] \in \widehat{G}_H} d_\pi \trace\big( \widehat{f}(\pi) \pi(x) \big),
		\vspace{-0.07cm}
	\end{equation}
	where the sum converges in $L^2(G/H)$ and $\trace(\widehat{f}(\pi) \pi(x))$ is independent of the representative $x$ of the coset $xH$, as $\trace(\widehat{f}(\pi) \pi(xh)) = \trace(\pi(h) \widehat{f}(\pi) \pi(x)) = \trace(\widehat{f}(\pi) \pi(x))$ for all $h \in H$.
\end{corollary}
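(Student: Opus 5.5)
The plan is to pull everything back to $G$ via the quotient map $p: G \to G/H$, $p(x) = xH$, and then apply Theorem~\ref{thm:peter_weyl}. Since the normalized $G$-invariant measure on $G/H$ is the pushforward $p_*\mu_G$ (by uniqueness of normalized $G$-invariant Radon measures on $G/H$), the map $f \mapsto f\circ p$ is an isometry from $L^2(G/H)$ onto the closed subspace
\begin{equation}
	L^2(G)^H := \lbrace F \in L^2(G): F(xh) = F(x) \text{ for all } h \in H \text{ and a.e. } x \in G \rbrace.
\end{equation}
Surjectivity onto this subspace uses the fact that right-$H$-invariant measurable functions factor through $p$; I would justify this by averaging, i.e.\ for $F \in L^2(G)^H$ the function $\int_H F(xh)\,\mu_H(dh)$ is an everywhere right-$H$-invariant representative of $F$. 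Hence it suffices to decompose $L^2(G)^H$.

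Next I would introduce the orthogonal projection $P_H F(x) := \int_H F(xh)\,\mu_H(dh)$ on $L^2(G)$. It is the averaging over the unitary right-regular action of $H$, so it is self-adjoint and idempotent, and its range is exactly $L^2(G)^H$. I would then compute it on matrix elements:
\begin{equation}
	P_H \pi_{i,j}(x) = \Big\langle \pi(x) \Big( \int_H \pi(h)\,\mu_H(dh) \Big) e_j, e_i \Big\rangle_{\mathcal{H}_\pi} = \langle \pi(x) Q_\pi e_j, e_i \rangle_{\mathcal{H}_\pi},
\end{equation}
where $Q_\pi := \int_H \pi(h)\,\mu_H(dh)$ is the orthogonal projection onto $\mathcal{H}_\pi^H$. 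This is the standard averaging argument: $Q_\pi$ is self-adjoint, $\pi(h)Q_\pi = Q_\pi$, and $Q_\pi$ fixes invariant vectors. With the adapted basis, $Q_\pi e_j = e_j$ for $j \le d_\pi^H$ and $Q_\pi e_j = 0$ otherwise.

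Consequently, $P_H$ maps the ONB $(\sqrt{d_\pi}\,\pi_{i,j})$ of Theorem~\ref{thm:peter_weyl} to itself on the indices $j \le d_\pi^H$ and to $0$ on all other indices. It follows that the range $L^2(G)^H$ is the closed span of $\lbrace \sqrt{d_\pi}\,\pi_{i,j} : [\pi]\in\widehat G_H,\ j\le d_\pi^H\rbrace$, and this family is orthonormal, being a subfamily of an ONB. Transporting this back through $p$ yields \eqref{eq:cor:peter_weyl} together with the ONB claim. One also has to check that the choice of representative $\pi$ in $[\pi]$ does not matter; this is clear because the spaces $\mathcal{E}_\pi^H$ are invariant under unitary equivalence.

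For the expansion \eqref{eq:cor:peter_weyl2}, I would apply Theorem~\ref{thm:peter_weyl} to $F = f\circ p$ and write $F = P_H F$. Applying $P_H$ termwise (it is continuous) gives $\widehat{F}(\pi) = Q_\pi \widehat{F}(\pi)$ by right-invariance, which follows from the change of variables $x \mapsto xh$. In particular, $\widehat F(\pi) = 0$ whenever $d_\pi^H = 0$, so the sum reduces to $\widehat G_H$. The independence of the representative is already noted in the statement. I expect the main technical obstacle to be the measure-theoretic identification $L^2(G/H) \cong L^2(G)^H$, namely the pushforward and uniqueness of the invariant measure and the a.e.\ invariance issue; the algebra is routine.
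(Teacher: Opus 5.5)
Your proposal is correct and follows essentially the same route as the paper. Both arguments identify $L^2(G/H)$ with $L^2(G)^H$ (the paper cites the quotient integral formula, you use the pushforward and averaging argument), compute the averaging projection on matrix elements via $\int_H \pi(h)\,\mu_H(dh)$, restrict the Peter--Weyl ONB, and discard the terms with $d_\pi^H = 0$ using the right-$H$-invariance of $\widehat f(\pi)$. The only cosmetic difference is how orthonormality is obtained: you read it off directly from the projection sending each Peter--Weyl basis element to itself or to zero, whereas the paper re-invokes the Schur orthogonality relations.
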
 
\begin{proof} 
	By the quotient integral formula (see \cite[Theorem~2.51]{folland16}), the map 
	\vspace{-0.1cm}
	\begin{equation} 
		J:L^2(G/H) \ni f \,\,\, \mapsto \,\,\, Jf := (x \mapsto f(xH)) \in L^2(G)^H := \left\lbrace f \in L^2(G): \,
		\begin{matrix}
			f(\cdot \, h) = f(\cdot) \text{ in } L^2(G) \\
			\text{for all } h \in H
		\end{matrix}
		\right\rbrace
		\vspace{-0.07cm}
	\end{equation}
	is an isometric isomorphism. For every $[\pi] \in \widehat{G}$, the averaging operator $L^2(G) \ni f \mapsto P^H f := \big( x \mapsto \int_H f(xh) \, \mu_H(dh) \big) \in L^2(G)^H$ is the orthogonal projection from $L^2(G)$ onto $L^2(G)^H$. Moreover, $P_\pi^H := \int_H \pi(h) \, \mu_H(dh)$ is the orthogonal projection onto $\mathcal{H}_\pi^H$. Since $P^H(x \mapsto \langle \pi(x) u, v \rangle_{\mathcal{H}_\pi}) = (x \mapsto \langle \pi(x) P_\pi^H u, v \rangle_{\mathcal{H}_\pi})$, the operator $P^H$ preserves every $\mathcal{E}_\pi$ and therefore
	\vspace{-0.1cm}
	\begin{equation}
		\mathcal{E}_\pi \cap L^2(G)^H = \linspan\left\lbrace x \mapsto \left\langle \pi(x) u, v \right\rangle_{\mathcal{H}_\pi}: u \in \mathcal{H}_\pi^H,\, v \in \mathcal{H}_\pi \right\rbrace
		\vspace{-0.07cm}
	\end{equation}
	In particular, $\mathcal{E}_\pi \cap L^2(G)^H$ is trivial whenever $d_\pi^H = 0$, whereas for $d_\pi^H \geq 1$, it is spanned by $(\langle \pi(\cdot) e_j, e_i \rangle_{\mathcal{H}_\pi})_{i=1,\ldots,d_\pi,\, j=1,\ldots,d_\pi^H}$. Combining this with \eqref{eq:thm:peter_weyl} of Theorem~\ref{thm:peter_weyl}, we obtain 
	\vspace{-0.1cm}
	\begin{equation} 
		L^2(G)^H = \bigoplus_{[\pi] \in \widehat{G}_H} \left( \mathcal{E}_\pi \cap L^2(G)^H \right).
		\vspace{-0.07cm}
	\end{equation}
	Transporting this decomposition through $J^{-1}: L^2(G)^H \rightarrow L^2(G/H)$ yields \eqref{eq:cor:peter_weyl}. Moreover, the Schur orthogonality relations and the quotient integral formula imply for every $[\pi],[\pi'] \in \widehat{G}_H$ that
	\vspace{-0.1cm}
	\begin{equation} 
		\int_{G/H} \pi_{i,j}(xH) \overline{\pi'_{k,\ell}(xH)} \, \mu_{G/H}(dxH) = \frac{1}{d_\pi} \delta_{\pi,\pi'} \delta_{i,k} \delta_{j,\ell}.
		\vspace{-0.07cm}
	\end{equation}
	Thus, $(\sqrt{d_\pi} \, \pi_{i,j})_{[\pi] \in \widehat{G}_H, \, i=1,\ldots,d_\pi, \, j=1,\ldots,d_\pi^H}$ is an ONB of $L^2(G/H)$. Finally, expanding every $f \in L^2(G/H)$ with respect to this ONB and using that $\left\langle f,\pi_{i,j} \right\rangle_{L^2(G/H)} = \int_G f(xH) \overline{\pi_{i,j}(xH)} \, \mu_G(dx) = [\widehat{f}(\pi)]_{j,i}$ (by the quotient integral formula), we conclude that
	\vspace{-0.1cm}
	\begin{equation}
		\label{eq:cor:peter_weyl:proof1}
		f(xH) = \sum_{[\pi] \in \widehat{G}_H} d_\pi \sum_{i=1}^{d_\pi} \sum_{j=1}^{d_\pi^H} \left\langle f,\pi_{i,j} \right\rangle_{L^2(G/H)} \pi_{i,j}(xH) = \sum_{[\pi] \in \widehat{G}_H} d_\pi \sum_{i=1}^{d_\pi} \sum_{j=1}^{d_\pi^H} [\widehat{f}(\pi)]_{j,i} \pi_{i,j}(xH)
		\vspace{-0.07cm}
	\end{equation}
	where the sum converges in $L^2(G/H)$. Since $\pi(h) \widehat{f}(\pi) = \widehat{f}(\pi)$ (by right-$H$-invariance), we have $\range(\widehat{f}(\pi)) \subseteq \mathcal{H}_\pi^H$ and therefore $[\widehat{f}(\pi)]_{j,i}=0$ for $j>d_\pi^H$. This together with \eqref{eq:cor:peter_weyl:proof1} proves \eqref{eq:cor:peter_weyl2}.
\end{proof}

\subsection{Sobolev spaces and spectral approximation}
\label{sec:sob}

In this section, we follow \cite{jost11,hebey00} and introduce Sobolev spaces on a compact Riemannian manifold $(G,\mathbf{g}_G)$ without boundary. 

A \emph{Riemannian manifold} $G := (G,\mathbf{g}_G)$ is a finite-dimensional smooth real manifold $G$ equipped with a Riemannian metric $\mathbf{g}_G := (\mathbf{g}_G(x))_{x \in G}$, i.e., a smoothly varying inner product $\mathbf{g}_G(x) : T_x G \times T_x G \rightarrow \mathbb{R}$ on each tangent space $T_x G$, where $\Vert \cdot \Vert_{\mathbf{g}_G(x)}$ denotes the induced norm. The corresponding distance $d_G: G \times G \rightarrow [0,\infty]$ is defined as the infimum of lengths $L(\gamma) := \int_0^1 \Vert \dot{\gamma}(t) \Vert_{\mathbf{g}_G(\gamma(t))} \, dt$ over all piecewise $C^1$-curves $\gamma: [0,1] \rightarrow G$ with $\gamma(0) = x$ and $\gamma(1) = y$, while the Riemannian volume measure is given by $\nu_G(dx) = \sqrt{\det(\mathbf{g}_{G,i,j}(x))} \, dx$, where $\mathbf{g}_{G,i,j}(x) := (\mathbf{g}_G(x)(\partial_i,\partial_j))_{i,j}$ is the matrix representation of the Riemannian metric $\mathbf{g}_G$ at $x \in G$ in terms of local coordinates $\partial_i := \frac{\partial}{\partial x_i}$. The Christoffel symbols of the Levi--Civita connection are in any chart given by
\begin{equation}
	\Gamma_{i,j}^k = \frac{1}{2} \left( \partial_i \mathbf{g}_{G,m,j} + \partial_j \mathbf{g}_{G,m,i} - \partial_m \mathbf{g}_{G,i,j} \right) \mathbf{g}_G^{m,k},
\end{equation}
where we adopt Einstein's summation convention and $(\mathbf{g}_G^{i,j}(x))_{i,j}$ denotes the inverse of the matrix $(\mathbf{g}_{G,i,j}(x))_{i,j}$. For $k \in \mathbb{N}_0$, we denote by $\nabla^k f := \nabla (\nabla^{k-1} f)$ the $k$-th covariant derivative of a smooth function $f \in C^\infty(G)$, with $\nabla^0 f := f$. In particular, the first and second covariant derivatives are in local coordinates given by $(\nabla f)_i = \partial_i f$ and $(\nabla^2 f)_{i,j} = \partial_{i,j} f - \Gamma_{i,j}^k \partial_k f$, respectively, where $\partial_{i,j} f := \frac{\partial^2 f}{\partial x_i \partial x_j}$ denotes the second partial derivative. For more details, we refer to \cite{jost11,hebey00}.

We now assume that $(G,\mathbf{g}_G)$ is a compact Riemannian manifold without boundary and define the normalized Riemannian volume measure $\mu_G(A) := \frac{\nu_G(A)}{\nu_G(G)}$, for all Borel subsets $A \subseteq G$. For $k \in \mathbb{N}_0$, we define the Sobolev space $H^k_\nabla(G)$ as the completion of $C^\infty(G)$ with respect to
\begin{equation}
	\Vert f \Vert_{H^k_\nabla(G)} := \left( \sum_{j=0}^k \int_G \vert \nabla^j f(x) \vert^2 \, \mu_G(dx) \right)^\frac{1}{2},
\end{equation}
which is by \cite[Proposition~2.1]{hebey00} a Hilbert space under the inner product
\begin{equation}
	\langle f,g \rangle_{H^k_\nabla(G)} := \sum_{j=0}^k \int_G \langle \nabla^j f(x), \nabla^j g(x) \rangle \, \mu_G(dx).
\end{equation}
Note that $\vert \nabla^j f(x) \vert^2 = \mathbf{g}_G^{i_1,k_1}(x) \cdots \mathbf{g}_G^{i_j,k_j}(x)
\, (\nabla^j f)_{i_1,\ldots,i_j}(x) \, \overline{(\nabla^j f)_{k_1,\ldots,k_j}(x)}$ in local coordinates. To define Sobolev spaces of order $s \in \mathbb{R}$, let $\Delta_G$ be the Laplace--Beltrami operator defined by
\begin{equation}
	\Delta_G f(x) := \trace_{\mathbf{g}_G(x)}\left( \nabla^2 f(x) \right),
\end{equation}
for $f \in C^\infty(G)$, which, according to \cite[p.~3]{hebey00}, is given in local coordinates by
\begin{equation}
	\Delta_G f(x) = \mathbf{g}_G^{i,j}(x) \left( \partial_{i,j} f(x) - \Gamma_{i,j}^\ell \partial_\ell f(x) \right) = \frac{\partial_m \left( \sqrt{\det(\mathbf{g}_{G,i,j}(x))} \, \mathbf{g}_G^{m,k}(x) \, \partial_k f(x) \right)}{\sqrt{\det(\mathbf{g}_{G,i,j}(x))}}.
\end{equation}
The operator $-\Delta_G$ is non-negative and essentially self-adjoint on $C^\infty(G) \subseteq L^2(G)$ (see, e.g., \cite[Theorem~2.4]{strichartz83}), whence its closure, still denoted by $-\Delta_G$, is a non-negative self-adjoint operator on $L^2(G)$. Since $G$ is compact, $-\Delta_G$ has discrete spectrum (see \cite[Theorem~I.1]{chavel84}), and there exist an ONB $(\varphi_r)_{r \in \mathbb{N}}$ of $L^2(G)$ consisting of smooth eigenfunctions and eigenvalues $(\lambda_r)_{r \in \mathbb{N}} \subseteq [0,\infty)$ with $-\Delta_G \varphi_r = \lambda_r \varphi_r$, for all $r \in \mathbb{N}$. For every $s \in \mathbb{R}$, we define the Sobolev space of order $s$ by
\begin{equation}
	H^s(G) := \left\lbrace f \in \mathcal{D}'(G): \Vert f \Vert_{H^s(G)} < \infty \right\rbrace,
\end{equation}
equipped with the norm 
\begin{equation}
	\label{eq:def_sob_norm}
	\Vert f \Vert_{H^s(G)} := \left( \sum_{r=1}^{\infty} (1+\lambda_r)^s \vert \langle f, \overline{\varphi_r} \rangle \vert^2 \right)^\frac{1}{2},
\end{equation}
where $\mathcal{D}'(G) := C^\infty(G)'$ denotes the space of distributions (i.e., the continuous dual space of $C^\infty(G)$ equipped with its usual Fr\'echet topology) and $\langle \cdot, \cdot \rangle$ denotes the corresponding dual pairing. In addition, for $s \geq 0$, the fractional power $(1-\Delta_G)^{s/2}$ is by \cite{kato60,komatsu66} defined as
\begin{equation}
	(1-\Delta_G)^\frac{s}{2} f = \sum_{r=1}^\infty (1+\lambda_r)^\frac{s}{2} \langle f,\varphi_r\rangle_{L^2(G)} \, \varphi_r,
\end{equation}
for $f \in \dom((1-\Delta_G)^{s/2}) = H^s(G) \subseteq L^2(G)$, where the sum converges in $L^2(G)$. In particular, $\Vert f \Vert_{H^s(G)} = \Vert (1-\Delta_G)^\frac{s}{2} f \Vert_{L^2(G)}$. Furthermore, if $s = k \in \mathbb{N}_0$, then $H^k(G) = H^k_\nabla(G)$ as sets with equivalent norms $\Vert \cdot \Vert_{H^k(G)}$ and $\Vert \cdot \Vert_{H^k_\nabla(G)}$ (see \cite[Theorem~2.7]{hebey00}). For $s < 0$, the space $H^s(G)$ contains $L^2(G)$ and can equivalently be identified via the dual space of $H^{-s}(G)$.

Now, we assume additionally that $G$ is a compact connected Lie group of dimension $n$ and that $\mathbf{g}_G$ is bi-invariant. By \cite[Corollary~8.1.16]{ledonne25}, the normalized Riemannian volume measure $\mu_G$ is a Haar measure on $G$. For $X \in \mathfrak{g}$, the corresponding left-invariant vector field is given by $X^L(x) := (dL_x)_e X \in T_x G$, where $L_x(y) := x \cdot y$ is the left translation (see \cite[p.~119]{ledonne25}). Moreover, let $(X_j)_{j=1,\ldots,n}$ be an ONB of the Lie algebra $\mathfrak{g} := T_e G$ with respect to the inner product $\mathbf{g}_G(e)$. Since $\mathbf{g}_G$ is left-invariant, we have $\mathbf{g}_G(x)(X_i^L(x),X_j^L(x)) = \mathbf{g}_G(e)(X_i,X_j) = \delta_{i,j}$ and $(X_j^L(x))_{j=1,\ldots,n}$ is an ONB of $T_x G$, for all $x \in G$. In addition, since $\mathbf{g}_G$ is bi-invariant, the Levi--Civita connection satisfies $\nabla_{X^L} Y^L = \frac{1}{2} [X^L,Y^L]$ for all $X,Y \in \mathfrak{g}$ (see \cite[Corollary~8.1.13]{ledonne25}), where $[X^L,Y^L] := X^L Y^L - Y^L X^L$ is the Lie bracket. Hence, the first and second covariant derivatives of $f \in C^\infty(G)$ are for every $x \in G$ and $X,Y \in \mathfrak{g}$ given by
\begin{equation}
	\begin{aligned}
		(\nabla f)_x(X^L) & := X^L f(x), \\
		(\nabla^2 f)_x(X^L,Y^L) & := X^L Y^L f(x) - (\nabla_{X^L} {Y^L}) f(x) = X^L Y^L f(x) - \frac{1}{2} [X^L,Y^L] f(x) \\
		& = \frac{1}{2} \left( X^L Y^L + Y^L X^L \right) f(x).
	\end{aligned}
\end{equation}
In particular, by using $\nabla_{X_j^L} X_j^L = 0$, the Laplace--Beltrami operator of $f \in C^\infty(G)$ satisfies
\begin{equation}
	\Delta_G f = \sum_{j=1}^n \left( \left( X_j^L \right)^2 f - (\nabla_{X_j^L} X_j^L) f \right) = \sum_{j=1}^n \left( X_j^L \right)^2 f.
\end{equation}
Furthermore, for every irreducible unitary representation $[\pi] \in \widehat{G}$, let $d_\pi$ denote the dimension of $\pi$ and define the Fourier transform of $f \in L^1(G)$ by
\begin{equation}
	\label{eq:def_fourier}
	\widehat{f}(\pi) := \int_G f(x) \pi(x)^* \, \mu_G(dx) \in L(\mathcal{H}_\pi) \cong \mathbb{C}^{d_\pi \times d_\pi}.
\end{equation}
Moreover, $(\sqrt{d_\pi} \, \pi_{i,j})_{[\pi] \in \widehat{G}, \, i,j=1,\ldots,d_\pi}$ is by Theorem~\ref{thm:peter_weyl} an ONB of $L^2(G)$, whose elements are eigenfunctions of $-\Delta_G$. Indeed, since $-\Delta_G$ is the Casimir operator associated with an $\operatorname{Ad}(G)$-invariant inner product, it is central (see \cite[Section~4.1.2]{fischer20}), whence, for every $[\pi] \in \widehat{G}$ there exists by Schur's lemma some $\lambda_\pi \geq 0$ with $-\Delta_G \pi_{i,j} = \lambda_\pi \pi_{i,j}$, for all $i,j = 1,\ldots,d_\pi$. Thus, by using the ONB $(\varphi_r)_{r \in \mathbb{N}} := (\sqrt{d_\pi} \, \pi_{i,j})_{ [\pi] \in \widehat{G}, \, i,j=1,\ldots,d_\pi}$ in \eqref{eq:def_sob_norm}, it holds for every $f \in H^s(G)$ that
\begin{equation}
	\Vert f \Vert_{H^s(G)} = \left( \sum_{[\pi] \in \widehat{G}} \sum_{i,j=1}^{d_\pi} (1+\lambda_\pi)^s \big\vert \langle f, \sqrt{d_\pi} \, \overline{\pi_{i,j}} \rangle \big\vert^2 \right)^\frac{1}{2} = \left( \sum_{[\pi] \in \widehat{G}} d_\pi (1+\lambda_\pi)^s \Vert \widehat{f}(\pi) \Vert_{\HS}^2 \right)^\frac{1}{2},
\end{equation}
where we have used \eqref{eq:def_fourier}. Hence, the Sobolev space $H^s(G)$ can be equivalently characterized via Fourier transform, where $(1-\Delta_G)^{s/2}$ acts as a Fourier multiplier with symbol $\pi \mapsto (1+\lambda_\pi)^{s/2} I_{d_\pi}$, which also provides the spectral definition of Sobolev spaces for negative orders $s < 0$.

Now, for $K \geq 0$, we define the spectral projection by
\begin{equation}
	L^2(G) \ni f \quad \mapsto \quad P_K f := \sum_{[\pi] \in \widehat{G} \atop \lambda_\pi \leq K^2} d_\pi
	\trace\big( \widehat{f}(\pi) \pi(\cdot) \big) \in L^2(G).
\end{equation}

\begin{proposition}
	\label{prop:fourier_projection}
	For every $s \geq 0$, $f \in H^s(G)$, and $K \geq 0$, we have
	\begin{equation}
		\Vert f - P_K f \Vert_{L^2(G)} \lesssim_s (1+K)^{-s} \Vert f \Vert_{H^s(G)}.
	\end{equation}
\end{proposition}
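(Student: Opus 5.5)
The plan is to work entirely on the Fourier side, using the Peter--Weyl expansion of Theorem~\ref{thm:peter_weyl} together with the Fourier characterization of the Sobolev norm derived just above. Since $(\sqrt{d_\pi}\,\pi_{i,j})$ is an ONB of $L^2(G)$ consisting of eigenfunctions of $-\Delta_G$, Parseval gives
\begin{equation}
	\Vert f \Vert_{L^2(G)}^2 = \sum_{[\pi] \in \widehat{G}} d_\pi \Vert \widehat{f}(\pi) \Vert_{\HS}^2 .
\end{equation}
The projection $P_K f$ is exactly the partial sum of this expansion over the modes with $\lambda_\pi \leq K^2$. Hence the remainder is orthogonal to the retained modes, and
\begin{equation}
	\Vert f - P_K f \Vert_{L^2(G)}^2 = \sum_{[\pi] \in \widehat{G}, \, \lambda_\pi > K^2} d_\pi \Vert \widehat{f}(\pi) \Vert_{\HS}^2 .
\end{equation}
One should first note that $f \in H^s(G) \subseteq L^2(G)$ for $s \geq 0$, so $\widehat{f}(\pi)$ is well defined and these sums converge.

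Next, on the tail I will insert the weight $(1+\lambda_\pi)^{s}(1+\lambda_\pi)^{-s}$. For $\lambda_\pi > K^2$ we have $(1+\lambda_\pi)^{-s} \leq (1+K^2)^{-s}$, since $s \geq 0$. This yields
\begin{equation}
	\Vert f - P_K f \Vert_{L^2(G)}^2 \leq (1+K^2)^{-s} \sum_{[\pi] \in \widehat{G}} d_\pi (1+\lambda_\pi)^s \Vert \widehat{f}(\pi) \Vert_{\HS}^2 = (1+K^2)^{-s} \Vert f \Vert_{H^s(G)}^2 .
\end{equation}

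Finally, I compare $1+K^2$ with $(1+K)^2$. The elementary inequality $(1+K)^2 \leq 2(1+K^2)$ gives $(1+K^2)^{-s/2} \leq 2^{s/2}(1+K)^{-s}$, so the claim holds with implicit constant $2^{s/2}$, which depends only on $s$.

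There is no real obstacle. The only point needing care is that the identification of $\Vert \cdot \Vert_{H^s(G)}$ with the weighted Fourier sum uses the eigenfunction ONB from the Peter--Weyl basis. One must also check that $P_K$ coincides with the $L^2$-orthogonal projection onto the span of eigenfunctions with eigenvalue at most $K^2$. This follows because $d_\pi \trace(\widehat{f}(\pi)\pi(\cdot))$ is precisely the orthogonal projection of $f$ onto $\mathcal{E}_\pi$.
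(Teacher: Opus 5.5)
Your proposal is correct and follows essentially the same route as the paper: Plancherel turns the error into the tail sum over $\lambda_\pi > K^2$, where $(1+\lambda_\pi)^{-s} \leq (1+K^2)^{-s}$, and then $1+K^2$ is compared with $(1+K)^2$. Your explicit constant $2^{s/2}$, from $(1+K)^2 \leq 2(1+K^2)$, just makes the paper's final $\lesssim_s$ concrete.
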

\begin{proof}
	By Theorem~\ref{thm:peter_weyl} and Plancherel's identity (see, e.g., \cite[Theorem~7.36]{folland16}), it holds that
	\begin{equation}
		\label{eq:fourier_projection:proof1}
		\begin{aligned}
			\Vert f - P_K f \Vert_{L^2(G)}^2 & = \sum_{[\pi] \in \widehat{G} \atop \lambda_\pi > K^2} d_\pi \Vert \widehat{f}(\pi) \Vert_{\HS}^2 \\
			& \lesssim \left( 1+K^2 \right)^{-s} \sum_{[\pi] \in \widehat{G} \atop \lambda_\pi > K^2} d_\pi (1+\lambda_\pi)^s \Vert \widehat{f}(\pi) \Vert_{\HS}^2 \\
			& \lesssim_s (1+K)^{-2s} \Vert f \Vert_{H^s(G)}^2.
		\end{aligned}
		\vspace{-0.5cm}
	\end{equation}
	\qedhere
\end{proof}

\begin{example}[Sobolev spaces on $\mathcal{U}(d)$]
	\label{ex:sob_U}
	For $d \in \mathbb{N}$, we follow \cite[Chapter~12]{faraut08} and equip $\mathcal{U}(d)$ with the bi-invariant Riemannian metric $\mathbf{g}_{\mathcal{U}(d)}(U)(UX,UY) = -\trace(XY)$ induced by the inner product $\langle X,Y \rangle_{\mathfrak{u}(d)} := -\trace(XY)$ on its Lie algebra $\mathfrak{u}(d) := \lbrace A \in \mathbb{C}^{d \times d}: A^* = -A \rbrace$. For $X \in \mathfrak{u}(d)$, the corresponding left-invariant vector field equals $X^L(U) := (dL_U)_I X = UX \in T_U \mathcal{U}(d)$, for $U \in \mathcal{U}(d)$, whose action on $f \in C^\infty(\mathcal{U}(d))$ is given by $X^L f(U) := \frac{d}{dt}\big\vert_{t=0} f(Ue^{tX})$. Hence, by using that the metric is bi-invariant, we can follow the arguments above to conclude that the first and second covariant derivatives of any $f \in C^\infty(\mathcal{U}(d))$ are given by
	\begin{equation}
		\begin{aligned}
			(\nabla f)_U(X^L(U)) & := X^L f(U), \\
			(\nabla^2 f)_U(X^L(U),Y^L(U)) & := X^L Y^L f(U) - \frac{1}{2} [X,Y]^L f(U) = \frac{1}{2} \left( X^L Y^L + Y^L X^L \right) f(U),
		\end{aligned}
	\end{equation}
	for $X,Y \in \mathfrak{u}(d)$. Hence, by taking the trace of $\nabla^2 f$ with respect to an ONB $(X_j)_{j=1,\ldots,d^2}$ of $\mathfrak{u}(d)$ (see, e.g., \cite[p.~275]{faraut08}), the Laplace--Beltrami operator of any $f \in C^\infty(\mathcal{U}(d))$ is given by
	\begin{equation}
		\Delta_{\mathcal{U}(d)} f = \sum_{j=1}^{d^2} \left( X^L_j \right)^2 f.
	\end{equation}
	By Example~\ref{ex:pw_U}, the Peter--Weyl decomposition of $L^2(\mathcal{U}(d))$ is indexed by $\omega \in \Omega_{\mathcal{U}(d)}$. Choosing an ONB $(e^\omega_i)_{i=1,\ldots,d_\omega}$ of $\mathcal{H}_\omega$, the matrix elements $\pi_{\omega,i,j} := \langle \pi_\omega(\cdot) e^\omega_j, e^\omega_i \rangle_{\mathcal{H}_\omega}$ are eigenfunctions of $-\Delta_{\mathcal{U}(d)}$ with eigenvalue $\lambda_\omega := \sum_{i=1}^d \omega_i(\omega_i+d+1-2i)$ (see \cite[Proposition~12.1.2]{faraut08}), where $d_\omega := \dim(\mathcal{H}_\omega) = \prod_{1 \leq i < j \leq d} \frac{\omega_i-\omega_j+j-i}{j-i}$. Thus, for every $s \geq 0$, the Sobolev space is given by 
	\begin{equation}
		H^s(\mathcal{U}(d)) := \dom\left( (1-\Delta_{\mathcal{U}(d)})^\frac{s}{2} \right) = \left\lbrace f \in \mathcal{D}'(\mathcal{U}(d)): \Vert f \Vert_{H^s(\mathcal{U}(d))} < \infty \right\rbrace,
	\end{equation}
	where the norm $\Vert \cdot \Vert_{H^s(\mathcal{U}(d))}$ is given by
	\begin{equation}
		\Vert f \Vert_{H^s(\mathcal{U}(d))} := \Vert (1-\Delta_{\mathcal{U}(d)})^\frac{s}{2} f \Vert_{L^2(\mathcal{U}(d))} = \left( \sum_{\omega \in \Omega_{\mathcal{U}(d)}} d_\omega (1+\lambda_\omega)^s \Vert \widehat{f}_\omega \Vert_{\HS}^2 \right)^\frac{1}{2},
	\end{equation}
	with $\widehat{f}_\omega := \widehat{f}(\pi_\omega)$. Thus, Proposition~\ref{prop:fourier_projection} implies for every $s \geq 0$, $f \in H^s(\mathcal{U}(d))$, and $K \geq 0$ that
	\begin{equation}
		\Vert f - P_K f \Vert_{L^2(\mathcal{U}(d))} \lesssim_s (1+K)^{-s} \Vert f \Vert_{H^s(\mathcal{U}(d))} \asymp_{d,s} N_K^{-\frac{s}{d^2}} \Vert f \Vert_{H^s(\mathcal{U}(d))},
	\end{equation}
	with $N_K := \dim(\range(P_K)) = \sum_{\omega \in \Omega_{\mathcal{U}(d)}, \, \lambda_\omega \leq K^2} d_\omega^2 \asymp_d (1+K)^{d^2}$ by Weyl's law (see, e.g., \cite[Chapter~VI]{chavel84}) and $\dim(\mathcal{U}(d))=d^2$.
\end{example}

We now extend Proposition~\ref{prop:fourier_projection} to compact homogeneous spaces. For a compact connected Lie group $G$ with bi-invariant Riemannian metric and a closed subgroup $H \leqslant G$, we consider the compact homogeneous space $G/H$ equipped with the induced normal homogeneous Riemannian metric. Since the quotient map $\varpi: G \rightarrow G/H$ is a Riemannian submersion whose fibers are the left cosets $xH$, which are compact and totally geodesic submanifolds of $G$, and thus minimal, the Laplace--Beltrami operator $\Delta_{G/H}$ satisfies for every $f\in C^\infty(G/H)$ that
\begin{equation}
	\label{eq:homo_lapl}
	(\Delta_{G/H} f) \circ \varpi = \Delta_G (f \circ \varpi)
\end{equation}
(see, e.g., \cite[Lemma~3.11]{bessa12}). By identifying functions on $G/H$ with their right-$H$-invariant lifts to $G$, the operator $-\Delta_{G/H}$ agrees with the restriction of $-\Delta_G$ to the right-$H$-invariant subspace. For every $s \geq 0$, we define
\begin{equation}
	H^s(G/H) := \dom\left( (1-\Delta_{G/H})^\frac{s}{2} \right) \quad \text{with} \quad \Vert f \Vert_{H^s(G/H)} := \Vert (1-\Delta_{G/H})^\frac{s}{2} f \Vert_{L^2(G/H)}.
\end{equation}
Moreover, for every $[\pi] \in \widehat{G}_H$, let $\lambda_\pi \geq 0$ denote the eigenvalue of $-\Delta_{G/H}$ on the corresponding $\pi$-isotypic component. Then, for every $K \geq 0$, we define
\begin{equation}
	P_K f := \sum_{[\pi] \in \widehat{G}_H \atop \lambda_\pi \leq K^2} d_\pi
	\trace\big( \widehat{f}(\pi) \pi(\cdot) \big).
\end{equation}

\begin{corollary}
	\label{cor:fourier_projection}
	For a compact connected Lie group $G$ with bi-invariant Riemannian metric and a closed subgroup $H \leqslant G$, let the compact homogeneous space $G/H$ be equipped with the induced normal homogeneous Riemannian metric. Then, for every $s \geq 0$, $f \in H^s(G/H)$, and $K \geq 0$,
	\begin{equation}
		\label{eq:cor:fourier_projection}
		\Vert f - P_K f \Vert_{L^2(G/H)} \lesssim_s (1+K)^{-s} \Vert f \Vert_{H^s(G/H)}.
	\end{equation}
\end{corollary}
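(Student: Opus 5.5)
The plan is to reduce Corollary~\ref{cor:fourier_projection} to Proposition~\ref{prop:fourier_projection} by lifting functions from $G/H$ to right-$H$-invariant functions on $G$. Let $J: L^2(G/H) \rightarrow L^2(G)^H$, $Jf := f \circ \varpi$, be the isometric isomorphism from the proof of Corollary~\ref{cor:peter_weyl}.

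\textbf{Step 1: the lift commutes with the Laplacians.} By \eqref{eq:homo_lapl}, $J \Delta_{G/H} f = \Delta_G J f$ for all $f \in C^\infty(G/H)$. Passing to closures and using that $J$ is unitary onto $L^2(G)^H$, the closure of $-\Delta_{G/H}$ is unitarily equivalent, via $J$, to the restriction of $-\Delta_G$ to its invariant subspace $L^2(G)^H$. This subspace is invariant because $\Delta_G$ commutes with right translations, being a Casimir built from an $\operatorname{Ad}(G)$-invariant inner product.

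\textbf{Step 2: transfer of the spectral data.} By the functional calculus, $J(1-\Delta_{G/H})^{s/2} = (1-\Delta_G)^{s/2} J$ on $H^s(G/H)$. Hence $J H^s(G/H) = H^s(G) \cap L^2(G)^H$ and
\[
\Vert f \Vert_{H^s(G/H)} = \Vert Jf \Vert_{H^s(G)} .
\]
Moreover, on $\mathcal{E}_\pi^H \cong \mathcal{E}_\pi \cap L^2(G)^H$ the eigenvalue of $-\Delta_{G/H}$ coincides with the eigenvalue $\lambda_\pi$ of $-\Delta_G$ on $\mathcal{E}_\pi$. The Fourier coefficients also agree, $\widehat{Jf}(\pi) = \widehat{f}(\pi)$, and they vanish for $[\pi] \notin \widehat{G}_H$, because $\range(\widehat{f}(\pi)) \subseteq \mathcal{H}_\pi^H$ as shown in the proof of Corollary~\ref{cor:peter_weyl}. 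Consequently $J P_K f = P_K J f$, where the right-hand $P_K$ is the projection on $G$.

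\textbf{Step 3: conclusion.} Combining the two steps with Proposition~\ref{prop:fourier_projection},
\[
\Vert f - P_K f \Vert_{L^2(G/H)} = \Vert Jf - P_K Jf \Vert_{L^2(G)} \lesssim_s (1+K)^{-s} \Vert Jf \Vert_{H^s(G)} = (1+K)^{-s} \Vert f \Vert_{H^s(G/H)} .
\]
Alternatively, one can avoid $J$ at this point and repeat the Plancherel computation of Proposition~\ref{prop:fourier_projection} directly with the ONB of Corollary~\ref{cor:peter_weyl}, using $(1+\lambda_\pi)^{-s} \le (1+K^2)^{-s} \lesssim_s (1+K)^{-2s}$ whenever $\lambda_\pi > K^2$.

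\textbf{Main obstacle.} The delicate point is Step 1: upgrading the identity \eqref{eq:homo_lapl}, which holds on smooth functions, to an identity of self-adjoint closures and their fractional powers. This matters because $H^s(G/H)$ is defined as the domain of $(1-\Delta_{G/H})^{s/2}$. I would handle it through essential self-adjointness. $C^\infty(G/H)$ is a core for $-\Delta_{G/H}$, and $J C^\infty(G/H) = C^\infty(G)^H$. The latter is a core for the restricted operator, since the averaging projection $P^H$ maps $C^\infty(G)$ onto $C^\infty(G)^H$ and commutes with $\Delta_G$. Two self-adjoint operators that agree on a common core coincide, and the functional calculus then yields the equality of the fractional powers.
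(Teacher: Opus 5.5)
Your proposal is correct and follows essentially the paper's route: lift to right-$H$-invariant functions on $G$, use \eqref{eq:homo_lapl} to see that $\mathcal{E}_\pi^H$ carries the eigenvalue $\lambda_\pi$, deduce $\Vert f\Vert_{H^s(G/H)}^2=\sum_{[\pi]\in\widehat{G}_H} d_\pi(1+\lambda_\pi)^s\Vert\widehat{f}(\pi)\Vert_{\HS}^2$, and run the Plancherel tail estimate (your ``alternative'' is exactly the paper's final step). The paper never needs your core argument, because the descended matrix coefficients already form an ONB of $L^2(G/H)$ consisting of smooth eigenfunctions of $-\Delta_{G/H}$, and this alone forces the self-adjoint closure to be diagonal in that basis; your closure-and-core argument is correct but is a heavier way to reach the same norm identity.
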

\begin{proof}
	By Corollary~\ref{cor:peter_weyl}, the Peter--Weyl decomposition of $L^2(G/H)$ is obtained by restricting the Peter--Weyl decomposition of $L^2(G)$ to the right-$H$-invariant subspace $L^2(G)^H \subseteq L^2(G)$ introduced in the proof of Corollary~\ref{cor:peter_weyl}. Moreover, for every $[\pi] \in \widehat{G}_H$ and $f \in C^\infty(G/H)$ with $f \circ \varpi \in \linspan\lbrace \pi_{i,j}: i,j=1,\ldots,d_\pi \rbrace$, it holds that
	\begin{equation}
		(-\Delta_{G/H} f) \circ \varpi = -\Delta_G (f \circ \varpi) = \lambda_\pi (f \circ \varpi),
	\end{equation}
	which implies $-\Delta_{G/H} f = \lambda_\pi f$ by surjectivity of $\varpi: G \rightarrow G/H$. Hence, for every $s \geq 0$ and $f \in H^s(G/H)$, it follows that
	\begin{equation}
		\Vert f \Vert_{H^s(G/H)} = \Vert (1-\Delta_{G/H})^\frac{s}{2} f \Vert_{L^2(G/H)} = \left( \sum_{[\pi] \in \widehat{G}_H} d_\pi (1+\lambda_\pi)^s \Vert\widehat f(\pi)\Vert_{\HS}^2 \right)^\frac{1}{2}.
	\end{equation}
	Finally, the approximation rate~\eqref{eq:cor:fourier_projection} follows from the same arguments as in \eqref{eq:fourier_projection:proof1}.
\end{proof}

\section{Quantum gates and circuits}
\label{sec:quantum}

In this section, we introduce the basic notions of quantum computing. To this end, we follow the textbooks \cite{jacquier24,nielsen10} and the article \cite{chen26}, which contain more background on quantum computing.

\subsection{\texorpdfstring{$1$}{1}-qubit states}

For the Hilbert space $\mathcal{H} := \mathbb{C}^2$, we refer to $v \in \mathbb{C}^2$ as a \emph{state}, denoted by the \emph{ket}-notation $\ket{v}$. The inner product of two states $v,w \in \mathbb{C}^2$ is represented by the \emph{bra-ket}-notation $\braket{v \vert w} := v^* w = \langle w,v \rangle_{\mathbb{C}^2} \in \mathbb{C}$. Elements $u \in \mathcal{H}^*$ of the dual space $\mathcal{H}^*$ are denoted by the \emph{bra}-notation $\bra{u}$. The action of the dual vector $u \in \mathcal{H}^*$ on a vector $v \in \mathcal{H}$ is represented by $\braket{u \vert v}$. The action of a linear operator $A: \mathcal{H} \rightarrow \mathcal{H}$ on $v \in \mathbb{C}^2$ is denoted by $A \ket{v}$. The operator $A: \mathcal{H} \rightarrow \mathcal{H}$ acts on dual vectors $u \in \mathcal{H}^*$ via $(\bra{u} A) \ket{v} := \bra{u} (A \ket{v}) = \braket{u \vert Av} =: \braket{u \vert A \vert v}$, for all $v \in \mathcal{H}$.

A \emph{$1$-qubit state} is a unit vector $\ket{\psi} \in \mathbb{C}^2$ and therefore an element of the complex unit sphere $\mathbb{S}^3$ introduced in \eqref{eq:def_S}. It can be written as
\begin{equation}
	\begin{aligned}
		\ket{\psi} & = \alpha \ket{0} + \beta \ket{1} \\
		& = e^{\mathbf{i}\gamma} \left( \cos(\theta/2) \ket{0} + e^{\mathbf{i}\varphi} \sin(\theta/2) \ket{1} \right),
	\end{aligned}
\end{equation}
for some $\alpha,\beta \in \mathbb{C}$ with $\vert \alpha \vert^2 + \vert \beta \vert^2 = 1$, or equivalently, for suitable $\gamma,\varphi \in [0,2\pi)$ and $\theta \in [0,\pi]$. If a projective measurement contains the rank-one projector $P_\phi := \ket{\phi}\bra{\phi}$ for some $\ket{\phi} \in \mathbb{S}^3$, then $\left\vert \braket{\phi \vert \psi} \right\vert^2$ is by Born's rule (see, e.g., \cite[Chapter~2.5]{marinescu12}) the probability of obtaining the measurement outcome associated with $P_\phi$ when the system is in the state $\ket{\psi}$. Since $\vert \!\braket{\phi \vert e^{\mathbf{i}\gamma} \psi}\! \vert^2 = \vert e^{\mathbf{i}\gamma} \!\braket{\phi \vert \psi}\! \vert^2 = \vert \!\braket{\phi \vert \psi}\! \vert^2$, the factor $e^{\mathbf{i} \gamma}$ of $\ket{\psi} \in \mathbb{S}^3$ has no observable effect. 

Hence, a \emph{physical $1$-qubit state} is an equivalence class
\begin{equation}
	[\psi] := \left\lbrace e^{\mathbf{i} \gamma} \ket{\psi}: \gamma \in \mathbb{R} \right\rbrace \in \mathbb{CP}^1 \cong \quotientone{\mathbb{S}^3}{\mathcal{U}(1)},
\end{equation}
where the complex projective line $\mathbb{CP}^1$ defined in \eqref{eq:def_CP} can be identified with the Bloch sphere $\mathbb{S}^2$ (see \cite[Section~5.2]{bengtsson17}). For every $[\psi] \in \mathbb{CP}^1$, we choose a representative $\ket{\psi} \in \mathbb{S}^3$ of the form
\begin{equation}
	\ket{\psi} = \cos(\theta/2) \ket{0} + e^{\mathbf{i}\varphi} \sin(\theta/2) \ket{1},
\end{equation}
for some $\theta \in [0,\pi]$ and $\varphi \in [0,2\pi)$.

\subsection{\texorpdfstring{$n$}{n}-qubit states}

On the $n$-fold tensor product $\mathcal{H}_n := \mathcal{H}^{\otimes n} \cong \mathbb{C}^{2^n}$ of $\mathcal{H} := \mathbb{C}^2$, a state $\psi \in \mathcal{H}_n$ is denoted by $\ket{\psi}_n$, where the subscript $n$ emphasizes the ($\log_2$)-dimension of $\mathcal{H}_n \cong \mathbb{C}^{2^n}$. Note that $\mathbb{B}_n := \left\lbrace \ket{i}_n: i = (i_1,\ldots,i_n) \in \lbrace 0,1 \rbrace^n \right\rbrace$ is an ONB of $\mathcal{H}_n$, which is called the computational basis in the literature, where $\ket{i}_n := \ket{i_1} \otimes \cdots \otimes \ket{i_n} := \ket{i_1} \cdots \ket{i_n}$ for $i = (i_1,\ldots,i_n) \in \lbrace 0,1 \rbrace^n$.

For $n \in \mathbb{N}$, a \emph{$n$-qubit state} is a unit vector $\ket{\psi}_n \in \mathcal{H}_n$ and therefore an element of the complex unit sphere $\mathbb{S}^{2 \cdot 2^n-1}$ introduced in \eqref{eq:def_S}. It can be uniquely written as
\begin{equation}
	\ket{\psi}_n = \sum_{i \in \lbrace 0,1 \rbrace^n} \alpha_i \ket{i}_n,
\end{equation}
for some $\alpha_i \in \mathbb{C}$ with $\sum_{i \in \lbrace 0,1 \rbrace^n} \vert\alpha_i\vert^2 = 1$. Analogously to the $1$-qubit case, multiplication by a global phase does not affect physical measurement probabilities. Hence, a \emph{physical $n$-qubit state} is an equivalence class
\begin{equation}
	[\psi] := \left\lbrace e^{\mathbf{i} \gamma} \ket{\psi}_n: \gamma \in \mathbb{R} \right\rbrace \in \mathbb{CP}^{2^n-1} \cong \quotientone{\mathbb{S}^{2 \cdot 2^n-1}}{\mathcal{U}(1)}.
\end{equation}
Moreover, an $n$-qubit state $\ket{\psi}_n \in \mathcal{H}_n$ is called \emph{separable} if there exist $1$-qubit states $\ket{\psi_1},\ldots,\ket{\psi_n} \in \mathbb{S}^3$ such that $\ket{\psi}_n = \ket{\psi_1}\otimes\cdots\otimes\ket{\psi_n}$. Otherwise, $\ket{\psi}_n$ is called \emph{entangled}. At the level of physical $n$-qubit states, the separable states are parametrized by $(\mathbb{CP}^1)^n$, which is for $n \geq 2$ a proper subset of $\mathbb{CP}^{2^n-1}$, meaning that there are many more entangled states.

\subsection{Quantum gates and circuits}

In the standard circuit model of quantum computation, quantum states are manipulated by applying a finite sequence of quantum gates. According to the postulates of quantum mechanics, the evolution of a closed quantum system is unitary. Elementary gates typically act on one or two qubits at a time.

\begin{definition}
	Any element of the set of $1$-qubit and $2$-qubit quantum gates
	\begin{equation}
		\mathbb{A} := \left\lbrace X,Y,Z,H,P(\phi),R_x(\theta),R_y(\theta),R_z(\theta): \,
		\begin{matrix}
			\phi \in (0,2\pi) \\
			\theta \in (0,4\pi)
		\end{matrix}
		\right\rbrace \cup \big\lbrace\! \CNOT, \SWAP \!\big\rbrace \subseteq \mathbb{C}^{2 \times 2} \cup \mathbb{C}^{2^2 \times 2^2}
	\end{equation}
	is called an \emph{elementary gate}, where
	\begin{equation}
		X := 
		\begin{bmatrix}
			0 & 1 \\
			1 & 0
		\end{bmatrix}, \qquad
		Y := 
		\begin{bmatrix}
			0 & -\mathbf{i} \\
			\mathbf{i} & 0
		\end{bmatrix}, \qquad
		Z := 
		\begin{bmatrix}
			1 & 0 \\
			0 & -1
		\end{bmatrix}
	\end{equation}
	denote the Pauli gates, where
	\begin{equation}
		H := \frac{1}{\sqrt{2}} 
		\begin{bmatrix}
			1 & 1 \\
			1 & -1
		\end{bmatrix}, \qquad
		P(\phi) := 
		\begin{bmatrix}
			1 & 0 \\
			0 & e^{\mathbf{i}\phi}
		\end{bmatrix}
	\end{equation}
	denote the Hadamard gate and the phase shift gates with $\phi \in (0,2\pi)$, respectively, where
	\begin{equation}
		\begin{aligned}
			R_x(\theta) & := e^{-\mathbf{i} \frac{\theta}{2} X} = 
			\begin{bmatrix}
				\cos(\theta/2) & -\mathbf{i} \sin(\theta/2) \\
				-\mathbf{i} \sin(\theta/2) & \cos(\theta/2)
			\end{bmatrix}, \qquad
			R_y(\theta) & := e^{-\mathbf{i} \frac{\theta}{2} Y} = 
			\begin{bmatrix}
				\cos(\theta/2) & -\sin(\theta/2) \\
				\sin(\theta/2) & \cos(\theta/2)
			\end{bmatrix}
		\end{aligned}
	\end{equation}
	\begin{equation}
		\hspace{-0.5cm}
		R_z(\theta) := e^{-\mathbf{i} \frac{\theta}{2} Z} = 
		\begin{bmatrix}
			e^{-\mathbf{i} \frac{\theta}{2}} & 0 \\
			0 & e^{\mathbf{i} \frac{\theta}{2}}
		\end{bmatrix},
	\end{equation}
	denote the three families of rotation gates, and where
	\begin{equation}
		\CNOT := 
		\begin{bmatrix}
			1 & 0 & 0 & 0 \\
			0 & 1 & 0 & 0 \\
			0 & 0 & 0 & 1 \\
			0 & 0 & 1 & 0
		\end{bmatrix}, \qquad
		\SWAP :=
		\begin{bmatrix}
			1 & 0 & 0 & 0 \\
			0 & 0 & 1 & 0 \\
			0 & 1 & 0 & 0 \\
			0 & 0 & 0 & 1
		\end{bmatrix}
	\end{equation}
	denote the Controlled-NOT gate and the swap gate, respectively.
\end{definition}

\begin{remark}
	Since every $A \in \mathbb{A}$ is unitary, we define $[\mathbb{A}] = \lbrace [A]: A \in \mathbb{A} \rbrace \subseteq \mathcal{PU}(2) \cup \mathcal{PU}(4)$, where $[A] := A \, \mathcal{U}(1) = \lbrace e^{\mathbf{i} \theta} A: \theta \in \mathbb{R} \rbrace$ (see also Section~\ref{sec:notation}). Moreover, every $A \in \mathbb{A}$ is implemented in practical quantum computing software packages, e.g., IBM's Qiskit \cite{treinish23} and Google's Cirq \cite{circ22}.
\end{remark}

A quantum circuit is obtained by composing and tensoring such gates and identity operators, where the identity operators $I_2 \in \mathbb{C}^{2 \times 2}$ represent the wires connecting the gates.

\begin{definition}
	\label{def:quantum_circuit}
	For $n \in \mathbb{N}$ and $L \in \mathbb{N}_0$, a \emph{quantum circuit} $Q$ acting on $n$ qubits is a matrix $Q \in \mathbb{C}^{2^n \times 2^n}$ of the form
	\begin{equation}
		\label{eq:def:quantum_circuit}
		Q = \prod_{\ell=1}^L (A_{\ell,1} \otimes A_{\ell,2} \otimes \cdots \otimes A_{\ell,n_\ell}) \in \mathbb{C}^{2^n \times 2^n},
	\end{equation}
	with the empty product ($L = 0$) understood as $I_{2^n}$, where $(A_{\ell,m})_{\ell=1,\ldots,L, \, m=1,\ldots,n_\ell} \subseteq \mathbb{A} \cup \lbrace I_2 \rbrace$ satisfy $\prod_{m=1}^{n_\ell} \dim(A_{\ell,m}) = 2^n$ for all $\ell = 1,\ldots,L$. Moreover, $M := \sum_{\ell=1}^L \sum_{m=1}^{n_\ell} \mathds{1}_\mathbb{A}(A_{\ell,m})$ denotes the number of elementary quantum gates used in $Q$, $n$ is the number of qubits used in $Q$, and $L$ represents the depth of the quantum circuit.
\end{definition}

\begin{lemma}
	For $n \in \mathbb{N}$ and $L \in \mathbb{N}_0$, let $Q$ be a quantum circuit of the form \eqref{eq:def:quantum_circuit}. Then, we have $[Q] \in \mathcal{PU}(2^n)$.
\end{lemma}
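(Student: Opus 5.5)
The plan is to show that $Q$ is a unitary $2^n\times 2^n$ matrix, i.e. $Q\in\mathcal{U}(2^n)$. Its equivalence class $[Q] = Q\,\mathcal{U}(1)$ is then by definition \eqref{eq:def_PU} an element of $\mathcal{PU}(2^n) = \mathcal{U}(2^n)/\mathcal{U}(1)$. The argument reduces to three elementary closure facts about unitary matrices, which I will verify in turn.

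\textbf{Step 1.} Every factor $A_{\ell,m}\in\mathbb{A}\cup\{I_2\}$ is unitary. For $I_2$, the Pauli gates, $H$, $P(\phi)$, $\CNOT$ and $\SWAP$, this follows from a direct check of $A^*A=I$. Indeed, the Pauli gates and $H$ are Hermitian involutions, $P(\phi)$ is diagonal with unimodular entries, and $\CNOT$ and $\SWAP$ are permutation matrices. For the rotations, $R_\sigma(\theta)=e^{-\mathbf{i}\frac{\theta}{2}\sigma}$ with $\sigma\in\{X,Y,Z\}$ Hermitian, so
\begin{equation}
	R_\sigma(\theta)^* R_\sigma(\theta) = e^{\mathbf{i}\frac{\theta}{2}\sigma}e^{-\mathbf{i}\frac{\theta}{2}\sigma}=I_2,
\end{equation}
which one can also confirm from the explicit matrices via $\cos^2+\sin^2=1$.

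\textbf{Step 2.} Tensor products of unitaries are unitary. This follows from the mixed-product rule and $(A\otimes B)^*=A^*\otimes B^*$:
\begin{equation}
	(A\otimes B)^*(A\otimes B)=A^*A\otimes B^*B = I\otimes I = I.
\end{equation}
By induction on $n_\ell$, each layer $A_{\ell,1}\otimes\cdots\otimes A_{\ell,n_\ell}$ is therefore a unitary matrix. The dimension constraint $\prod_m\dim(A_{\ell,m})=2^n$ ensures that this matrix has size $2^n\times 2^n$, so each layer lies in $\mathcal{U}(2^n)$.

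\textbf{Step 3.} $\mathcal{U}(2^n)$ is a group, since $(UV)^*(UV)=V^*U^*UV=I$. By induction on $L$, the product \eqref{eq:def:quantum_circuit} of $L$ layers is therefore unitary. For $L=0$ the empty product is $I_{2^n}\in\mathcal{U}(2^n)$.

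Hence $Q\in\mathcal{U}(2^n)$, and so $[Q]=\{e^{\mathbf{i}\theta}Q:\theta\in\mathbb{R}\}\in\mathcal{PU}(2^n)$. None of the steps presents a real obstacle. The only point needing care is bookkeeping: the tensor product in each layer must produce a square matrix of the correct size $2^n$, and this is exactly what the constraint in Definition~\ref{def:quantum_circuit} guarantees.
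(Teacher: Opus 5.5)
Your proposal is correct and follows essentially the same route as the paper: unitarity of each layer via $(A_{\ell,1}\otimes\cdots\otimes A_{\ell,n_\ell})^*(A_{\ell,1}\otimes\cdots\otimes A_{\ell,n_\ell})=\bigotimes_m A_{\ell,m}^*A_{\ell,m}=I_{2^n}$, then closure of $\mathcal{U}(2^n)$ under multiplication. The only difference is that you also verify that each elementary gate is unitary, a fact the paper takes for granted.
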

\begin{proof}
	For every fixed $\ell = 1,\ldots,L$, we define the matrix $A_\ell := A_{\ell,1} \otimes A_{\ell,2} \otimes \cdots \otimes A_{\ell,n_\ell} \in \mathbb{C}^{2^n \times 2^n}$. Then, by using that $A_{\ell,m}$ is unitary, for all $m = 1,\ldots,n_\ell$, we observe that
	\begin{equation}
		A_\ell^* A_\ell = \bigotimes_{m=1}^{n_\ell} \left( A_{\ell,m}^* A_{\ell,m} \right) = \bigotimes_{m=1}^{n_\ell} I_{\dim(A_{\ell,m})} = I_{2^n},
	\end{equation}
	which shows that $A_\ell \in\mathcal{U}(2^n)$. Since $\mathcal{U}(2^n)$ is closed under matrix multiplication, we have $Q \in \mathcal{U}(2^n)$ and therefore $[Q] \in \mathcal{PU}(2^n)$.
\end{proof}

\subsection{Solovay--Kitaev theorem}

A universal gate set allows every element of the underlying unitary group to be approximated by finite products of gates. The Solovay--Kitaev theorem provides a quantitative refinement, showing that the required word length grows only polylogarithmically in the inverse approximation accuracy.

On $\mathcal{SU}(d)$, we consider the geodesic distance given for every $U,V \in \mathcal{SU}(d)$ by
\begin{equation}
	d_{\mathcal{SU}(d)}(U,V) := \min_{X \in \mathfrak{su}(d) \atop e^X = U^*V} \Vert X \Vert_{\HS}.
\end{equation}
Moreover, by identifying $\mathcal{PU}(d) \cong \mathcal{SU}(d)/Z(\mathcal{SU}(d))$, we equip $\mathcal{PU}(d)$ with the quotient metric
\begin{equation}
	d_{\mathcal{PU}(d)}([U],[V]) := \min_{k=0,\ldots,d-1} d_{\mathcal{SU}(d)}\left( U, e^{2\pi\mathbf{i}k/d} V \right),
\end{equation}
for $U,V \in \mathcal{SU}(d)$. In order to treat both cases simultaneously, let $G \in \lbrace \mathcal{SU}(d), \mathcal{PU}(d) \rbrace$. Moreover, for a finite subset $\mathcal{A} \subseteq G$ that is closed under inverses and $L \in \mathbb{N}_0$, we define
\begin{equation}
	\mathcal{A}_L := \left\lbrace A_1 \cdots A_\ell: 0 \leq \ell \leq L, \, A_1,\ldots,A_\ell \in \mathcal{A} \right\rbrace,
\end{equation}
where the empty product ($L = 0$) is understood as the identity element of $G$. In addition, we define $\langle \mathcal{A} \rangle := \bigcup_{L \in \mathbb{N}_0} \mathcal{A}_L$ as the subgroup generated by $\mathcal{A}$.

\begin{theorem}[Solovay--Kitaev]
	\label{thm:sk}
	\hspace{-0.1cm}\cite[Theorem~1]{dawson06} \& \cite[Theorem~1.2]{kuperberg23}.
	Let $d \geq 2$, let $G \in \lbrace \mathcal{SU}(d),\mathcal{PU}(d) \rbrace$, and let $\mathcal{A} \subseteq G$ be a finite subset such that $A^{-1} \in \mathcal{A}$, for all $A \in \mathcal{A}$, and $\langle \mathcal{A} \rangle$ is dense in $G$. Then, there exist constants $C > 0$, $c > 1$, and $\varepsilon_0 > 0$ such that for every $U \in G$ and $\varepsilon \in (0,\varepsilon_0]$ there exist $L \in \mathbb{N}_0$ and $V \in \mathcal{A}_L$ satisfying
	\begin{equation}
		d_G(U,V) \leq \varepsilon \qquad \text{and} \qquad L \leq C \log\left( \varepsilon^{-1} \right)^c.
	\end{equation}
\end{theorem}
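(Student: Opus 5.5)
The plan follows the standard recursive construction of Dawson and Nielsen (\cite{dawson06}), carried out first on $\mathcal{SU}(d)$ and then transferred to $\mathcal{PU}(d)$. The transfer is a local argument: the quotient map $\mathcal{SU}(d) \to \mathcal{PU}(d)$ is a local isometry for $d_{\mathcal{PU}(d)}$ near the identity, since $Z(\mathcal{SU}(d))$ is finite and discrete. Every ingredient below except the initial net is local around the identity, so the argument on $\mathcal{PU}(d)$ is the same. The initial net only uses compactness and density of $\langle \mathcal{A} \rangle$, and both hold in either group. Throughout, $\mathcal{A}^{-1}=\mathcal{A}$ is used so that inverses of words of length $L$ are again words of length $L$.

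\emph{Step 1 (base net).} I fix a small $\varepsilon_0>0$, to be chosen in Step 2. Since $\langle\mathcal{A}\rangle = \bigcup_L \mathcal{A}_L$ is dense and $G$ is compact, finitely many balls of radius $\varepsilon_0$ around points of $\langle\mathcal{A}\rangle$ cover $G$. Hence there is $L_0$ such that $\mathcal{A}_{L_0}$ is an $\varepsilon_0$-net of $G$. This defines the level-$0$ approximation $U_0 \in \mathcal{A}_{L_0}$ of any $U \in G$.

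\emph{Step 2 (balanced group commutators; main obstacle).} This step carries the real analytic content. I need two local estimates, with constants depending only on $d$.
\begin{enumerate}
\item If $d_G(\Delta,I) \le \varepsilon$ is small, then $\Delta = VWV^{-1}W^{-1}$ for some $V,W$ with $d_G(V,I),\,d_G(W,I) \le c_1\sqrt{\varepsilon}$.
\item If $\widetilde V,\widetilde W$ lie within $\delta$ of $V,W$, then $d_G\bigl(VWV^{-1}W^{-1},\,\widetilde V\widetilde W\widetilde V^{-1}\widetilde W^{-1}\bigr) \le c_2\bigl(\delta\sqrt{\varepsilon} + \delta^2\bigr)$.
\end{enumerate}
For (i), I would write $\Delta = e^{X}$ with $X \in \mathfrak{su}(d)$ and $\Vert X\Vert_{\HS} \le \varepsilon$. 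Diagonalizing, $X$ can be written as a commutator $[A,B]$ with $\Vert A\Vert,\Vert B\Vert \lesssim \sqrt{\Vert X\Vert}$. Then $e^{A}e^{B}e^{-A}e^{-B} = e^{[A,B] + O(\varepsilon^{3/2})}$ by Baker--Campbell--Hausdorff. The $O(\varepsilon^{3/2})$ defect is removed by a fixed-point or inverse-function argument, and this correction is the delicate point. For (ii), the key observation is that the commutator map has vanishing first-order derivative in each slot when the other slot is the identity. Expanding to second order then gives the cross term $\delta\sqrt{\varepsilon}$.

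\emph{Step 3 (recursion and length count).} Define $U_{k+1} := \widetilde V_k \widetilde W_k \widetilde V_k^{-1}\widetilde W_k^{-1} U_k$. Here $\Delta = U U_k^{-1}$ is decomposed as in (i), and $\widetilde V_k,\widetilde W_k$ are the level-$k$ approximations of $V,W$.

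By (ii), the error sequence satisfies $\varepsilon_{k+1} \le C_3\,\varepsilon_k^{3/2}$. Choosing $\varepsilon_0 < C_3^{-2}$ in Step 1 gives doubly exponential decay $\varepsilon_k \approx (C_3^2\varepsilon_0)^{(3/2)^k}/C_3^{2}$.

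The lengths satisfy $L_{k+1} \le 5L_k$, so $L_k \le 5^k L_0$. Given $\varepsilon \le \varepsilon_0$, I take the least $k$ with $\varepsilon_k \le \varepsilon$. Then $(3/2)^k \asymp \log(1/\varepsilon)$, and therefore
\[
L \le 5^k L_0 \lesssim \log(\varepsilon^{-1})^{c}, \qquad c = \log 5/\log(3/2) > 1 .
\]
For the sharper exponents one would cite \cite{kuperberg23}, but the statement only requires some $c>1$.
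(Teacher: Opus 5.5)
Your proposal is correct in outline, but it takes a different route from the paper. The paper constructs nothing itself. For $\mathcal{SU}(d)$ it cites \cite[Theorem~1]{dawson06}, passing from the operator-norm distance used there to $d_{\mathcal{SU}(d)}$ by equivalence of metrics. For $\mathcal{PU}(d)$ it cites the general result \cite[Theorem~1.2]{kuperberg23} for compact connected semisimple Lie groups with an invariant metric.

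You instead reproduce the Dawson--Nielsen recursion: base net, balanced commutator, $\varepsilon_{k+1}\le C_3\varepsilon_k^{3/2}$, $L_{k+1}\le 5L_k$. You handle $\mathcal{PU}(d)$ by noting that all analytic input is local at the identity, where the covering $\mathcal{SU}(d)\to\mathcal{PU}(d)$ is a local isometry and $d_{\mathcal{PU}(d)}$ is still bi-invariant.

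Your route is self-contained and shows that Kuperberg's theorem is not needed for the statement as formulated, which only asks for some $c>1$. The paper's route is shorter and gives the sharper exponent $c>\log_\varphi(2)$ that is later imposed in the numerical fits; yours gives $c=\log 5/\log(3/2)$.

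An even quicker transfer is to lift the gate set. Let $\varpi:\mathcal{SU}(d)\to\mathcal{PU}(d)$ be the covering map and set $\widehat{\mathcal{A}}:=\varpi^{-1}(\mathcal{A})$.
\begin{enumerate}
\item $\widehat{\mathcal{A}}$ is finite and closed under inverses.
\item It contains $Z(\mathcal{SU}(d))$, so it generates $\varpi^{-1}(\langle\mathcal{A}\rangle)$, which is dense because $\varpi$ is open.
\item Words of length $\le L$ in $\widehat{\mathcal{A}}$ project to words of length $\le L$ in $\mathcal{A}$, and $d_{\mathcal{PU}(d)}([U],[V])\le d_{\mathcal{SU}(d)}(U,V)$.
\end{enumerate}
Hence the $\mathcal{SU}(d)$ case applies directly.

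Two points in Step~2 should be tightened.

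First, the step you single out as the main obstacle is unnecessary: you do not need an \emph{exact} balanced commutator. If $[A,B]=X$ with $\Vert A\Vert_{\HS},\Vert B\Vert_{\HS}\lesssim\sqrt{\varepsilon}$, Baker--Campbell--Hausdorff gives $d_G(\Delta,e^Ae^Be^{-A}e^{-B})=O(\varepsilon^{3/2})$. Adding this defect to estimate (ii) still yields $\varepsilon_{k+1}\le C_3\varepsilon_k^{3/2}$, since the next error level is of order $\varepsilon_k^{3/2}$ anyway. As written, your recursion rests on an exact decomposition produced by an unspecified fixed-point argument, which would also need constants uniform over all directions $X/\Vert X\Vert_{\HS}$. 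Dropping exactness removes the only unproved step.

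Second, diagonalizing $X$ is the wrong normal form. In a basis where $X\neq 0$ is diagonal, neither factor can be taken diagonal, because $[A,B]$ has vanishing diagonal whenever $A$ or $B$ is diagonal. Instead:
\begin{enumerate}
\item conjugate $X$ unitarily to a matrix with zero diagonal, which is possible because $\trace(X)=0$;
\item take $A\in\mathfrak{su}(d)$ diagonal with distinct entries $a_1,\ldots,a_d$;
\item set $B_{jk}:=X_{jk}/(a_j-a_k)$ for $j\neq k$ and $B_{jj}:=0$, which again lies in $\mathfrak{su}(d)$;
\item rescale $(A,B)\mapsto(tA,t^{-1}B)$ with $t\asymp\Vert X\Vert_{\HS}^{1/2}$.
\end{enumerate}

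Finally, in the length count only the upper bound $(3/2)^k\lesssim\log(\varepsilon^{-1})$ is available, and it is all you need. Choose $k$ from the explicit bound on $\varepsilon_k$, not from its true value.
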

\begin{proof}
	For $G=\mathcal{SU}(d)$, the conclusion follows from \cite[Theorem~1]{dawson06} using equivalence of the metrics. For $G = \mathcal{PU}(d)$, note that $\mathcal{PU}(d) \cong
	\mathcal{SU}(d)/Z(\mathcal{SU}(d))$ is a connected compact semisimple Lie group with Lie algebra $\mathfrak{pu}(d) \cong \mathfrak{su}(d)$. Its quotient Riemannian metric is bi-invariant, whence left-invariant. Hence, the conclusion follows from \cite[Theorem~1.2]{kuperberg23}, where the distance-dependent factor occurring there is uniformly bounded as $\mathcal{PU}(d)$ is compact.
\end{proof}

The constants in Theorem~\ref{thm:sk} are not universal. More precisely, the constant $C$ depends on $d$, $\mathcal{A}$ (but not on $U$ and $\varepsilon$), whereas the exponent $c$ is independent of $U$ and $\varepsilon$. For $G = \mathcal{SU}(d)$, the classical construction of \cite{dawson06} gives $c = \frac{\log(5)}{\log(3/2)} \approx 3.96936$. More recently, \cite[Theorems~1.1--1.2]{kuperberg23} improved this to any $c > \log_\varphi(2) \approx 1.44042$ with $\varphi = \frac{1+\sqrt{5}}{2}$, for both $G = \mathcal{SU}(d)$ and $G = \mathcal{PU}(d)$.

We now apply Theorem~\ref{thm:sk} to quantum circuits introduced in Definition~\ref{def:quantum_circuit}. However, the set of elementary gates $\mathbb{A}$ is not finite as the matrices $P(\phi)$ and $R_x(\theta),R_y(\theta),R_z(\theta)$ are parametrized by $\phi \in (0,2\pi)$ and $\theta \in (0,4\pi)$. Therefore, for $n \in \mathbb{N}$, we consider the finite set 
\begin{equation}
	\mathcal{A} := \left\lbrace [A_1 \otimes \cdots \otimes A_J]: A_1,\ldots,A_J \in \widetilde{\mathbb{A}}, \, \prod_{j=1}^J \dim(A_j) = 2^n \right\rbrace,
\end{equation}
where $\widetilde{\mathbb{A}} := \left\lbrace I_2,H,P(\pi/4),P(\pi/4)^{-1},\CNOT,\SWAP \right\rbrace$.

\begin{corollary}[Solovay--Kitaev for quantum circuits]
	\label{cor:sk}
	Let $n \in \mathbb{N}$. There exist constants $C_n > 0$, $c > 1$, and $\varepsilon_{0,n} > 0$ such that, for every $[U] \in \mathcal{PU}(2^n)$ and $\varepsilon \in (0,\varepsilon_{0,n}]$ there exists an $n$-qubit quantum circuit $Q$ of depth $L$ such that
	\begin{equation}
		\label{eq:cor:sk1}
		d_{\mathcal{PU}(2^n)}([U],[Q]) \leq \varepsilon \qquad \text{and} \qquad L \leq C_n \log\left( \varepsilon^{-1} \right)^c.
	\end{equation}
	Moreover, if $M$ denotes the number of elementary gates used in $Q$, then 
	\begin{equation}
		\label{eq:cor:sk2}
		M \leq nL \leq n C_n \log\left( \varepsilon^{-1} \right)^c.
	\end{equation}
\end{corollary}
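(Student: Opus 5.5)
The plan is to apply Theorem~\ref{thm:sk} with $G = \mathcal{PU}(2^n)$ and the finite set $\mathcal{A}$ defined just above, and then read off the circuit structure from the words in $\mathcal{A}_L$. Two hypotheses of Theorem~\ref{thm:sk} have to be checked: that $\mathcal{A} \subseteq \mathcal{PU}(2^n)$ is closed under inverses, and that $\langle \mathcal{A} \rangle$ is dense in $\mathcal{PU}(2^n)$.

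First, closure under inverses. The gates $I_2$, $H$, $\CNOT$ and $\SWAP$ are self-inverse, and $P(\pi/4)^{-1}$ is explicitly contained in $\widetilde{\mathbb{A}}$. Hence
\begin{equation}
	[A_1 \otimes \cdots \otimes A_J]^{-1} = [A_1^{-1} \otimes \cdots \otimes A_J^{-1}] \in \mathcal{A}.
\end{equation}
Finiteness of $\mathcal{A}$ is clear, since $\widetilde{\mathbb{A}}$ is finite and $J \leq n$.

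Second, density, which I expect to be the main obstacle. I would use the standard fact that $\lbrace H, T, \CNOT \rbrace$ with $T = P(\pi/4)$ is universal, citing \cite{nielsen10} (Section~4.5). The argument has three parts.
\begin{enumerate}
	\item $H$ and $T$ generate a dense subgroup of $\mathcal{PU}(2)$. The product $HTHT$ is, up to phase, a rotation by an angle that is an irrational multiple of $\pi$, taken about two non-parallel axes.
	\item $\CNOT$ together with all single-qubit unitaries is exactly universal for $\mathcal{U}(2^n)$ up to phase, via the two-level unitary decomposition.
	\item $\SWAP$ and the tensor placements with $I_2$ let single-qubit gates act on any wire and let $\CNOT$ act on any adjacent pair of wires. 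Arbitrary pairs can be reached by conjugating with $\SWAP$s.
\end{enumerate}
Combining these, the closure of $\langle \mathcal{A} \rangle$ contains every single-qubit unitary on each wire and every $\CNOT$ placement. It therefore equals $\mathcal{PU}(2^n)$. If needed, the closure of a subgroup is a closed subgroup, and so it is itself a group containing the generators of an exactly universal set.

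With both hypotheses in hand, Theorem~\ref{thm:sk} yields $C_n$, $c$, $\varepsilon_{0,n}$ and, for each $[U]$ and $\varepsilon$, a word $V = [B_1] \cdots [B_\ell] \in \mathcal{A}_L$ with $\ell \leq L \leq C_n \log(\varepsilon^{-1})^c$. Here each $B_k = A_{k,1} \otimes \cdots \otimes A_{k,n_k}$ with $A_{k,m} \in \widetilde{\mathbb{A}} \subseteq \mathbb{A} \cup \lbrace I_2 \rbrace$. Setting $Q := B_1 \cdots B_\ell$ gives a quantum circuit of depth $\ell \leq L$ in the sense of Definition~\ref{def:quantum_circuit}, with $[Q] = V$. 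Padding with identity layers makes the depth exactly $L$ if desired. This gives \eqref{eq:cor:sk1}; the exponent $c$ may be taken independent of $n$, as remarked after Theorem~\ref{thm:sk}.

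Finally, for \eqref{eq:cor:sk2}, each layer is a tensor product of $n_\ell \leq n$ factors, since each factor acts on at least one qubit. At most $n_\ell$ of these factors are elementary gates, so $M \leq \sum_\ell n_\ell \leq nL$, and the bound on $L$ then gives the stated estimate.
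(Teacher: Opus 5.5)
Your proposal is correct and follows essentially the same route as the paper: you check finiteness and inverse-closure of $\mathcal{A}$, get density of $\langle\mathcal{A}\rangle$ from $[H],[P(\pi/4)]$ being dense in $\mathcal{PU}(2)$, $\SWAP$-routing of $\CNOT$, and universality of $\CNOT$ plus single-qubit gates, then apply Theorem~\ref{thm:sk} and count at most $n$ gates per layer to get $M \leq nL$. Your remarks on the closed-subgroup argument, on $c$ being independent of $n$, and on padding with identity layers make explicit what the paper leaves implicit; one small fix is that in the $H,T$ density sketch it is the two products $THTH$ and $HTHT$ that are, up to phase, rotations by the same angle $\theta$ (with $\cos(\theta/2)=\cos^2(\pi/8)$, an irrational multiple of $\pi$) about two non-parallel axes, not a single product.
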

\begin{proof}
	First, we observe that $\mathcal{A}$ is finite. Moreover, by using $H^{-1} = H$, $\CNOT^{-1} = \CNOT$, and $\SWAP^{-1} = \SWAP$, it follows for every $A = [A_1 \otimes \cdots \otimes A_J] \in \mathcal{A}$ with $A_1,\ldots,A_j \in \widetilde{\mathbb{A}}$ that 
	\begin{equation}
		A^{-1} = [A_1^{-1} \otimes \cdots \otimes A_J^{-1}] \in \mathcal{A},
	\end{equation}
	which shows that $\mathcal{A}$ is closed under inverses. In order to prove that $\langle \mathcal{A} \rangle$ is dense in $\mathcal{PU}(2^n)$, we first observe that $[H]$ and $[P(\pi/4)]$ generate a dense subgroup of $\mathcal{PU}(2)$ (see, e.g., \cite[Section~3]{boykin99}). Since adjacent $\SWAP$ gates allow arbitrary permutations of the qubits, compositions of elements of $\mathcal{A}$ can implement $\CNOT$ on any ordered pair of qubits. Hence, using that arbitrary single-qubit gates together with $\CNOT$ generate all unitary operations on $n$-qubits (see \cite{barenco95}), we obtain $\overline{\langle \mathcal{A} \rangle} = \mathcal{PU}(2^n)$. Thus, \eqref{eq:cor:sk1} follows from Theorem~\ref{thm:sk}, and since every element of $\mathcal{A}$ represents one circuit layer containing at most $n$ elementary gates, we have $M \leq nL$, which yields \eqref{eq:cor:sk2}.
\end{proof}

\section{Approximation rates for Sobolev functions on quantum input spaces}
\label{sec:rate}

In this section, we derive approximation rates for Sobolev functions depending on quantum unitary operations and quantum states. Throughout this section, we fix some $n \in \mathbb{N}$, set $d := 2^n$, and distinguish between functions of special unitary operations in $\mathcal{SU}(d)$ (see Section~\ref{sec:rate_SU}), functions of unitary operations up to global phase in $\mathcal{PU}(d)$ (see Section~\ref{sec:rate_PU}), functions of $n$-qubit states in $\mathbb{S}^{2d-1}$ (see Section~\ref{sec:rate_S}), and functions of physical $n$-qubit states in $\mathbb{CP}^{d-1}$ (see Section~\ref{sec:rate_CP}).

In each case, the approximation procedure consists of two steps. First, we approximate a Sobolev function by truncating its spectral expansion, whose error is controlled by the approximation rates in Section~\ref{sec:sob}. Second, we replace the input unitary or state by a circuit-generated approximation with bounded depth. The resulting error is controlled by combining a Bernstein--Nikolskii inequality with the Solovay--Kitaev theorem.

For the latter, fix $G \in \lbrace \mathcal{SU}(d), \mathcal{PU}(d) \rbrace$ and let $\mathcal{A}_n \subseteq G$ be a finite set that is closed under inverses and whose generated subgroup $\langle \mathcal{A}_n \rangle$ is dense in $G$. For $L \in \mathbb{N}_0$, we define
\begin{equation}
	\mathcal{A}_{n,L} := \left\lbrace A_1 \cdots A_\ell: 0 \leq \ell \leq L, \, A_1,\ldots,A_\ell \in \mathcal{A}_n \right\rbrace.
\end{equation}
Moreover, the worst-case approximation error of $G$ by $\mathcal{A}_{n,L}$ is defined for every $L \in \mathbb{N}_0$ as
\begin{equation}
	\varepsilon_n(L) := \sup_{U \in G} \min_{V \in \mathcal{A}_{n,L}} d_G(U,V).
\end{equation}
By Theorem~\ref{thm:sk}, there exist constants $a_n > 0$ and $c > 1$ such that $\varepsilon_n(L) \leq \exp(-a_n L^{1/c})$, for sufficiently large $L \in \mathbb{N}_0$ (see also Corollary~\ref{cor:sk}).

\subsection{Functions of special-unitary operations}
\label{sec:rate_SU}

A special unitary operation on $n$-qubit states is an element of the special unitary group
\begin{equation}
	\mathcal{SU}(d) := \left\lbrace U \in \mathbb{C}^{d \times d}: U^* U = U U^* = I_d, \, \det(U) = 1 \right\rbrace
\end{equation}
defined in \eqref{eq:def_SU}, which is a compact Lie subgroup of $\mathcal{U}(d)$ of real dimension $d^2-1$. We equip $\mathcal{SU}(d)$ with the bi-invariant Riemannian metric $\mathbf{g}_{\mathcal{SU}(d)}(U)(UX,UY) = -\trace(XY)$ induced by the inner product $\langle X,Y \rangle_{\mathfrak{su}(d)} := -\trace(XY)$ on its Lie algebra $\mathfrak{su}(d) := \lbrace A \in \mathbb{C}^{d \times d}: A^* = -A, \, \trace(A) = 0 \rbrace$.

Next, we describe the Peter--Weyl decomposition of $L^2(\mathcal{SU}(d))$. By Example~\ref{ex:pw_U}, every $\omega \in \Omega_{\mathcal{U}(d)} := \lbrace \omega := (\omega_1,\ldots,\omega_d) \in \mathbb{Z}^d: \omega_1 \geq \cdots \geq \omega_d \rbrace$ corresponds to exactly one irreducible unitary representation $\mathcal{U}(d) \ni U \mapsto \pi_\omega(U) := \det(U)^{\omega_d} \mathcal{S}_{\mu_\omega}(U) \in \mathcal{U}(\mathcal{H}_\omega)$ of $\mathcal{U}(d)$, where $\mu_\omega := (\omega_1-\omega_d,\ldots,\omega_{d-1}-\omega_d,0)$ and $d_\omega := \dim(\mathcal{H}_\omega) := \prod_{1 \leq i < j \leq d} \frac{\omega_i-\omega_j+j-i}{j-i}$. In particular, if $\omega \in \Omega_{\mathcal{U}(d)}$ and $\omega' := \omega + k \, (1,\ldots,1) \in \Omega_{\mathcal{U}(d)}$ for some $k \in \mathbb{Z}$, then $\mu_{\omega'} = \mu_\omega$ and $\pi_{\omega'}(U) = \det(U)^{\omega_d+k} \mathcal{S}_{\mu_\omega}(U) = \det(U)^k \pi_\omega(U)$ for all $U \in \mathcal{U}(d)$, implying $\pi_{\omega'}(U) = \pi_\omega(U)$ for all $U \in \mathcal{SU}(d)$. Hence, by defining
\begin{equation}
	\Omega_{\mathcal{SU}(d)} := \left\lbrace \omega := (\omega_1,\ldots,\omega_d) \in \mathbb{Z}^d: \omega_1 \geq \cdots \geq \omega_d = 0 \right\rbrace \subseteq \Omega_{\mathcal{U}(d)},
\end{equation}
every $\omega \in \Omega_{\mathcal{SU}(d)}$ corresponds to exactly one irreducible representation $\pi_\omega\vert_{\mathcal{SU}(d)}: \mathcal{SU}(d) \rightarrow \mathcal{U}(\mathcal{H}_\omega)$. Choosing an ONB $(e^\omega_i)_{i=1,\ldots,d_\omega}$ of $\mathcal{H}_\omega$, the matrix elements $\pi_{\omega,i,j}(U) := \langle \pi_\omega(U) e^\omega_j, e^\omega_i \rangle_{\mathcal{H}_\omega}$ are eigenfunctions of $-\Delta_{\mathcal{U}(d)}$ with common eigenvalue $\sum_{i=1}^d \omega_i (\omega_i + d + 1 - 2i)$ (see Example~\ref{ex:sob_U}). Since $\mathfrak{u}(d) = \mathfrak{su}(d) \oplus \mathbb{R}Z$ with $Z := \frac{\mathbf{i}}{\sqrt{d}} I_d$, we have $\Delta_{\mathcal{U}(d)} = \Delta_{\mathcal{SU}(d)} + (Z^L)^2$, where $\pi_\omega(e^{\mathbf{i} \theta} I_d) = e^{\mathbf{i} \theta \vert\omega\vert_1} I_{\mathcal{H}_\omega}$ implies $(Z^L)^2 \pi_{\omega,i,j} = -\frac{1}{d} \vert \omega \vert_1^2  \pi_{\omega,i,j}$. Thus, $\pi_{\omega,i,j}\vert_{\mathcal{SU}(d)}$ are eigenfunctions of $-\Delta_{\mathcal{SU}(d)} = -\Delta_{\mathcal{U}(d)} + (Z^L)^2$ with common eigenvalue $\lambda_\omega := \sum_{i=1}^d \omega_i (\omega_i + d + 1 - 2i) - \frac{1}{d} \vert \omega \vert_1^2$, where $\vert \omega \vert_1 := \sum_{i=1}^d \omega_i$. Hence, by Theorem~\ref{thm:peter_weyl}, the family $(\sqrt{d_\omega} \, \pi_{\omega,i,j})_{\omega \in \Omega_{\mathcal{SU}(d)}, \, i,j=1,\ldots,d_\omega}$ is an ONB of $L^2(\mathcal{SU}(d))$ and we obtain the Peter--Weyl decomposition
\begin{equation}
	L^2(\mathcal{SU}(d)) = \bigoplus_{\omega \in \Omega_{\mathcal{SU}(d)}} \mathcal{E}_{\pi_\omega\vert_{\mathcal{SU}(d)}} \cong \bigoplus_{\omega \in \Omega_{\mathcal{SU}(d)}} \left( \mathcal{H}_\omega \otimes \mathcal{H}_\omega^* \right).
\end{equation}
For $s \geq 0$, the Sobolev space $H^s(\mathcal{SU}(d))$ is defined analogously to Example~\ref{ex:sob_U} for $\mathcal{U}(d)$, i.e.,
\begin{equation}
	H^s(\mathcal{SU}(d)) := \dom\left( (1-\Delta_{\mathcal{SU}(d)})^\frac{s}{2} \right) = \left\lbrace f \in \mathcal{D}'(\mathcal{SU}(d)): \Vert f \Vert_{H^s(\mathcal{SU}(d))} < \infty \right\rbrace,
\end{equation}
equipped with the norm
\begin{equation}
	\Vert f \Vert_{H^s(\mathcal{SU}(d))} := \Vert (1-\Delta_{\mathcal{SU}(d)})^\frac{s}{2} f \Vert_{L^2(\mathcal{SU}(d))} = \left( \sum_{\omega \in \Omega_{\mathcal{SU}(d)}} d_\omega (1+\lambda_\omega)^s \Vert \widehat{f}_\omega \Vert_{\HS}^2 \right)^\frac{1}{2},
\end{equation}
where $\widehat{f}_\omega := \widehat{f}(\pi_\omega) := \int_{\mathcal{SU}(d)} f(U) \pi_\omega(U)^* \, \mu_{\mathcal{SU}(d)}(dU) \in L(\mathcal{H}_\omega) \cong \mathbb{C}^{d_\omega \times d_\omega}$. For $K \in \mathbb{N}_0$, we define the corresponding spectral projection by
\begin{equation}
	L^2(\mathcal{SU}(d)) \ni f \quad \mapsto \quad P_K f := \sum_{\omega \in \Omega_{\mathcal{SU}(d)} \atop \lambda_\omega \leq K^2} d_\omega \trace\big( \widehat{f}_\omega \pi_\omega(\cdot) \big) \in L^2(\mathcal{SU}(d)),
\end{equation}
whose rank is given by $N_K := \dim(\range(P_K)) = \sum_{\omega \in \Omega_{\mathcal{SU}(d)}, \, \lambda_\omega \leq K^2} d_\omega^2 \asymp_d (1+K)^{d^2-1}$ thanks to Weyl's law and $\dim(\mathcal{SU}(d)) = d^2-1$.

Now, we approximate any input matrix $U \in \mathcal{SU}(d)$ by a quantum circuit. For $L \in \mathbb{N}_0$, choose
\begin{equation}
	Q_L(U) \in \argmin_{V \in \mathcal{A}_{n,L}} d_{\mathcal{SU}(d)}(U,V),
\end{equation}
where ties are resolved by choosing the first minimizer with respect to a fixed enumeration of the finite set $\mathcal{A}_{n,L}$, ensuring that $Q_L: \mathcal{SU}(d) \rightarrow \mathcal{A}_{n,L} \subseteq \mathcal{SU}(d)$ is measurable. Moreover, for every $\omega \in \Omega_{\mathcal{SU}(d)}$, the irreducible representation $\pi_\omega\vert_{\mathcal{SU}(d)}$ can be realized as a sub-representation of a sufficiently high tensor power, i.e., there exists an isometry $I_\omega: \mathcal{H}_\omega \rightarrow (\mathbb{C}^d)^{\otimes \vert\omega\vert_1}$ with $I_\omega \pi_\omega(U) = U^{\otimes \vert\omega\vert_1} I_\omega$ for all $U \in \mathcal{SU}(d)$, which implies for every $i,j=1,\ldots,d_\omega$ that
\begin{equation}
	\pi_{\omega,i,j}(U) = \langle \pi_\omega(U) e_j^\omega, e_i^\omega \rangle_{\mathcal{H}_\omega} = \braket{I_\omega e_i^\omega \vert U^{\otimes \vert\omega\vert_1} I_\omega e_j^\omega}.
\end{equation}
Hence, for every $K,L \in \mathbb{N}_0$, we define the quantum-circuit approximation
\begin{equation}
	\mathcal{SU}(d) \ni U \quad \mapsto \quad (\mathcal{Q}_{K,L} f)(U) := \sum_{\omega \in \Omega_{\mathcal{SU}(d)} \atop \lambda_\omega \leq K^2} d_\omega \sum_{i,j=1}^{d_\omega} \widehat{f}_{\omega,i,j} \braket{I_\omega e_i^\omega \vert Q_L(U)^{\otimes \vert\omega\vert_1} I_\omega e_j^\omega} \in \mathbb{C}, 
\end{equation}
where $\widehat{f}_{\omega,i,j} := \langle \widehat{f}_\omega e_i^\omega, e_j^\omega \rangle_{\mathcal{H}_\omega}$. For every fixed $U \in \mathcal{SU}(d)$, the approximation $Q_L(U) \in \mathcal{A}_{n,L}$ is a product of at most $L$ elements of the finite generating set $\mathcal{A}_n$. Hence,
$\mathcal{Q}_{K,L} f$ is obtained by evaluating the spectrally truncated expansion of $f$ at the word approximation $Q_L(U)$ of $U$. In particular, the parameter $L$ controls the algebraic complexity of the approximation through the maximal word length in $\mathcal{A}_n$.

\begin{theorem}
	\label{thm:rate_SU}
	Let $s \geq 0$ and $f \in H^s(\mathcal{SU}(d))$. Then, for every $K,L \in \mathbb{N}_0$, we have
	\begin{equation}
		\begin{aligned}
			\Vert f - \mathcal{Q}_{K,L} f \Vert_{L^2(\mathcal{SU}(d))} & \lesssim_{d,s} \left( (1+K)^{-s} + (1+K)^{\frac{d^2+1}{2}} \varepsilon_n(L) \right) \Vert f \Vert_{H^s(\mathcal{SU}(d))} \\
			& \lesssim_{d,s} \left( N_K^{-\frac{s}{d^2-1}} + N_K^\frac{d^2+1}{2(d^2-1)} \varepsilon_n(L) \right) \Vert f \Vert_{H^s(\mathcal{SU}(d))},
		\end{aligned}
	\end{equation}
	where $N_K \asymp_d (1+K)^{d^2-1}$.
\end{theorem}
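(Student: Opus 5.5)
The plan is to split the error by the triangle inequality,
\begin{equation}
	\Vert f - \mathcal{Q}_{K,L} f \Vert_{L^2(\mathcal{SU}(d))} \leq \Vert f - P_K f \Vert_{L^2(\mathcal{SU}(d))} + \Vert P_K f - \mathcal{Q}_{K,L} f \Vert_{L^2(\mathcal{SU}(d))}.
\end{equation}
The first term is at most a constant times $(1+K)^{-s}\Vert f\Vert_{H^s(\mathcal{SU}(d))}$ by Proposition~\ref{prop:fourier_projection}. For the second term, the intertwining identity $\pi_{\omega,i,j}(U) = \braket{I_\omega e_i^\omega \vert U^{\otimes \vert\omega\vert_1} I_\omega e_j^\omega}$ holds for every $U \in \mathcal{SU}(d)$, and in particular for $U = Q_L(U)$. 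It therefore gives $(\mathcal{Q}_{K,L} f)(U) = (P_K f)(Q_L(U))$ pointwise. The task thus reduces to bounding $\sup_U \vert g(U) - g(Q_L(U))\vert$ for $g := P_K f$. Since $\mu_{\mathcal{SU}(d)}$ is a probability measure, this supremum dominates the $L^2$ norm.

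Next I would use a mean value estimate on the geodesic metric. Since $d_{\mathcal{SU}(d)}(U,Q_L(U)) \leq \varepsilon_n(L)$ by definition of $\varepsilon_n(L)$ and the choice of $Q_L$, joining $U$ to $Q_L(U)$ by a minimizing geodesic gives
\begin{equation}
	\vert g(U) - g(Q_L(U))\vert \leq \Vert \nabla g\Vert_{L^\infty} \, \varepsilon_n(L).
\end{equation}
The key step is a Bernstein--Nikolskii inequality of the form $\Vert \nabla g \Vert_{L^\infty} \lesssim_d (1+K)^{\frac{d^2+1}{2}} \Vert f\Vert_{L^2}$. I would prove it as follows.

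Write $\nabla g$ via the left-invariant fields $X_k^L$, $k=1,\ldots,d^2-1$. Each $X_k^L$ maps $\mathcal{E}_{\pi_\omega}$ into itself, acting as $\trace(\widehat f_\omega \pi_\omega(\cdot) d\pi_\omega(X_k))$. Moreover,
\begin{equation}
	\sum_k \Vert d\pi_\omega(X_k) v\Vert^2 = \langle -d\pi_\omega(\Delta) v,v\rangle = \lambda_\omega \vert v\vert^2
\end{equation}
by the Casimir identity. Hence the pointwise gradient of the $\omega$-component has size at most $d_\omega \Vert \widehat f_\omega \pi_\omega(U) d\pi_\omega(\cdot)\Vert_{\HS} \leq d_\omega \sqrt{\lambda_\omega}\Vert \widehat f_\omega\Vert_{\HS}$. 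Summing over $\lambda_\omega\leq K^2$ and applying Cauchy--Schwarz gives
\begin{equation}
	\vert \nabla g(U)\vert \leq K \Big(\sum_{\lambda_\omega \leq K^2} d_\omega\Big)^{1/2}\cdot\ldots
\end{equation}
More carefully, I write $d_\omega\Vert\widehat f_\omega\Vert_{\HS} = \sqrt{d_\omega}\cdot\sqrt{d_\omega}\Vert\widehat f_\omega\Vert_{\HS}$. Cauchy--Schwarz then yields
\begin{equation}
	\vert \nabla g(U)\vert \leq K\Big(\sum_{\lambda_\omega\leq K^2} d_\omega\Big)^{1/2}\Big(\sum d_\omega\Vert \widehat f_\omega\Vert_{\HS}^2\Big)^{1/2} \leq K\, N_K^{1/2}\Vert f\Vert_{L^2},
\end{equation}
using $d_\omega\leq d_\omega^2$ and Plancherel. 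By Weyl's law $N_K \asymp_d (1+K)^{d^2-1}$, which gives the factor $(1+K)^{1+\frac{d^2-1}{2}} = (1+K)^{\frac{d^2+1}{2}}$. Finally $\Vert f\Vert_{L^2}\leq \Vert f\Vert_{H^s}$ for $s\geq 0$.

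The second line of the theorem follows by substituting $1+K\asymp_d N_K^{1/(d^2-1)}$.

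The main obstacle is making the gradient bound rigorous. One has to check that the identification of $\vert\nabla g\vert^2$ with $\sum_k \vert X_k^L g\vert^2$ uses the ONB of $\mathfrak{su}(d)$, and that $-\sum_k d\pi_\omega(X_k)^2 = \lambda_\omega I$. The $\mathfrak{su}(d)$ (not $\mathfrak{u}(d)$) eigenvalue must be used, as computed above. One must also justify the mean value inequality along a minimizing geodesic; this is legitimate on the compact connected group, where geodesic distance is attained. A minor point is that $\varepsilon_n(L)$ as defined is finite for all $L$, including small $L$, by compactness, so no restriction to large $L$ is needed.
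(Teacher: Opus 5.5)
Your architecture matches the paper's proof. Both use the triangle inequality, Proposition~\ref{prop:fourier_projection}, the pointwise identity $(\mathcal{Q}_{K,L}f)(U)=(P_Kf)(Q_L(U))$, the bound $d_{\mathcal{SU}(d)}(U,Q_L(U))\le\varepsilon_n(L)$, the mean value inequality, and the Bernstein--Nikolskii bound $\Vert\nabla P_Kf\Vert_{L^\infty}\lesssim_d(1+K)^{\frac{d^2+1}{2}}\Vert f\Vert_{L^2}$. The paper uses the pointwise identity tacitly; you are right to make it explicit. The one difference is that the paper cites the Bernstein--Nikolskii bound from Pesenson, whereas you derive it from Peter--Weyl, and that derivation contains a false step.

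Set $A:=\widehat f_\omega\pi_\omega(U)$ and $B_k:=d\pi_\omega(X_k)$. The $k$-th derivative of the $\omega$-component at $U$ is $d_\omega\trace(AB_k)$, and you bound the norm of this vector by $d_\omega\sqrt{\lambda_\omega}\Vert A\Vert_{\HS}$. That would require $\vert\trace M\vert\le\Vert M\Vert_{\HS}$, which fails on $\mathbb{C}^{d_\omega\times d_\omega}$; only $\vert\trace M\vert\le\sqrt{d_\omega}\Vert M\Vert_{\HS}$ holds.

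Concretely, the form $(X,Y)\mapsto-\trace(d\pi_\omega(X)d\pi_\omega(Y))$ is $\operatorname{Ad}$-invariant on the simple algebra $\mathfrak{su}(d)$. Hence $-\trace(B_kB_l)=\frac{\lambda_\omega d_\omega}{d^2-1}\delta_{k,l}$. Choosing $\widehat f_\omega$ with $A=B_1^*$ gives $\sum_k\vert\trace(AB_k)\vert^2=\frac{\lambda_\omega d_\omega}{d^2-1}\Vert A\Vert_{\HS}^2$, which exceeds your bound as soon as $d_\omega>d^2-1$.

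Consequently your intermediate inequality $\vert\nabla P_Kf(U)\vert\le K\big(\sum_{\lambda_\omega\le K^2}d_\omega\big)^{1/2}\Vert f\Vert_{L^2}$ is false. On $\mathcal{SU}(2)$ its right-hand side is of order $K^2\Vert f\Vert_{L^2}$. For a counterexample, put $\mathcal{K}_K:=\sum_{\lambda_\omega\le K^2}d_\omega\trace(\pi_\omega(\cdot))$, so that $P_Kf(U)=\int f(V)\mathcal{K}_K(V^{-1}U)\,\mu_{\mathcal{SU}(2)}(dV)$. Take a unit $X\in\mathfrak{su}(2)$ and $f(V):=\overline{(X^L\mathcal{K}_K)(V^{-1})}$. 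Then $X^LP_Kf(I)=\Vert X^L\mathcal{K}_K\Vert_{L^2}\Vert f\Vert_{L^2}$, with $\Vert X^L\mathcal{K}_K\Vert_{L^2}^2=\frac13\sum_{\lambda_\omega\le K^2}d_\omega^2\lambda_\omega\asymp K^5$. Your final bound $KN_K^{1/2}\Vert f\Vert_{L^2}$ comes out right only because the wasteful step $d_\omega\le d_\omega^2$ happens to restore the lost factor.

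The repair is short. Use $\vert\trace(AB_k)\vert\le\Vert A\Vert_{\HS}\Vert B_k\Vert_{\HS}$ together with $\sum_k\Vert B_k\Vert_{\HS}^2=\trace\big(-\sum_kB_k^2\big)=\lambda_\omega d_\omega$. This gives the per-component bound $d_\omega^{3/2}\sqrt{\lambda_\omega}\Vert\widehat f_\omega\Vert_{\HS}$. Then apply Cauchy--Schwarz with the split $(d_\omega\sqrt{\lambda_\omega})\cdot(\sqrt{d_\omega}\Vert\widehat f_\omega\Vert_{\HS})$ to get
\[
\vert\nabla P_Kf(U)\vert\le\Big(\sum_{\lambda_\omega\le K^2}d_\omega^2\lambda_\omega\Big)^{1/2}\Big(\sum_{\omega}d_\omega\Vert\widehat f_\omega\Vert_{\HS}^2\Big)^{1/2}\le K\,N_K^{1/2}\,\Vert f\Vert_{L^2(\mathcal{SU}(d))}.
\]
Combined with Weyl's law, this yields exactly the exponent $\frac{d^2+1}{2}$.

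With this correction your proof is complete. It is a self-contained replacement for the citation to Pesenson, using only Plancherel, the Casimir identity and Weyl's law, and it gives an explicit constant. The paper's citation, by contrast, is shorter and covers the homogeneous-space cases $\mathbb{S}^{2d-1}$ and $\mathbb{CP}^{d-1}$ in the same stroke.
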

\begin{proof}
	Fix $K, L \in \mathbb{N}_0$. By Proposition~\ref{prop:fourier_projection}, we have
	\begin{equation}
		\label{eq:thm:rate_SU:proof1}
		\Vert f - P_K f \Vert_{L^2(\mathcal{SU}(d))} \lesssim_s (1+K)^{-s} \Vert f \Vert_{H^s(\mathcal{SU}(d))}.
	\end{equation}
	Moreover, by definition of $Q_L(U)$, it holds for every $U \in \mathcal{SU}(d)$ that
	\begin{equation}
		\label{eq:thm:rate_SU:proof2}
		d_{\mathcal{SU}(d)}(U,Q_L(U)) = \min_{V \in \mathcal{A}_{n,L}} d_{\mathcal{SU}(d)}(U,V) \leq \varepsilon_n(L).
	\end{equation}
	In addition, since $\mathcal{SU}(d)$ is a compact homogeneous manifold of real dimension $d^2-1$, we can apply the Bernstein--Nikolskii inequality (see, e.g., \cite[Theorems~3.2 and~4.1]{pesenson08}) to obtain
	\begin{equation}
		\label{eq:thm:rate_SU:proof3}
		\Vert \nabla P_K f \Vert_{L^\infty(\mathcal{SU}(d))} \lesssim_d (1+K)^{1+\frac{d^2-1}{2}} \Vert P_K f \Vert_{L^2(\mathcal{SU}(d))} \leq (1+K)^\frac{d^2+1}{2} \Vert f \Vert_{L^2(\mathcal{SU}(d))}.
	\end{equation}
	Hence, by using the mean value theorem and \eqref{eq:thm:rate_SU:proof2}--\eqref{eq:thm:rate_SU:proof3}, it follows for every $U \in \mathcal{SU}(d)$ that
	\begin{equation}
		\label{eq:thm:rate_SU:proof4}
		\begin{aligned}
			\vert (P_K f)(U) - (P_K f)(Q_L(U)) \vert & \leq \Vert \nabla P_K f \Vert_{L^\infty(\mathcal{SU}(d))} d_{\mathcal{SU}(d)}(U,Q_L(U)) \\
			& \lesssim_d (1+K)^\frac{d^2+1}{2} \Vert f \Vert_{L^2(\mathcal{SU}(d))} \varepsilon_n(L).
		\end{aligned}
	\end{equation}
	Finally, by combining \eqref{eq:thm:rate_SU:proof1} with \eqref{eq:thm:rate_SU:proof4}, we conclude that
	\begin{equation}
		\begin{aligned}
			& \Vert f - \mathcal{Q}_{K,L} f \Vert_{L^2(\mathcal{SU}(d))} \leq \Vert f - P_K f \Vert_{L^2(\mathcal{SU}(d))} + \Vert P_K f - \mathcal{Q}_{K,L} f \Vert_{L^2(\mathcal{SU}(d))} \\
			& \lesssim_s (1+K)^{-s} \Vert f \Vert_{H^s(\mathcal{SU}(d))} + \sup_{U \in \mathcal{SU}(d)} \vert (P_K f)(U) - (P_K f)(Q_L(U)) \vert \\
			& \lesssim_d (1+K)^{-s} \Vert f \Vert_{H^s(\mathcal{SU}(d))} + (1+K)^\frac{d^2+1}{2} \varepsilon_n(L) \Vert f \Vert_{L^2(\mathcal{SU}(d))} \\
			& \leq \left( (1+K)^{-s} + (1+K)^\frac{d^2+1}{2} \varepsilon_n(L) \right) \Vert f \Vert_{H^s(\mathcal{SU}(d))},
		\end{aligned}
	\end{equation}
	which completes the proof.
\end{proof}

For $n = 1$ and $d := 2^n = 2$, Theorem~\ref{thm:rate_SU} yields for every $s \geq 0$, $f \in H^s(\mathcal{SU}(2))$, and $K,L \in \mathbb{N}_0$,
\begin{equation}
	\begin{aligned}
		\Vert f - \mathcal{Q}_{K,L} f \Vert_{L^2(\mathcal{SU}(2))} & \lesssim_s \left( (1+K)^{-s} + (1+K)^\frac{5}{2} \varepsilon_1(L) \right) \Vert f \Vert_{H^s(\mathcal{SU}(2))} \\
		& \lesssim_s \left( N_K^{-\frac{s}{3}} + N_K^\frac{5}{6} \varepsilon_1(L) \right) \Vert f \Vert_{H^s(\mathcal{SU}(2))},
	\end{aligned}
\end{equation}
where $N_K \asymp (1+K)^3$.

\subsection{Functions of unitary operations up to global phase}
\label{sec:rate_PU}

A unitary operation up to global phase on $n$-qubit states is represented by an element of the projective unitary group
\begin{equation}
	\mathcal{PU}(d) := \quotientone{\mathcal{U}(d)}{\mathcal{U}(1)} \cong \quotientone{\mathcal{SU}(d)}{Z(\mathcal{SU}(d))}
\end{equation}
introduced in \eqref{eq:def_PU}, which is a compact Lie group of real dimension $d^2-1$ with Lie algebra $\mathfrak{pu}(d) \cong \mathfrak{su}(d)$. We equip $\mathcal{PU}(d)$ with the unique bi-invariant Riemannian metric such that $\mathcal{SU}(d) \ni U \mapsto \varpi(U) := [U] \in \mathcal{PU}(d)$ is a local isometry, i.e., $\mathbf{g}_{\mathcal{PU}(d)}([U])((d\varpi)_U(UX),(d\varpi)_U(UY)) := -\trace(XY)$, for $U \in \mathcal{SU}(d)$ and $X,Y \in \mathfrak{su}(d)$. Under the canonical identification $\mathfrak{pu}(d) \cong \mathfrak{su}(d)$, the metric is induced by the inner product $\langle X,Y \rangle_{\mathfrak{pu}(d)} := -\trace(XY)$ on $\mathfrak{pu}(d)$.

Next, we describe the Peter--Weyl decomposition of $L^2(\mathcal{PU}(d))$. By Section~\ref{sec:rate_SU}, every $\omega \in \Omega_{\mathcal{SU}(d)}$ corresponds to the irreducible unitary representation $\pi_\omega\vert_{\mathcal{SU}(d)}: \mathcal{SU}(d) \rightarrow \mathcal{U}(\mathcal{H}_\omega)$ of $\mathcal{SU}(d)$, where $d_\omega := \dim(\mathcal{H}_\omega) := \prod_{1 \leq i < j \leq d} \frac{\omega_i-\omega_j+j-i}{j-i}$. Since every $U := e^{2\pi \mathbf{i} k/d} I_d \in Z(\mathcal{SU}(d))$ satisfies $\pi_\omega(U) = e^{2\pi \mathbf{i} k \vert\omega\vert_1/d} I_{\mathcal{H}_\omega}$, we observe that $\pi_\omega$ is trivial on $Z(\mathcal{SU}(d))$ if and only if $\vert\omega\vert_1 \md d = 0$. Hence, $\pi_\omega$ descends to an irreducible unitary representation of $\mathcal{PU}(d)$ if and only if $\vert\omega\vert_1 \md d = 0$. Thus, by defining
\begin{equation}
	\Omega_{\mathcal{PU}(d)} := \left\lbrace \omega := (\omega_1,\ldots,\omega_d) \in \mathbb{Z}^d: \omega_1 \geq \cdots \geq \omega_d = 0, \, \vert\omega\vert_1 \md d = 0 \right\rbrace,
\end{equation}
every $\omega \in \Omega_{\mathcal{PU}(d)}$ induces the irreducible unitary representation $\mathcal{PU}(d) \ni [U] \mapsto \widetilde{\pi}_\omega([U]) := \pi_\omega(U) \in \mathcal{U}(\mathcal{H}_\omega)$. Choosing an ONB $(e_i^\omega)_{i=1,\ldots,d_\omega}$ of $\mathcal{H}_\omega$, we use that $\varpi: \mathcal{SU}(d) \rightarrow \mathcal{PU}(d)$ is a local isometry to conclude that the matrix elements $\widetilde{\pi}_{\omega,i,j}([U]) := \langle \widetilde{\pi}_\omega([U]) e_j^\omega, e_i^\omega \rangle_{\mathcal{H}_\omega}$ are eigenfunctions of $-\Delta_{\mathcal{PU}(d)}$ with the same eigenvalues as on $\mathcal{SU}(d)$, i.e., $\lambda_\omega := \sum_{i=1}^d \omega_i(\omega_i+d+1-2i) - \frac{1}{d} \vert\omega\vert_1^2$. Hence, by Theorem~\ref{thm:peter_weyl}, the family $(\sqrt{d_\omega} \, \widetilde{\pi}_{\omega,i,j})_{\omega \in \Omega_{\mathcal{PU}(d)}, \, i,j=1,\ldots,d_\omega}$ is an ONB of $L^2(\mathcal{PU}(d))$ and we obtain the Peter--Weyl decomposition
\begin{equation}
	L^2(\mathcal{PU}(d)) = \bigoplus_{\omega \in \Omega_{\mathcal{PU}(d)}} \mathcal{E}_{\widetilde{\pi}_\omega} \cong \bigoplus_{\omega \in \Omega_{\mathcal{PU}(d)}} \left( \mathcal{H}_\omega \otimes \mathcal{H}_\omega^* \right).
\end{equation}
For $s \geq 0$, the Sobolev space $H^s(\mathcal{PU}(d))$ is defined analogously to Example~\ref{ex:sob_U} for $\mathcal{U}(d)$, i.e.,
\begin{equation}
	H^s(\mathcal{PU}(d)) := \dom\left( (1-\Delta_{\mathcal{PU}(d)})^\frac{s}{2} \right) = \left\lbrace
	f \in \mathcal{D}'(\mathcal{PU}(d)): \Vert f\Vert_{H^s(\mathcal{PU}(d))} < \infty \right\rbrace,
\end{equation}
equipped with the norm
\begin{equation}
	\begin{aligned}
		\Vert f \Vert_{H^s(\mathcal{PU}(d))} := \Vert (1-\Delta_{\mathcal{PU}(d)})^\frac{s}{2} f \Vert_{L^2(\mathcal{PU}(d))} = \left( \sum_{\omega \in \Omega_{\mathcal{PU}(d)}} d_\omega (1+\lambda_\omega)^s \Vert \widehat{f}_\omega \Vert_{\HS}^2 \right)^\frac{1}{2},
	\end{aligned}
\end{equation}
where $\widehat{f}_\omega := \int_{\mathcal{PU}(d)} f([U]) \widetilde{\pi}_\omega([U])^* \, \mu_{\mathcal{PU}(d)}(d[U]) \in L(\mathcal{H}_\omega) \cong \mathbb{C}^{d_\omega \times d_\omega}$. For $K\in\mathbb{N}_0$, we define the corresponding spectral projection by
\begin{equation}
	\label{eq:defP_PU}
	L^2(\mathcal{PU}(d)) \ni f \quad \mapsto \quad P_K f := \sum_{\omega \in \Omega_{\mathcal{PU}(d)} \atop \lambda_\omega \leq K^2} d_\omega \trace\big( \widehat{f}_\omega \widetilde{\pi}_\omega(\cdot) \big) \in L^2(\mathcal{PU}(d)),
\end{equation}
whose rank is given by $N_K := \dim(\range(P_K)) = \sum_{\omega \in \Omega_{\mathcal{PU}(d)}, \, \lambda_\omega \leq K^2} d_\omega^2 \asymp_d (1+K)^{d^2-1}$ thanks to Weyl's law and $\dim(\mathcal{PU}(d))=d^2-1$.

Now, we approximate any input operation $[U] \in \mathcal{PU}(d)$ by a quantum circuit. For $L \in \mathbb{N}_0$, choose
\begin{equation}
	Q_L([U]) \in \argmin_{V \in \mathcal{A}_{n,L}} d_{\mathcal{PU}(d)}([U],V),
\end{equation}
where ties are resolved by choosing the first minimizer with respect to a fixed enumeration of the finite set $\mathcal{A}_{n,L}$, ensuring that $Q_L: \mathcal{PU}(d) \rightarrow \mathcal{A}_{n,L} \subseteq \mathcal{PU}(d)$ is measurable. As in Section~\ref{sec:rate_SU}, there exists for every $\omega \in \Omega_{\mathcal{PU}(d)}$ an isometry $I_\omega: \mathcal{H}_\omega \rightarrow (\mathbb{C}^d)^{\otimes \vert\omega\vert_1}$ with $I_\omega \pi_\omega(U) = U^{\otimes \vert\omega\vert_1} I_\omega$, for all $U\in\mathcal{SU}(d)$. Since $\vert\omega\vert_1 \md d = 0$, the map 
\begin{equation}
	\mathcal{PU}(d) \ni [U] \quad \mapsto \quad \rho_\omega([U]) := U^{\otimes \vert\omega\vert_1} \in \mathcal{U}((\mathbb{C}^d)^{\otimes \vert\omega\vert_1})
\end{equation}
is well-defined, where $U \in \mathcal{SU}(d)$ is any representative of $[U]$. Indeed, if $U' = e^{2\pi\mathbf{i}k/d} U$ for some $k=0,\ldots,d-1$, then $(U')^{\otimes \vert\omega\vert_1} = e^{2\pi\mathbf{i}k\vert\omega\vert_1/d} U^{\otimes \vert\omega\vert_1} = U^{\otimes \vert\omega\vert_1}$. Moreover, $I_\omega \widetilde{\pi}_\omega([U]) = \rho_\omega([U]) I_\omega$, which implies for every $i,j=1,\ldots,d_\omega$ that
\begin{equation}
	\widetilde{\pi}_{\omega,i,j}([U]) = \langle \widetilde{\pi}_\omega([U]) e_j^\omega, e_i^\omega \rangle_{\mathcal{H}_\omega} = \braket{I_\omega e_i^\omega \vert \rho_\omega([U]) I_\omega e_j^\omega}.
\end{equation}
Hence, for every $K,L\in\mathbb{N}_0$, we define
\begin{equation}
	\label{eq:defQ_PU}
	\begin{aligned}
		\mathcal{PU}(d) \ni [U] \quad \mapsto \quad (\mathcal{Q}_{K,L}f)([U]) := \sum_{\omega \in \Omega_{\mathcal{PU}(d)} \atop \lambda_\omega\leq K^2} d_\omega \sum_{i,j=1}^{d_\omega} \widehat{f}_{\omega,i,j} \braket{I_\omega e_i^\omega \vert \rho_\omega(Q_L([U])) I_\omega e_j^\omega} \in\mathbb{C},
	\end{aligned}
\end{equation}
where $\widehat{f}_{\omega,i,j} := \langle \widehat{f}_\omega e_i^\omega,e_j^\omega \rangle_{\mathcal{H}_\omega}$. By construction, $Q_L([U])$ is induced by an $n$-qubit quantum circuit of depth at most $L$. Therefore, $\rho_\omega(Q_L([U]))$ is realized by $\vert\omega\vert_1$ parallel copies of a determinant-one representative of this circuit, and $(\mathcal Q_{K,L}f)([U])$ can be evaluated accordingly, provided that the states $I_\omega e_i^\omega$ are available for all relevant $\omega$ and $i$.

\begin{theorem}
	\label{thm:rate_PU}
	Let $s \geq 0$ and $f \in H^s(\mathcal{PU}(d))$. Then, for every
	$K,L \in \mathbb{N}_0$, we have
	\begin{equation}
		\begin{aligned}
			\Vert f - \mathcal{Q}_{K,L} f \Vert_{L^2(\mathcal{PU}(d))} & \lesssim_{d,s} \left( (1+K)^{-s} + (1+K)^{\frac{d^2+1}{2}} \varepsilon_n(L) \right) \Vert f \Vert_{H^s(\mathcal{PU}(d))} \\
			& \lesssim_{d,s} \left( N_K^{-\frac{s}{d^2-1}} + N_K^\frac{d^2+1}{2(d^2-1)} \varepsilon_n(L) \right) \Vert f \Vert_{H^s(\mathcal{PU}(d))},
		\end{aligned}
	\end{equation}
	where $N_K \asymp_d (1+K)^{d^2-1}$.
\end{theorem}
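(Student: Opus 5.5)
The plan mirrors the proof of Theorem~\ref{thm:rate_SU}, working on $\mathcal{PU}(d)$ throughout. First we observe that $\mathcal{Q}_{K,L} f = (P_K f) \circ Q_L$. Indeed, for every $[V] \in \mathcal{PU}(d)$ the identity $\widetilde{\pi}_{\omega,i,j}([V]) = \braket{I_\omega e_i^\omega \vert \rho_\omega([V]) I_\omega e_j^\omega}$ turns \eqref{eq:defQ_PU} into \eqref{eq:defP_PU} evaluated at $Q_L([U])$. The only convention to check is that $\sum_{i,j} \widehat{f}_{\omega,i,j} \widetilde{\pi}_{\omega,i,j} = \trace(\widehat{f}_\omega \widetilde{\pi}_\omega(\cdot))$, which holds with the index convention $\widehat{f}_{\omega,i,j} = \langle \widehat{f}_\omega e_i^\omega, e_j^\omega\rangle$. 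We then split
\begin{equation}
\Vert f - \mathcal{Q}_{K,L} f \Vert_{L^2(\mathcal{PU}(d))} \leq \Vert f - P_K f \Vert_{L^2(\mathcal{PU}(d))} + \sup_{[U] \in \mathcal{PU}(d)} \big\vert (P_K f)([U]) - (P_K f)(Q_L([U])) \big\vert .
\end{equation}
Here we use that $\mu_{\mathcal{PU}(d)}$ is a probability measure. Measurability of $Q_L$ makes the composition measurable.

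For the first term, $\mathcal{PU}(d)$ is a compact connected Lie group with a bi-invariant metric. Its matrix elements $\widetilde{\pi}_{\omega,i,j}$ are eigenfunctions with eigenvalues $\lambda_\omega$, so Proposition~\ref{prop:fourier_projection} applies directly and gives $\lesssim_s (1+K)^{-s}\Vert f\Vert_{H^s(\mathcal{PU}(d))}$. For the second term, the choice of $Q_L$ as a minimizer gives $d_{\mathcal{PU}(d)}([U],Q_L([U])) \leq \varepsilon_n(L)$ for every $[U]$. The function $P_K f$ lies in the span of eigenfunctions of $-\Delta_{\mathcal{PU}(d)}$ with eigenvalues at most $K^2$. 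On this $(d^2-1)$-dimensional compact homogeneous manifold, the Bernstein--Nikolskii inequality \cite{pesenson08} gives
\begin{equation}
\Vert \nabla P_K f\Vert_{L^\infty(\mathcal{PU}(d))} \lesssim_d (1+K)^{1+\frac{d^2-1}{2}} \Vert P_K f\Vert_{L^2(\mathcal{PU}(d))} \leq (1+K)^{\frac{d^2+1}{2}} \Vert f\Vert_{L^2(\mathcal{PU}(d))} .
\end{equation}
Along a minimizing geodesic between $[U]$ and $Q_L([U])$, which exists by compactness and Hopf--Rinow, the mean value inequality then bounds the second term by $(1+K)^{\frac{d^2+1}{2}}\varepsilon_n(L)\Vert f\Vert_{L^2}$. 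Since $\Vert f\Vert_{L^2} \leq \Vert f\Vert_{H^s}$ for $s\ge 0$, the first claimed bound follows.

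The second bound uses $N_K \asymp_d (1+K)^{d^2-1}$, which comes from Weyl's law as stated before the theorem. This gives $(1+K)^{-s} \asymp N_K^{-s/(d^2-1)}$ and $(1+K)^{(d^2+1)/2} \asymp N_K^{(d^2+1)/(2(d^2-1))}$, with constants depending only on $d$ and $s$.

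The main obstacle is justifying that the Bernstein--Nikolskii and gradient bound applies with the metric $d_{\mathcal{PU}(d)}$ used in $\varepsilon_n(L)$. The approach is to note that $\varpi:\mathcal{SU}(d)\to\mathcal{PU}(d)$ is a local isometry, and that the quotient distance $\min_k d_{\mathcal{SU}(d)}(U,e^{2\pi\mathbf{i}k/d}V)$ coincides with the Riemannian distance of $\mathbf{g}_{\mathcal{PU}(d)}$. A geodesic in $\mathcal{SU}(d)$ between suitable lifts projects to a curve of the same length. Alternatively, one can lift: $(P_K f)\circ\varpi$ is band-limited on $\mathcal{SU}(d)$ with the same $L^2$ norm, because $\varpi$ is measure-preserving. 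Applying the $\mathcal{SU}(d)$ estimate from the proof of Theorem~\ref{thm:rate_SU} to lifts $U$ and $e^{2\pi\mathbf{i}k/d}V$ that realize the minimum then gives the same bound.
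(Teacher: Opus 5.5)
Your proposal is correct and follows the paper's route. The paper's proof simply defers to the proof of Theorem~\ref{thm:rate_SU}: spectral truncation via Proposition~\ref{prop:fourier_projection}, then the Bernstein--Nikolskii gradient bound combined with the mean value inequality and the worst-case circuit error $\varepsilon_n(L)$. Your additional checks fill in details the paper leaves implicit: that $\mathcal{Q}_{K,L} f = (P_K f)\circ Q_L$ under the stated index convention, and that the quotient distance $d_{\mathcal{PU}(d)}$ coincides with the Riemannian distance of the local-isometric quotient, so the mean value step is legitimate.
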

\begin{proof}
	The conclusion follows from the same arguments as in the proof of Theorem~\ref{thm:rate_SU}, where we use that $\mathcal{PU}(d)$ is also a compact manifold of real dimension $d^2-1$.
\end{proof}

For $n = 1$ and $d := 2^n = 2$, Theorem~\ref{thm:rate_PU} yields for every $s \geq 0$, $f \in H^s(\mathcal{PU}(2))$, and $K,L \in \mathbb{N}_0$,
\begin{equation}
	\label{eq:rate_PU2}
	\begin{aligned}
		\Vert f - \mathcal{Q}_{K,L} f \Vert_{L^2(\mathcal{PU}(2))} & \lesssim_s \left( (1+K)^{-s} + (1+K)^\frac{5}{2} \varepsilon_1(L) \right) \Vert f \Vert_{H^s(\mathcal{PU}(2))} \\
		& \lesssim_s \left( N_K^{-\frac{s}{3}} + N_K^\frac{5}{6} \varepsilon_1(L) \right) \Vert f \Vert_{H^s(\mathcal{PU}(2))}
	\end{aligned}
\end{equation}
where $N_K \asymp (1+K)^3$.

\subsection{Functions of qubit states}
\label{sec:rate_S}

An $n$-qubit state is an element of the complex unit sphere
\begin{equation}
	\mathbb{S}^{2d-1} := \left\lbrace x \in \mathbb{C}^d: \vert x \vert = 1 \right\rbrace \,\cong\, \quotientone{\mathcal{U}(d)}{\mathcal{U}(d-1)} \,\cong\, \quotientone{\mathcal{SU}(d)}{\mathcal{SU}(d-1)}
\end{equation}
introduced in \eqref{eq:def_S}, equipped with the standard round Riemannian metric $\mathbf{g}_{\mathbb{S}^{2d-1}}(x)(v,w) := \re(v^* w)$, for $v,w \in T_x \mathbb{S}^{2d-1}$. Hence, the Levi--Civita connection is given by $\nabla_X Y = P_x(D_X Y)$, with $P_x v := v - \re(x^* v) x$. 

Every smooth function $f \in C^\infty(\mathbb{S}^{2d-1})$ has a smooth extension to $\mathbb{C}^d \setminus \lbrace 0 \rbrace$ defined by $\mathbb{C}^d \setminus \lbrace 0 \rbrace \ni x \mapsto \widetilde{f}(x) := f\big( \frac{x}{\vert x \vert}\big) \in \mathbb{C}$. We identify $\mathbb{C}^d$ with $\mathbb{R}^{2d}$ equipped with the real inner product $\langle v,w \rangle_{\mathbb{R}^{2d}} := \re(\langle v,w\rangle_{\mathbb{C}^d})$. Since the functions considered below may be complex-valued, we complexify the tangent and cotangent bundles, extend the Levi--Civita connection and the Riemannian metric complex-linearly and complex-bilinearly, respectively, and define $\nabla^k f := \nabla^k \re(f) + \mathbf{i}\nabla^k \im(f)$, $k\in\mathbb{N}_0$, where pointwise norms of complexified tensors are taken with respect to the associated Hermitian metric. In particular, for a complex-valued function $f$ on $\mathbb{R}^{2d}$, we write $\nabla{\mathbb{R}^{2d}} f := \nabla_{\mathbb{R}^{2d}} \re(f) + \mathbf{i} \nabla_{\mathbb{R}^{2d}} \im(f)$, so that $Df(x)[v] = \langle \nabla_{\mathbb{R}^{2d}} f(x), v \rangle_{\mathbb{R}^{2d}}$ with the inner product on the right understood as its complex-bilinear extension. Hence, by using that $P_x$ is self-adjoint and $P_x X=X$, it holds for every $X \in T_x \mathbb{S}^{2d-1}$ that
\begin{equation}
	\label{eq:S_covar1}
	(\nabla_{\mathbb{S}^{2d-1}} f)_x(X) = (Xf)(x) = D \widetilde{f}(x)[X] = \langle P_x(\nabla_{\mathbb{R}^{2d}} \widetilde{f}(x)), X \rangle_{\mathbb{R}^{2d}} = \langle
	\nabla_{\mathbb{R}^{2d}} \widetilde{f}(x), X \rangle_{\mathbb{R}^{2d}}.
\end{equation}
Moreover, by extending any $X,Y \in T_x \mathbb{S}^{2d-1}$ locally to smooth tangent vector fields and using $(D_XY)(x) - (\nabla_XY)(x) = -\mathbf{g}_{\mathbb{S}^{2d-1}}(x)(X,Y) \, x$ as well as that $(0,\infty) \ni t \mapsto \widetilde{f}(tx) \in \mathbb{C}$ is constant (implying $\langle \nabla_{\mathbb{R}^{2d}}\widetilde{f}(x), x \rangle_{\mathbb{R}^{2d}} = 0$), we obtain
\begin{equation}
	\label{eq:S_covar2}
	\begin{aligned}
		(\nabla_{\mathbb{S}^{2d-1}}^2 f)_x(X,Y) & = \big( X (Y f) \big)(x) - \big( (\nabla_X Y) f \big)(x) \\
		& = \langle \nabla_{\mathbb{R}^{2d}}^2 \widetilde{f}(x) X, Y \rangle_{\mathbb{R}^{2d}} +
		\langle \nabla_{\mathbb{R}^{2d}} \widetilde{f}(x), D_X Y \rangle_{\mathbb{R}^{2d}} - \langle \nabla_{\mathbb{R}^{2d}}\widetilde{f}(x), \nabla_X Y \rangle_{\mathbb{R}^{2d}}
		\\
		& = \langle \nabla_{\mathbb{R}^{2d}}^2 \widetilde{f}(x) X, Y \rangle_{\mathbb{R}^{2d}} - \langle \nabla_{\mathbb{R}^{2d}}\widetilde{f}(x), x \rangle_{\mathbb{R}^{2d}} \mathbf{g}_{\mathbb{S}^{2d-1}}(x)(X,Y) \\
		& = \langle \nabla_{\mathbb{R}^{2d}}^2 \widetilde{f}(x) X, Y \rangle_{\mathbb{R}^{2d}}.
	\end{aligned}
\end{equation}
Hence, by taking the trace over an ONB $(e_j)_{j=1,\ldots,2d-1}$ of $T_x \mathbb{S}^{2d-1}$, which in turn induces an ONB $(e_1,\ldots,e_{2d-1},x)$ of $\mathbb{R}^{2d}$, and using again that $(0,\infty) \ni t \mapsto \widetilde{f}(tx) \in \mathbb{C}$ is constant (implying $\langle \nabla_{\mathbb{R}^{2d}}^2 \widetilde{f}(x) x, x \rangle_{\mathbb{R}^{2d}} = 0$), the Laplace--Beltrami operator is given by
\begin{equation}
	\label{eq:S_lapl}
	\Delta_{\mathbb{S}^{2d-1}} f(x) = \sum_{j=1}^{2d-1} \langle \nabla_{\mathbb{R}^{2d}}^2 \widetilde{f}(x) e_j,e_j \rangle_{\mathbb{R}^{2d}} = \Delta_{\mathbb{R}^{2d}} \widetilde{f}(x) - \langle \nabla_{\mathbb{R}^{2d}}^2\widetilde{f}(x) x, x \rangle_{\mathbb{R}^{2d}} = \Delta_{\mathbb{R}^{2d}} \widetilde{f}(x).
\end{equation}
This can be used to define the Sobolev spaces $H^s(\mathbb{S}^{2d-1})$ with $s \geq 0$ (see Section~\ref{sec:sob}).

Next, we describe the Peter--Weyl decomposition of $L^2(\mathbb{S}^{2d-1})$. By identifying $L^2(\mathbb{S}^{2d-1})$ with the subspace of right-$\mathcal{U}(d-1)$-invariant functions in $L^2(\mathcal{U}(d))$, the decomposition \eqref{eq:pw_Ud} contains precisely those irreducible representations of $\mathcal{U}(d)$ possessing a non-zero $\mathcal{U}(d-1)$-invariant vector, which (by the branching rule for $\mathcal{U}(d) \supseteq \mathcal{U}(d-1)$, see \cite[Chapter~8]{goodman09}) are parametrized by
\begin{equation}
	\Omega_{\mathbb{S}^{2d-1}} := \left\lbrace (p,0,\ldots,0,-q) \in \mathbb{Z}^d: p,q \in \mathbb{N}_0 \right\rbrace \subseteq \Omega_{\mathcal{U}(d)}.
\end{equation}
For every $p,q \in \mathbb{N}_0$, the irreducible representation $[\pi_{(p,0,\ldots,0,-q)}] \in \widehat{\mathcal{U}(d)}$ is given by $\mathcal{U}(d) \ni U \mapsto \pi_{p,q}(U) := (P \mapsto (z \mapsto P(zU))) \in \mathcal{U}(\mathcal{H}_{p,q}(\mathbb{S}^{2d-1}))$, where $z \in \mathbb{S}^{2d-1}$ is regarded as a row vector, $d_{p,q} := \dim(\mathcal{H}_{p,q}(\mathbb{S}^{2d-1})) = \binom{p+d-1}{p}\binom{q+d-1}{q} - \binom{p+d-2}{p-1}\binom{q+d-2}{q-1} = \frac{(p+q+d-1)(p+d-2)!(q+d-2)!}{p!q!(d-1)!(d-2)!}$ (with $\binom{k}{r} := 0$ if $r < 0$), and $\mathcal{H}_{p,q}(\mathbb{S}^{2d-1}) := \mathcal{H}_{p,q}(\mathbb{C}^d)\vert_{\mathbb{S}^{2d-1}}$ is the restriction of complex spherical harmonic polynomials
\begin{equation}
	\mathcal{H}_{p,q}(\mathbb{C}^d) := \left\lbrace P \in \Pol_{p,q}(\mathbb{C}^d): \Delta_{\mathbb{C}^d} P = 0 \right\rbrace,
\end{equation}
where $\Delta_{\mathbb{C}^d} := 4 \sum_{k=1}^d \frac{\partial^2}{\partial z_k \partial \overline{z}_k}$ and
\begin{equation}
	\Pol_{p,q}(\mathbb{C}^d) := \left\lbrace \mathbb{C}^d \ni z \mapsto \sum_{\alpha,\beta \in \mathbb{N}_0^d, \, \vert\alpha\vert = p, \vert\beta\vert = q} c_{\alpha,\beta} z^\alpha \overline{z}^\beta \in \mathbb{C}: c_{\alpha,\beta} \in \mathbb{C} \right\rbrace.
\end{equation}
Moreover, let $(Y_{p,q,k})_{k=1,\ldots,d_{p,q}}$ be an ONB of $\mathcal{H}_{p,q}(\mathbb{S}^{2d-1})$ consisting of eigenfunctions of $-\Delta_{\mathbb{S}^{2d-1}}$ with eigenvalue $\lambda_{p,q} := (p+q)(p+q+2d-2)$ (see, e.g., \cite{koornwinder73}). Then, $(Y_{p,q,k})_{p,q \in \mathbb{N}_0, \, k=1,\ldots,d_{p,q}}$ is an ONB of $L^2(\mathbb{S}^{2d-1})$ and we obtain the Peter--Weyl decomposition
\begin{equation}
	L^2(\mathbb{S}^{2d-1}) = \widehat{\bigoplus_{p,q \in \mathbb{N}_0}} \mathcal{H}_{p,q}(\mathbb{S}^{2d-1}).
\end{equation}
For $s \geq 0$, we follow Section~\ref{sec:sob} and consider the Sobolev space
\begin{equation}
	H^s(\mathbb{S}^{2d-1}) := \dom\left( (1-\Delta_{\mathbb{S}^{2d-1}})^\frac{s}{2} \right) = \left\lbrace f \in \mathcal{D}'(\mathbb{S}^{2d-1}): \Vert f \Vert_{H^s(\mathbb{S}^{2d-1})} < \infty \right\rbrace,
\end{equation}
which is equipped with the norm
\begin{equation}
	\Vert f \Vert_{H^s(\mathbb{S}^{2d-1})} := \Vert (1-\Delta_{\mathbb{S}^{2d-1}})^\frac{s}{2} f \Vert_{L^2(\mathbb{S}^{2d-1})} = \left( \sum_{p,q\in\mathbb{N}_0} (1+\lambda_{p,q})^s \sum_{k=1}^{d_{p,q}} \big\vert \widehat{f}_{p,q,k} \big\vert^2 \right)^\frac{1}{2},
\end{equation}
with $\widehat{f}_{p,q,k} := \langle f, Y_{p,q,k} \rangle_{L^2(\mathbb{S}^{2d-1})}$. For $K \in \mathbb{N}_0$, we define the corresponding spectral projection by
\begin{equation}
	P_K f := \sum_{p,q \in \mathbb{N}_0 \atop p+q \leq K} \sum_{k=1}^{d_{p,q}} \widehat{f}_{p,q,k} Y_{p,q,k},
\end{equation}
whose rank is given by $N_K := \dim(\range(P_K)) = \binom{K+2d-1}{2d-1} + \binom{K+2d-2}{2d-1} \asymp_d (1+K)^{2d-1}$.

Now, we approximate any input state $x \in \mathbb{S}^{2d-1}$ by a circuit-generated state. For every $x \in \mathbb{S}^{2d-1}$, choose some $U_x \in \mathcal{SU}(d)$ with $x = U_x e_1$ such that $\mathbb{S}^{2d-1} \ni x \mapsto U_x \in \mathcal{SU}(d)$ is measurable, where $e_1 = \ket{0}_n$ is a reference state. Moreover, for $L \in \mathbb{N}_0$, let
\begin{equation}
	C_L(x) \in \argmin_{V \in \mathcal{A}_{n,L}} d_{\mathcal{SU}(d)}(U_x,V),
\end{equation}
where ties are resolved by choosing the first minimizer with respect to a fixed enumeration of the finite set $\mathcal{A}_{n,L}$, ensuring that $C_L: \mathbb{S}^{2d-1} \rightarrow \mathcal{A}_{n,L} \subseteq \mathcal{SU}(d)$ is measurable. Then, $\Pi_L(x) := C_L(x) e_1 \in \mathbb{S}^{2d-1}$ is generated from the reference state $e_1 = \ket{0}_n$ by a quantum circuit of depth at most $L$. Moreover, for every $p,q \in \mathbb{N}_0$ and $k=1,\ldots,d_{p,q}$, the spherical harmonic $Y_{p,q,k}$ is the restriction of a bihomogeneous polynomial of bidegree $(p,q)$, whence there exists a matrix $B_{p,q,k} \in \mathbb{C}^{d^{\max(p,q)} \times d^{\max(p,q)}}$ such that for every $x \in \mathbb{S}^{2d-1}$ it holds that
\begin{equation}
	Y_{p,q,k}(x) = \braket{\psi_{p,q}^q(x) \vert B_{p,q,k} \psi_{p,q}^p(x)},
\end{equation}
where $\psi_{p,q}^r(x) := x^{\otimes r} \otimes e_1^{\otimes (\max(p,q)-r)}$, $r \in \lbrace p,q \rbrace$. Thus, for every $K,L \in \mathbb{N}_0$, we define
\begin{equation}
	\begin{aligned}
		\mathbb{S}^{2d-1} \ni x \quad \mapsto \quad (\mathcal{Q}_{K,L} f)(x) = \sum_{p,q \in \mathbb{N}_0 \atop p+q \leq K} \sum_{k=1}^{d_{p,q}} \widehat{f}_{p,q,k} \braket{\psi_{p,q}^q(\Pi_L(x)) \vert B_{p,q,k} \psi_{p,q}^p(\Pi_L(x))} \in \mathbb{C}.
	\end{aligned}
\end{equation}
Note that $(\mathcal{Q}_{K,L} f)(x)$ can be evaluated using quantum circuits under suitable implementation assumptions on the matrices $B_{p,q,k}$. Indeed, $\Pi_L(x)$ is generated from $e_1 = \ket{0}_n$ by a circuit of depth at most $L$, so the states $\psi_{p,q}^r(\Pi_L(x))$, $r \in \lbrace p,q \rbrace$, can be prepared using at most $\max(p,q)$ parallel copies of this circuit. If the matrices $B_{p,q,k}$ admit efficient quantum implementations, e.g., as linear combinations of efficiently implementable unitaries, the matrix elements $\braket{\psi_{p,q}^q(\Pi_L(x)) \vert B_{p,q,k} \psi_{p,q}^p(\Pi_L(x))}$ can then be estimated using Hadamard-test-type routines.

\begin{theorem}
	\label{thm:rate_S}
	Let $s \geq 0$ and $f \in H^s(\mathbb{S}^{2d-1})$. Then, for every $K,L\in\mathbb{N}_0$, we have
	\begin{equation}
		\begin{aligned}
			\Vert f - \mathcal{Q}_{K,L} f \Vert_{L^2(\mathbb{S}^{2d-1})} & \lesssim_{d,s} \left( (1+K)^{-s} + (1+K)^{\frac{2d+1}{2}} \varepsilon_n(L) \right) \Vert f\Vert_{H^s(\mathbb{S}^{2d-1})} \\
			& \lesssim_{d,s} \left( N_K^{-\frac{s}{2d-1}} + N_K^{\frac{1}{2}+\frac{1}{2d-1}} \varepsilon_n(L) \right) \Vert f \Vert_{H^s(\mathbb{S}^{2d-1})},
		\end{aligned}
	\end{equation}
	where $N_K \asymp_d (1+K)^{2d-1}$.
\end{theorem}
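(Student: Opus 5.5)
The plan follows the proof of Theorem~\ref{thm:rate_SU}: split off the spectral truncation error, then control the effect of replacing $x$ by $\Pi_L(x)$ through a uniform Lipschitz bound on $P_K f$. By construction, $(\mathcal{Q}_{K,L} f)(x) = (P_K f)(\Pi_L(x))$, because $\braket{\psi_{p,q}^q(y) \vert B_{p,q,k} \psi_{p,q}^p(y)} = Y_{p,q,k}(y)$ for every $y \in \mathbb{S}^{2d-1}$, and $\Pi_L(x) \in \mathbb{S}^{2d-1}$. The triangle inequality then gives
\begin{equation}
	\Vert f - \mathcal{Q}_{K,L} f \Vert_{L^2(\mathbb{S}^{2d-1})} \leq \Vert f - P_K f \Vert_{L^2(\mathbb{S}^{2d-1})} + \sup_{x \in \mathbb{S}^{2d-1}} \vert (P_K f)(x) - (P_K f)(\Pi_L(x)) \vert .
\end{equation}
For the first term, $\mathbb{S}^{2d-1} \cong \mathcal{U}(d)/\mathcal{U}(d-1)$ is a normal homogeneous space whose round metric is induced from the bi-invariant metric. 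Since $P_K$ keeps exactly the modes with $p+q \leq K$, and these are those with $\lambda_{p,q} = (p+q)(p+q+2d-2) \lesssim_d (1+K)^2$, the argument of Corollary~\ref{cor:fourier_projection} (Plancherel together with $(1+\lambda_{p,q})^{-s} \lesssim_{d,s} (1+K)^{-2s}$ on the discarded modes) yields the bound $(1+K)^{-s}\Vert f\Vert_{H^s}$.

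For the second term, I will first bound the distance between inputs. Since $\Pi_L(x) = C_L(x)e_1$ and $x = U_x e_1$, the geodesic distance on the round sphere satisfies
\begin{equation}
	d_{\mathbb{S}^{2d-1}}(x,\Pi_L(x)) \lesssim \vert U_x e_1 - C_L(x) e_1 \vert \leq \Vert U_x - C_L(x) \Vert \leq d_{\mathcal{SU}(d)}(U_x, C_L(x)) \leq \varepsilon_n(L).
\end{equation}
Here I use $\Vert U - V \Vert = \Vert I - U^*V \Vert = \Vert I - e^X\Vert \leq \Vert X\Vert_{\HS}$ for skew-Hermitian $X$, together with the fact that chordal and geodesic distance on the sphere are comparable. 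One caveat: $\mathcal{A}_{n,L}$ must lie in $\mathcal{SU}(d)$, which is the $G=\mathcal{SU}(d)$ setting of Section~\ref{sec:rate}.

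Next, the Bernstein--Nikolskii inequality (\cite[Theorems~3.2 and~4.1]{pesenson08}) on the compact homogeneous manifold $\mathbb{S}^{2d-1}$, of dimension $m = 2d-1$, applies to $P_K f$, whose spectrum lies in $[0,CK^2]$. It gives
\begin{equation}
	\Vert \nabla P_K f \Vert_{L^\infty} \lesssim_d (1+K)^{1+\frac{2d-1}{2}} \Vert P_K f\Vert_{L^2} \leq (1+K)^{\frac{2d+1}{2}} \Vert f \Vert_{L^2}.
\end{equation}
The mean value theorem along a minimizing geodesic from $x$ to $\Pi_L(x)$ then gives a pointwise bound of $(1+K)^{\frac{2d+1}{2}}\varepsilon_n(L)\Vert f\Vert_{L^2}$, and $\Vert f\Vert_{L^2} \leq \Vert f\Vert_{H^s}$ for $s\geq0$. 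The sup bound dominates the $L^2$ norm because $\mu_{\mathbb{S}^{2d-1}}$ is a probability measure. Measurability of $x \mapsto \Pi_L(x)$ follows from the measurable choices of $U_x$ and $C_L$.

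The second line of the estimate comes from $N_K = \binom{K+2d-1}{2d-1}+\binom{K+2d-2}{2d-1} \asymp_d (1+K)^{2d-1}$. This gives $(1+K)^{-s} \asymp N_K^{-s/(2d-1)}$ and $(1+K)^{(2d+1)/2} \asymp N_K^{\frac12 + \frac{1}{2d-1}}$. The main obstacle is applying the Bernstein--Nikolskii inequality with the correct exponent $1 + m/2$. This requires the Nikolskii bound $\Vert g\Vert_{L^\infty} \lesssim (1+K)^{m/2}\Vert g\Vert_{L^2}$ for band-limited $g$; on the sphere it follows directly from the addition formula, since $\sum_{p+q\leq K}\sum_k \vert Y_{p,q,k}(x)\vert^2 = N_K$. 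It also requires the Bernstein bound for the gradient, which can be obtained from $\Vert \nabla g\Vert_{L^2}^2 = \langle -\Delta g, g\rangle$ applied to the band-limited components $X g$, where $X$ ranges over the Killing fields of $\mathcal{U}(d)$ that span each tangent space.
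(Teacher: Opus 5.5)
Your proposal is correct and follows the paper's proof essentially step for step: the triangle inequality, the Plancherel bound for $f - P_K f$ (using $\lambda_{p,q} \geq (K+1)^2$ on the discarded modes), the estimate $d_{\mathbb{S}^{2d-1}}(x,\Pi_L(x)) \lesssim \varepsilon_n(L)$ through the orbit map $U \mapsto Ue_1$, Pesenson's Bernstein--Nikolskii inequality with exponent $1+\frac{2d-1}{2}$, and the mean value theorem. The differences are cosmetic---you verify the Lipschitz property of the orbit map by hand via the chordal comparison and $\Vert I - e^X \Vert \leq \Vert X \Vert_{\HS}$, at the cost of a harmless constant $\pi/2$, and you sketch a proof of the Bernstein--Nikolskii bound---and your side remark that the round metric is the normal metric of $\mathcal{U}(d)/\mathcal{U}(d-1)$ is inaccurate (with $-\trace(XY)$ one gets a Berger metric, and roundness instead comes from $\mathcal{SO}(2d)/\mathcal{SO}(2d-1)$), but it is not needed, since the truncation bound only uses that the $Y_{p,q,k}$ form an orthonormal eigenbasis with eigenvalues $\lambda_{p,q}$.
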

\begin{proof}
	Fix $K,L\in\mathbb{N}_0$. Following the arguments of Corollary~\ref{cor:fourier_projection}, we have
	\begin{equation}
		\label{eq:thm:rate_S:proof1}
		\Vert f - P_K f \Vert_{L^2(\mathbb{S}^{2d-1})} \leq \big( 1 + (K+1)(K+2d-1) \big)^{-\frac{s}{2}} \Vert f \Vert_{H^s(\mathbb{S}^{2d-1})}.
	\end{equation}
	Moreover, for every $x \in \mathbb{S}^{2d-1}$, there exists some $U \in \mathcal{SU}(d)$ with $x = Ue_1$. By the definition of $C_L(x)$ and using that the orbit map $\mathcal{SU}(d) \ni U \mapsto Ue_1 \in \mathbb{S}^{2d-1}$ is $1$-Lipschitz, it holds that
	\begin{equation}
		\label{eq:thm:rate_S:proof2}
		\begin{aligned}
			d_{\mathbb{S}^{2d-1}}(x,\Pi_L(x)) = d_{\mathbb{S}^{2d-1}}(U_xe_1,C_L(x)e_1) \leq d_{\mathcal{SU}(d)}(U_x,C_L(x)) = \min_{V \in \mathcal{A}_{n,L}} d_{\mathcal{SU}(d)}(U_x,V) \leq \varepsilon_n(L).
		\end{aligned}
	\end{equation}
	In addition, since $\mathbb{S}^{2d-1}$ is a compact homogeneous manifold of real dimension $2d-1$, we can apply the Bernstein--Nikolskii inequality (see, e.g., \cite[Theorems~3.2 and~4.1]{pesenson08}) to obtain
	\begin{equation}
		\label{eq:thm:rate_S:proof3}
		\Vert \nabla P_K f \Vert_{L^\infty(\mathbb{S}^{2d-1})} \lesssim_d (1+K)^{1+\frac{2d-1}{2}} \Vert P_K f \Vert_{L^2(\mathbb{S}^{2d-1})} \leq (1+K)^\frac{2d+1}{2} \Vert f \Vert_{L^2(\mathbb{S}^{2d-1})}.
	\end{equation}
	Hence, by using the mean value theorem and the inequalities~\eqref{eq:thm:rate_S:proof3} as well as \eqref{eq:thm:rate_S:proof2}, we have
	\begin{equation}
		\label{eq:thm:rate_S:proof4}
		\begin{aligned}
			\vert (P_K f)(x) - (P_K f)(\Pi_L(x)) \vert & \leq \Vert \nabla P_K f \Vert_{L^\infty(\mathbb{S}^{2d-1})} d_{\mathbb{S}^{2d-1}}(x,\Pi_L(x)) \\
			& \lesssim_d (1+K)^\frac{2d+1}{2} \Vert f \Vert_{L^2(\mathbb{S}^{2d-1})} \varepsilon_n(L).
		\end{aligned}
	\end{equation}
	Finally, by combining \eqref{eq:thm:rate_S:proof1} with \eqref{eq:thm:rate_S:proof4}, it follows that
	\begin{equation}
		\begin{aligned}
			& \Vert f - \mathcal{Q}_{K,L} f \Vert_{L^2(\mathbb{S}^{2d-1})} \leq \Vert f - P_K f \Vert_{L^2(\mathbb{S}^{2d-1})} + \Vert P_K f - \mathcal{Q}_{K,L} f \Vert_{L^2(\mathbb{S}^{2d-1})} \\
			& \leq \big( 1 + (K+1)(K+2d-1) \big)^{-\frac{s}{2}} \Vert f \Vert_{H^s(\mathbb{S}^{2d-1})} + \sup_{x \in \mathbb{S}^{2d-1}} \vert (P_K f)(x) - (P_K f)(\Pi_L(x)) \vert \\
			& \lesssim_d \big( 1 + (K+1)(K+2d-1) \big)^{-\frac{s}{2}} \Vert f \Vert_{H^s(\mathbb{S}^{2d-1})} + (1+K)^\frac{2d+1}{2} \varepsilon_n(L) \Vert f \Vert_{L^2(\mathbb{S}^{2d-1})} \\
			& \lesssim_{d,s} \left( (1+K)^{-s} + (1+K)^{\frac{2d+1}{2}} \varepsilon_n(L) \right) \Vert f\Vert_{H^s(\mathbb{S}^{2d-1})},
		\end{aligned}
	\end{equation}
	which completes the proof.
\end{proof}

For $n=1$ with $d=2^n=2$, we observe that $\mathbb{S}^3$ is diffeomorphic to $\mathcal{SU}(2)$ and Theorem~\ref{thm:rate_S} implies 
\begin{equation}
	\begin{aligned}
		\Vert f - \mathcal{Q}_{K,L} f \Vert_{L^2(\mathbb{S}^3)} & \lesssim_s \left( (1+K)^{-s} + (1+K)^\frac{5}{2} \varepsilon_1(L) \right) \Vert f\Vert_{H^s(\mathbb{S}^3)} \\
		& \lesssim_s \left( N_K^{-\frac{s}{3}} + N_K^\frac{5}{6} \varepsilon_1(L) \right) \Vert f \Vert_{H^s(\mathbb{S}^3)},
	\end{aligned}
\end{equation}
where $N_K \asymp (1+K)^3$.

\subsection{Functions of physical qubit states}
\label{sec:rate_CP}

A physical $n$-qubit state belongs to the complex projective space
\begin{equation}
	\label{eq:CP_identification}
	\mathbb{CP}^{d-1} := \quotientone{\mathbb{C}^d \setminus \lbrace 0 \rbrace}{\sim} \cong \quotientone{\mathbb{S}^{2d-1}}{\mathcal{U}(1)} \cong \quotientone{\mathcal{U}(d)}{\mathcal{U}(1)\times\mathcal{U}(d-1)} \cong \quotientone{\mathcal{SU}(d)}{S(\mathcal{U}(1)\times\mathcal{U}(d-1))}
\end{equation}
introduced in \eqref{eq:def_CP}, where $x \sim y$ if and only if $y = \lambda x$ for some $\lambda \in \mathbb{C} \setminus \lbrace 0 \rbrace$. Let $\mathbf{g}_{\mathrm{FS}}$ be the normalized Fubini--Study metric so that the Hopf projection $(\mathbb{S}^{2d-1},\mathbf{g}_{\mathbb{S}^{2d-1}}) \ni x \mapsto \varpi(x) := [x] \in (\mathbb{CP}^{d-1},\mathbf{g}_{\mathrm{FS}})$ is a Riemannian submersion. We equip $\mathbb{CP}^{d-1}$ with the rescaled metric $\mathbf{g}_{\mathbb{CP}^{d-1}} := 4 \mathbf{g}_{\mathrm{FS}}$ so that $\mathbb{CP}^1$ is isometric to the Bloch sphere $\mathbb{S}^2$. More precisely, for every $x \in \mathbb{S}^{2d-1}$, the tangent space $T_x \mathbb{S}^{2d-1} = \mathscr{H}_x \oplus \mathscr{V}_x$ decomposes orthogonally into the horizontal subspace $\mathscr{H}_x := \lbrace v\in\mathbb{C}^d: x^*v = 0 \rbrace$ and the vertical subspace $\mathscr{V}_x := \ker((d\varpi)_x) = \linspan_{\mathbb{R}}\lbrace \mathbf{i} x \rbrace$. The Fubini--Study metric is given by $\mathbf{g}_{\mathrm{FS}}([x])((d\varpi)_x(v),(d\varpi)_x(w)) := \mathbf{g}_{\mathbb{S}^{2d-1}}(x)(Q_x(v),Q_x(w)) = \re(Q_x(v)^*Q_x(w))$, for $v,w \in T_x \mathbb{S}^{2d-1}$, where $Q_x(v) := (I_d-xx^*) v$ is the orthogonal projection onto $\mathscr{H}_x$. Thus, $(d\varpi)_x\vert_{\mathscr{H}_x}$ is an isometry, whence $\varpi: (\mathbb{S}^{2d-1},\mathbf{g}_{\mathbb{S}^{2d-1}}) \rightarrow (\mathbb{CP}^{d-1},\mathbf{g}_{\mathrm{FS}})$ is a Riemannian submersion.

For $f \in C^\infty(\mathbb{CP}^{d-1})$, define $F := f \circ \varpi \in C^\infty(\mathbb{S}^{2d-1})$. Moreover, for $U,V \in T_{[x]} \mathbb{CP}^{d-1}$, let $X,Y \in \mathscr{H}_x$ denote their horizontal lifts, i.e., $(d\varpi)_x(X) = U$ and $(d\varpi)_x(Y) = V$, respectively. Since $\mathbf{g}_{\mathbb{CP}^{d-1}} = 4 \mathbf{g}_{\mathrm{FS}}$, the metrics $\mathbf{g}_{\mathbb{CP}^{d-1}}$ and $\mathbf{g}_{\mathrm{FS}}$ induce the same Levi--Civita connection. Hence, applying the Riemannian submersion identities with respect to $\mathbf{g}_{\mathrm{FS}}$ yields 
\begin{equation} 
	\begin{aligned} 
		(\nabla_{\mathbb{CP}^{d-1}} f)_{[x]}(U) & = (\nabla_{\mathbb{S}^{2d-1}} F)_x(X), \\ (\nabla_{\mathbb{CP}^{d-1}}^2 f)_{[x]}(U,V) & = (\nabla_{\mathbb{S}^{2d-1}}^2 F)_x(X,Y). 
	\end{aligned} 
\end{equation} 
As above, we consider the extension $\mathbb{C}^d \setminus \lbrace 0 \rbrace \ni x \mapsto \widetilde{F}(x) := F\big( \frac{x}{\vert x \vert} \big) \in \mathbb{C}$. Then, \eqref{eq:S_covar1}--\eqref{eq:S_covar2} imply
\begin{equation} 
	\begin{aligned} 
		(\nabla_{\mathbb{CP}^{d-1}} f)_{[x]}(U) & = \langle \nabla_{\mathbb{R}^{2d}}\widetilde F(x), X
		\rangle_{\mathbb{R}^{2d}}, \\
		(\nabla_{\mathbb{CP}^{d-1}}^2 f)_{[x]}(U,V) & = \langle \nabla_{\mathbb{R}^{2d}}^2\widetilde F(x) X, Y \rangle_{\mathbb{R}^{2d}}.
	\end{aligned} 
\end{equation} 
Moreover, $\Delta_{\mathbb{CP}^{d-1}} = \frac{1}{4} \Delta_{(\mathbb{CP}^{d-1},\mathbf{g}_{\mathrm{FS}})}$ as $\mathbf{g}_{\mathbb{CP}^{d-1}} = 4 \mathbf{g}_{\mathrm{FS}}$. Since $\varpi: (\mathbb{S}^{2d-1},\mathbf{g}_{\mathbb{S}^{2d-1}}) \rightarrow (\mathbb{CP}^{d-1},\mathbf{g}_{\mathrm{FS}})$ is a Riemannian submersion with totally geodesic (and thus minimal compact) fibers, it holds that
\begin{equation} 
	(\Delta_{\mathbb{CP}^{d-1}} f) \circ \varpi = \frac{1}{4} (\Delta_{(\mathbb{CP}^{d-1},\mathbf{g}_{\mathrm{FS}})} f) \circ \varpi = \frac{1}{4} \Delta_{\mathbb{S}^{2d-1}}(f \circ \varpi). 
\end{equation} 
By \eqref{eq:S_lapl}, this yields $\Delta_{\mathbb{CP}^{d-1}} f([x]) = \frac{1}{4} \Delta_{\mathbb{R}^{2d}} \widetilde F(x)$, which can be used to define the Sobolev spaces $H^s(\mathbb{CP}^{d-1})$ with $s \geq 0$ as in Section~\ref{sec:sob}.

Next, we describe the Peter--Weyl decomposition of $L^2(\mathbb{CP}^{d-1})$. Since every non-zero $Y \in \mathcal{H}_{p,q}(\mathbb{S}^{2d-1})$ satisfies $Y(e^{\mathbf{i}\theta}z) = e^{\mathbf{i}(p-q)\theta} Y(z)$, we observe that $Y$ descends to a function on $\mathbb{CP}^{d-1}$ if and only if $p=q$. Hence, the elements of $\mathcal{H}_m(\mathbb{CP}^{d-1}) := \lbrace Y \in C^\infty(\mathbb{CP}^{d-1}): Y \circ \varpi \in \mathcal{H}_{m,m}(\mathbb{S}^{2d-1}) \rbrace$ are eigenfunctions of $-\Delta_{\mathbb{CP}^{d-1}}$ with eigenvalues $\lambda_m := m(m+d-1)$ (see \cite[p.~181]{tanno79}), where $d_m := \dim(\mathcal{H}_m(\mathbb{CP}^{d-1})) = \binom{m+d-1}{d-1}^2 - \binom{m+d-2}{d-1}^2$, with the second term equal to zero if $m=0$. For every $m \in \mathbb{N}_0$, choose an ONB $(Y_{m,k})_{k=1,\ldots,d_m}$ of $\mathcal{H}_m(\mathbb{CP}^{d-1})$. Then, $(Y_{m,k})_{m \in \mathbb{N}_0, \, k=1,\ldots,d_m}$ is an ONB of $L^2(\mathbb{CP}^{d-1})$ and we obtain the Peter--Weyl decomposition
\begin{equation}
	L^2(\mathbb{CP}^{d-1}) = \widehat{\bigoplus_{m \in \mathbb{N}_0}} \mathcal{H}_m(\mathbb{CP}^{d-1}).
\end{equation}
For $s \geq 0$, we follow Section~\ref{sec:sob} and consider the Sobolev space
\begin{equation}
	H^s(\mathbb{CP}^{d-1}) := \dom\left( (1-\Delta_{\mathbb{CP}^{d-1}})^\frac{s}{2} \right) = \left\lbrace f \in \mathcal{D}'(\mathbb{CP}^{d-1}): \Vert f \Vert_{H^s(\mathbb{CP}^{d-1})} < \infty \right\rbrace,
\end{equation}
which is equipped with the norm
\begin{equation}
	\Vert f \Vert_{H^s(\mathbb{CP}^{d-1})} := \Vert (1-\Delta_{\mathbb{CP}^{d-1}})^\frac{s}{2} f \Vert_{L^2(\mathbb{CP}^{d-1})} = \left( \sum_{m=0}^\infty (1+\lambda_m)^s \sum_{k=1}^{d_m} \big\vert \widehat{f}_{m,k} \big\vert^2 \right)^\frac{1}{2},
\end{equation}
where $\widehat{f}_{m,k} := \langle f, Y_{m,k} \rangle_{L^2(\mathbb{CP}^{d-1})}$. For $K \in \mathbb{N}_0$, we define the corresponding spectral projection by
\begin{equation}
	\label{eq:defP_CP}
	L^2(\mathbb{CP}^{d-1}) \ni f \quad \mapsto \quad P_K f := \sum_{m=0}^K \sum_{k=1}^{d_m} \widehat{f}_{m,k} Y_{m,k} \in L^2(\mathbb{CP}^{d-1}),
\end{equation}
whose rank is given by $N_K := \dim(\range(P_K)) = \sum_{m=0}^K d_m = \binom{K+d-1}{d-1}^2 \asymp_d (1+K)^{2d-2}$.

Now, we approximate every physical input state $[x] \in \mathbb{CP}^{d-1}$ by a circuit-generated state. For every $[x] \in \mathbb{CP}^{d-1}$, choose $U_x \in \mathcal{PU}(d)$ with $[x] = U_x[e_1]$ such that $\mathbb{CP}^{d-1} \ni [x] \mapsto U_x \in \mathcal{PU}(d)$ is measurable, where $e_1 = \ket{0}_n$ is a reference state. Moreover, for $L \in \mathbb{N}_0$, let
\begin{equation}
	C_L([x]) \in \argmin_{V \in \mathcal{A}_{n,L}} d_{\mathcal{PU}(d)}(U_x,V),
\end{equation}
where ties are resolved by choosing the first minimizer with respect to a fixed enumeration of the finite set $\mathcal{A}_{n,L}$, ensuring that $C_L: \mathbb{CP}^{d-1} \rightarrow \mathcal{A}_{n,L} \subseteq \mathcal{PU}(d)$ is measurable. Then, $\Pi_L([x]) := C_L([x])[e_1] \in \mathbb{CP}^{d-1}$ is generated from the reference state $[e_1]$ by a quantum circuit of depth at most $L$. Let $V_L([x]) \in \mathcal{SU}(d)$ be any representative of $C_L([x])$ and set $\widetilde{\Pi}_L([x]) := V_L([x]) e_1 \in \mathbb{S}^{2d-1}$ satisfying $\Pi_L([x]) = [\widetilde{\Pi}_L([x])]$. For every $m \in \mathbb{N}_0$ and $k=1,\ldots,d_m$, the lift $Y_{m,k} \circ \varpi$ is the restriction of a bihomogeneous polynomial of bidegree $(m,m)$. Hence, there exists a matrix $B_{m,k} \in \mathbb{C}^{d^m \times d^m}$ such that for every $x \in \mathbb{S}^{2d-1}$ it holds that
\begin{equation}
	Y_{m,k}([x]) = \braket{x^{\otimes m} \vert B_{m,k} x^{\otimes m}}.
\end{equation}
Since the right-hand side is invariant under $x \mapsto e^{\mathbf{i}\theta} x$, it depends only on $[x]$. Thus, for every $K,L \in \mathbb{N}_0$, we define
\begin{equation}
	\label{eq:defQ_CP}
	\mathbb{CP}^{d-1} \ni [x] \quad \mapsto \quad (\mathcal{Q}_{K,L} f)([x]) := \sum_{m=0}^K \sum_{k=1}^{d_m} \widehat{f}_{m,k} \braket{\widetilde{\Pi}_L([x])^{\otimes m} \vert B_{m,k} \widetilde{\Pi}_L([x])^{\otimes m}} \in \mathbb{C}.
\end{equation}
Note that $(\mathcal{Q}_{K,L} f)([x])$ is independent of the chosen representative $V_L([x])$ because replacing $\widetilde{\Pi}_L([x])$ by $e^{\mathbf{i} \theta} \widetilde{\Pi}_L([x])$ leaves the inner product unchanged. Moreover, $(\mathcal{Q}_{K,L} f)([x])$ can be evaluated using quantum circuits under suitable implementation assumptions on the matrices $B_{m,k}$. Indeed, $\widetilde{\Pi}_L([x])$ is generated from $e_1 = \ket{0}_n$ by a circuit of depth at most $L$, so $\widetilde{\Pi}_L([x])^{\otimes m}$ can be prepared using $m$ parallel copies of this circuit. If the matrices $B_{m,k}$ admit efficient quantum implementations, e.g., as linear combinations of efficiently implementable unitaries, the expectation values $\braket{\widetilde{\Pi}_L([x])^{\otimes m} \vert B_{m,k} \widetilde{\Pi}_L([x])^{\otimes m}}$ can be estimated using Hadamard-test-type routines.

\begin{theorem}
	\label{thm:rate_CP}
	Let $s \geq 0$ and $f \in H^s(\mathbb{CP}^{d-1})$. Then, for every $K,L \in \mathbb{N}_0$, we have
	\begin{equation}
		\begin{aligned}
			\Vert f - \mathcal{Q}_{K,L} f \Vert_{L^2(\mathbb{CP}^{d-1})} & \lesssim_{d,s} \left( (1+K)^{-s} + (1+K)^d \varepsilon_n(L) \right) \Vert f \Vert_{H^s(\mathbb{CP}^{d-1})} \\
			&\lesssim_{d,s} \left( N_K^{-\frac{s}{2d-2}} + N_K^{\frac{1}{2}+\frac{1}{2d-2}} \varepsilon_n(L) \right) \Vert f \Vert_{H^s(\mathbb{CP}^{d-1})},
		\end{aligned}
	\end{equation}
	where $N_K \asymp_d (1+K)^{2d-2}$.
\end{theorem}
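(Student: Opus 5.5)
The plan follows the proofs of Theorems~\ref{thm:rate_SU} and~\ref{thm:rate_S}: split the error by the triangle inequality into a spectral truncation term and a circuit term,
\begin{equation}
	\Vert f - \mathcal{Q}_{K,L} f \Vert_{L^2(\mathbb{CP}^{d-1})} \leq \Vert f - P_K f \Vert_{L^2(\mathbb{CP}^{d-1})} + \sup_{[x] \in \mathbb{CP}^{d-1}} \vert (P_K f)([x]) - (\mathcal{Q}_{K,L} f)([x]) \vert .
\end{equation}
For the first term we argue as in Corollary~\ref{cor:fourier_projection}. Since $\lambda_m = m(m+d-1) \geq (K+1)^2$ for $m > K$, Parseval gives $\Vert f - P_K f \Vert_{L^2} \leq (1+(K+1)^2)^{-s/2} \Vert f \Vert_{H^s} \lesssim_s (1+K)^{-s}\Vert f\Vert_{H^s}$.

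For the second term, first observe that $(\mathcal{Q}_{K,L} f)([x]) = (P_K f)(\Pi_L([x]))$. This holds because $Y_{m,k}([y]) = \braket{y^{\otimes m} \vert B_{m,k} y^{\otimes m}}$ for any unit representative $y$ of $[y]$, and $\widetilde{\Pi}_L([x])$ is a representative of $\Pi_L([x])$. Next we bound the distance $d_{\mathbb{CP}^{d-1}}([x],\Pi_L([x]))$.
\begin{itemize}
	\item The orbit map $\mathcal{PU}(d) \ni [V] \mapsto [V e_1] \in \mathbb{CP}^{d-1}$ is Lipschitz. It lifts to $\mathcal{SU}(d) \ni V \mapsto Ve_1 \in \mathbb{S}^{2d-1}$, which is $1$-Lipschitz, and the Hopf projection is a Riemannian submersion, hence $1$-Lipschitz for $\mathbf{g}_{\mathrm{FS}}$.
	\item Because we use $\mathbf{g}_{\mathbb{CP}^{d-1}} = 4\mathbf{g}_{\mathrm{FS}}$, distances double, so the Lipschitz constant is at most $2$.
	\item Geodesics in $\mathcal{PU}(d)$ lift locally isometrically to $\mathcal{SU}(d)$, so the quotient metric $d_{\mathcal{PU}(d)}$ passes through this argument.
\end{itemize}
Together with the definition of $C_L$ this gives
\begin{equation}
	d_{\mathbb{CP}^{d-1}}([x],\Pi_L([x])) \leq 2\, d_{\mathcal{PU}(d)}(U_x, C_L([x])) \leq 2\varepsilon_n(L).
\end{equation}

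Then apply the Bernstein--Nikolskii inequality \cite[Theorems~3.2 and~4.1]{pesenson08} on the compact homogeneous manifold $\mathbb{CP}^{d-1}$, of real dimension $2d-2$, to the band-limited function $P_K f$, whose eigenvalues satisfy $\lambda_m \leq K(K+d-1) \lesssim_d (1+K)^2$:
\begin{equation}
	\Vert \nabla P_K f \Vert_{L^\infty(\mathbb{CP}^{d-1})} \lesssim_d (1+K)^{1+\frac{2d-2}{2}} \Vert P_K f \Vert_{L^2(\mathbb{CP}^{d-1})} \leq (1+K)^{d} \Vert f \Vert_{L^2(\mathbb{CP}^{d-1})}.
\end{equation}
The mean value theorem along a minimizing geodesic then gives $\vert (P_Kf)([x]) - (P_Kf)(\Pi_L([x]))\vert \lesssim_d (1+K)^d \varepsilon_n(L) \Vert f\Vert_{L^2}$ uniformly in $[x]$. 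Since $\Vert f \Vert_{L^2} \leq \Vert f\Vert_{H^s}$, this yields the first inequality.

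The second inequality follows from $N_K = \binom{K+d-1}{d-1}^2 \asymp_d (1+K)^{2d-2}$. This gives $(1+K)^{-s} \asymp N_K^{-s/(2d-2)}$ and $(1+K)^d = \bigl((1+K)^{2d-2}\bigr)^{\frac{d}{2d-2}} \asymp N_K^{\frac12 + \frac{1}{2d-2}}$.

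The main obstacle is the Lipschitz estimate for the orbit map from $(\mathcal{PU}(d), d_{\mathcal{PU}(d)})$ to $(\mathbb{CP}^{d-1}, \mathbf{g}_{\mathbb{CP}^{d-1}})$ under the rescaled metric. I would handle it by lifting to $\mathcal{SU}(d)\to\mathbb{S}^{2d-1}$ as in \eqref{eq:thm:rate_S:proof2}, minimizing over the central representatives $e^{2\pi\mathbf{i}k/d}V$, all of which project to the same point $[Ve_1]$, and absorbing the factor $2$ into the implicit constant.
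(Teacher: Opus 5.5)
Your proposal is correct and follows essentially the same route as the paper. The paper simply reruns the proof of Theorem~\ref{thm:rate_S} using the $2$-Lipschitz orbit map $\mathcal{PU}(d)\ni[U]\mapsto[Ue_1]\in\mathbb{CP}^{d-1}$ and the Bernstein--Nikolskii inequality in real dimension $2d-2$; your lift through $\mathcal{SU}(d)\to\mathbb{S}^{2d-1}$, the Hopf submersion, the minimum over central representatives, and the factor $2$ from $\mathbf{g}_{\mathbb{CP}^{d-1}}=4\mathbf{g}_{\mathrm{FS}}$ makes explicit the Lipschitz constant that the paper only asserts.
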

\begin{proof}
	The conclusion follows from the same arguments as in the proof of Theorem~\ref{thm:rate_S} where we now use that the orbit map $\mathcal{PU}(d) \ni [U] \mapsto [Ue_1] \in\mathbb{CP}^{d-1}$ is $2$-Lipschitz and that $\mathbb{CP}^{d-1}$ is a compact homogeneous manifold of real dimension $2d-2$.
\end{proof}

For $n=1$ with $d := 2^n = 2$, we have $\mathbb{CP}^1 \cong \mathbb{S}^2$ as Riemannian manifolds, where $\mathbb{S}^2$ denotes the Bloch sphere (see \cite[Section~5.2]{bengtsson17}). Hence, Theorem~\ref{thm:rate_CP} yields
\begin{equation}
	\label{eq:rate_CP1}
	\begin{aligned}
		\Vert f - \mathcal{Q}_{K,L} f \Vert_{L^2(\mathbb{CP}^1)} & \lesssim_s \left( (1+K)^{-s} + (1+K)^2 \varepsilon_1(L) \right) \Vert f \Vert_{H^s(\mathbb{CP}^1)} \\
		& \lesssim_s \left( N_K^{-\frac{s}{2}} + N_K \varepsilon_1(L) \right) \Vert f \Vert_{H^s(\mathbb{CP}^1)},
	\end{aligned}
\end{equation}
where $N_K \asymp (1+K)^2$.

\section{Numerical experiments}
\label{sec:numerics}

In this section, we illustrate in two numerical examples\footnote{All numerical experiments have been implemented in Python on a Lenovo ThinkPad P14s Gen6 with Processor Intel Core Ultra 7 265H, 2200 Mhz, 16 Cores. The code can be found under the following link: \url{https://github.com/psc25/QuantumHarmonic}} the approximation of functions depending on quantum unitary operations and quantum states, and examine whether the observed errors exhibit the dependence of the theoretical approximation rates.

\subsection{Function of unitary operations up to global phase}
\label{sec:num_PU2}

In the first example, we approximate a function depending on unitary operations up to global phase in $\mathcal{PU}(2) \cong \mathcal{SO}(3)$ over $1$-qubits. As target function, we consider the smooth function $f: \mathcal{PU}(2) \rightarrow \mathbb{C}$ defined by
\begin{equation}
	\mathcal{PU}(2) \ni [U] \quad \mapsto \quad f([U]) := \exp\Big( 0.9 \left( \vert\operatorname{tr}(U)\vert^2-2 \right) \!\Big) + 0.3 \, \mathbf{i} \left( \vert U_{1,1} \vert^2 - \vert U_{1,2} \vert^2 \right) \in \mathbb{C}.
\end{equation}
Both terms are invariant under $U \mapsto e^{\mathbf{i} \theta} U$, so that $f$ is well-defined on $\mathcal{PU}(2)$.

For the spectral approximation, we recall that the irreducible unitary representations $\widetilde{\pi}_\omega$ of $\mathcal{PU}(2)$ are indexed by $\omega \in 2\mathbb{N}_0$, with dimension $d_\omega = \omega+1$ and Laplace--Beltrami eigenvalue $\lambda_\omega = \frac{1}{2}\omega(\omega+2)$. Thus, for $K \in \mathbb{N}_0$, we follow \eqref{eq:defP_PU} and define
\begin{equation}
	\mathcal{PU}(2) \ni [U] \quad \mapsto \quad (P_K f)([U]) := \sum_{\omega \in 2\mathbb{N}_0 \atop \lambda_\omega \leq K^2} d_\omega \trace\big(
	\widehat{f}_\omega \widetilde{\pi}_\omega([U]) \big) \in \mathbb{C},
\end{equation}
with spectral dimension $N_K := \dim(\range(P_K)) = \sum_{\omega \in 2\mathbb{N}_0, \, \lambda_\omega \leq K^2} (\omega+1)^2$. The Fourier coefficients $\widehat{f}_\omega := \int_{\mathcal{PU}(2)} f([U]) \widetilde{\pi}_\omega([U])^* \, \mu_{\mathcal{PU}(2)}(d[U])$ are numerically approximated using the ZYZ Euler-angle parametrization, with $18$ quadrature points in each angular variable.

For the circuit approximation, we use the symmetric gate set $\mathcal{A}_1 := \lbrace H, P(\pi/4), P(\pi/4)^{-1} \rbrace$ and let $\mathcal{A}_{1,L}$ denote the corresponding projective unitaries represented by words of length at most $L$. As above, for every $[U] \in \mathcal{PU}(2)$, we choose
\begin{equation}
	Q_L([U]) \in \argmin_{[V] \in \mathcal{A}_{1,L}} d_{\mathcal{PU}(2)}([U],[V]), \qquad d_{\mathcal{PU}(2)}([U],[V]) = \sqrt{2} \arccos\left( \frac{\vert \trace(U^*V) \vert}{2} \right).
\end{equation}
Since the distance is decreasing in $\vert \operatorname{tr}(U^*V) \vert$, the minimizer is obtained by maximizing $\vert \operatorname{tr}(U^*V) \vert$. Hence, for $L \in \mathbb{N}_0$, we follow \eqref{eq:defQ_PU} and define
\begin{equation}
	\mathcal{PU}(2) \ni [U] \quad \mapsto \quad (\mathcal{Q}_{K,L} f)([U]) := (P_K f)(Q_L([U])) \in \mathbb{C},
\end{equation}
which can be implemented using $\omega$ copies of the circuit for $Q_L([U])$ (see Section~\ref{sec:rate_PU}).

For different spectral cutoffs $K \in \lbrace 0, 2, 4, 6, 8 \rbrace$ and maximal circuit depths $L \in \lbrace 8, 12, 16, 20, 24 \rbrace$, we estimate $\Vert f - \mathcal{Q}_{K,L} f \Vert_{L^2(\mathcal{PU}(2))}$ by the (empirical) root mean squared error (RMSE)
\begin{equation}
	\label{eq:RMSE_PU2}
	\operatorname{RMSE}_{K,L} := \left( \frac{1}{N} \sum_{n=1}^N \left\vert f([U_n]) - (\mathcal{Q}_{K,L} f)([U_n]) \right\vert^2 \right)^\frac{1}{2}
\end{equation}
over $N = 2000$ independent and identically distributed (i.i.d.)~samples $[U_1],\ldots,[U_N] \sim \mu_{\mathcal{PU}(2)}$. For $K > 0$, we compare the RMSE with the rate of Theorem~\ref{thm:rate_PU} (see also \eqref{eq:rate_PU2}) by fitting
\begin{equation}
	A (1+K)^{-s} + B (1+K)^\frac{5}{2} \exp\left( -a L^\frac{1}{c} \right),
\end{equation}
where $A,B,a > 0$ are parameters, $s > 0$ is the effective spectral exponent, and $c > \log_\varphi(2) \approx 1.44042$ the Solovay--Kitaev constant. Since $f$ is analytic, $s > 0$ should be interpreted as an effective finite-range spectral exponent rather than its maximal Sobolev regularity.

Figure~\ref{fig:PU} shows that the empirical RMSE exhibits a dependence on $K$ and $L$ consistent with the rate in Theorem~\ref{thm:rate_PU}, with estimated $\widehat{c} \approx 1.44052$ lying close to the imposed lower bound.

\begin{figure}[ht]
	\centering
	\includegraphics[height=6.5cm]{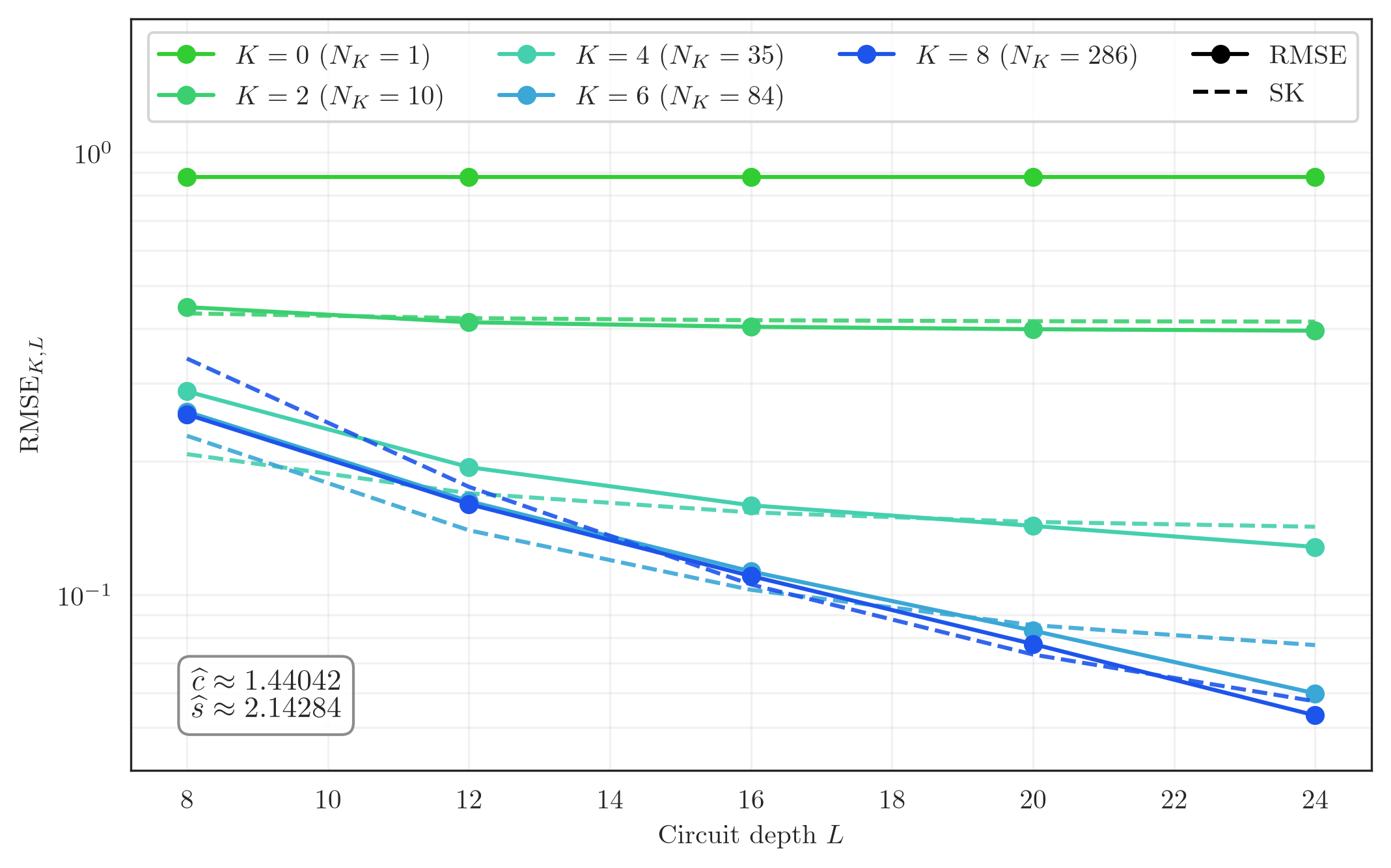}
	\caption{Empirical $\operatorname{RMSE}_{K,L}$ \eqref{eq:RMSE_PU2} between $f$ and its approximation $\mathcal{Q}_{K,L} f$ depending on the maximal circuit depth $L \in \lbrace 8, 12, 16, 20, 24 \rbrace$, for different spectral cutoffs $K \in \lbrace 0, 2, 4, 6, 8 \rbrace$. Solid lines with markers show the RMSE computed over $2000$ i.i.d.~samples $[U_1],\ldots,[U_N] \sim \mu_{\mathcal{PU}(2)}$, while dashed lines show the fitted theoretical rate from Theorem~\ref{thm:rate_PU}.}
	\label{fig:PU}
\end{figure}

\subsection{Function of physical qubit states}

In the second example, we approximate a function of physical $1$-qubit states in $\mathbb{CP}^1 \cong \mathbb{S}^2$. As target function, we consider 
\begin{equation} 
	\mathbb{CP}^1 \ni [x] \quad \mapsto \quad f([x]) := \left( 1- \left( \vert x_0 \vert^2 - \vert x_1 \vert^2 \right)^2 \right)^\frac{1+\delta}{2} \in \mathbb{R},
\end{equation} 
where $\delta = 0.05$ and $x = (x_0,x_1)^\top \in \mathbb{S}^3$ is any normalized representative of $[x]$. Note that the right-hand side is invariant under $x \mapsto e^{\mathbf{i} \theta} x$, whence $f$ is well-defined on $\mathbb{CP}^1$.

For the spectral approximation, we use the identification $\mathbb{CP}^1 \cong \mathbb{S}^2$ and recall that the spherical harmonics $(Y_{\ell,m})_{m=-\ell,\ldots,\ell}$ have Laplace--Beltrami eigenvalue $\lambda_\ell=\ell(\ell+1)$. Thus, for $K \in \mathbb{N}_0$, we follow \eqref{eq:defP_CP} and define 
\begin{equation} 
	\mathbb{CP}^1 \ni [x] \quad \mapsto \quad (P_K f)([x]) := \sum_{\ell=0}^K \sum_{m=-\ell}^{\ell} \widehat{f}_{\ell,m} Y_{\ell,m}(r([x])) \in \mathbb{C},
\end{equation}
with spectral dimension $N_K=(K+1)^2$. The Fourier coefficients are numerically approximated using spherical coordinates, with a $48$-point Gauss--Legendre quadrature in $z=\cos(\theta)$ and $96$ equidistant points in the azimuthal angle. 

For the circuit approximation, we use the same gate set $\mathcal{A}_1$ and circuit sets $\mathcal{A}_{1,L}$ as in Section~\ref{sec:num_PU2}. For a physical state $[x] \in \mathbb{CP}^1$ with Bloch angles $(\theta,\phi)$, we choose $U_x := R_z(\phi) R_y(\theta) \in \mathcal{PU}(2)$ satisfying $U_x[e_1]=[x]$. For $L \in \mathbb{N}_0$, let $C_L([x]) \in \mathcal{A}_{1,L}$ be the closest circuit to $U_x$ with respect to $d_{\mathcal{PU}(2)}$ and define $\Pi_L([x]) := C_L([x])[e_1]$. Then, we follow \eqref{eq:defQ_CP} and define
\begin{equation}
	\mathbb{CP}^1 \ni [x] \quad \mapsto \quad (\mathcal{Q}_{K,L} f)([x]) := (P_K f)(\Pi_L([x])) \in \mathbb{C}.
\end{equation}
As discussed in Section~\ref{sec:rate_CP}, the degree-$\ell$ terms in $(\mathcal{Q}_{K,L} f)([x])$ can be evaluated using $\ell$ copies of the circuit preparing $\Pi_L([x])$, together with the corresponding observable.

For spectral cutoffs $K \in \lbrace 0, 1, 2, 3, 4 \rbrace$ and maximal circuit depths $L \in \lbrace 8, 12, 16, 20, 24, 28 \rbrace$, we estimate $\Vert f - \mathcal{Q}_{K,L} f \Vert_{L^2(\mathbb{CP}^1)}$ by the (empirical) root mean squared error (RMSE) 
\begin{equation}
	\label{eq:RMSE_CP1}
	\operatorname{RMSE}_{K,L} := \left( \frac{1}{N} \sum_{n=1}^N \left\vert f([x_n]) - (\mathcal{Q}_{K,L} f)([x_n]) \right\vert^2 \right)^\frac{1}{2}
\end{equation}
over $N = 2000$ i.i.d.~samples $[x_1],\ldots,[x_N] \sim \mu_{\mathbb{CP}^1}$. As above, for $K>0$, we compare the empirical RMSE with the refined rate obtained from the eigenvalues of the Laplace--Beltrami operator $\Delta_{\mathbb{CP}^1}$, combined with the Solovay--Kitaev estimate, by fitting
\begin{equation}
	A \big( 1+(K+1)(K+2) \big)^{-\frac{s}{2}} + B(1+K)^2 \exp\left( -aL^\frac{1}{c} \right),
\end{equation}
where $A,B,a,s \geq 0$, and $c \geq \log_\varphi(2)+10^{-6} \approx 1.44052$ are fitted parameters. Figure~\ref{fig:CP} shows that the empirical RMSE exhibits a dependence on $K$ and $L$ consistent with the rate in Theorem~\ref{thm:rate_CP}, with estimated Solovay--Kitaev constant $\widehat{c} \approx 1.44052$ lying close to the imposed lower bound.

\begin{figure}[ht]
	\centering
	\includegraphics[height=6.5cm]{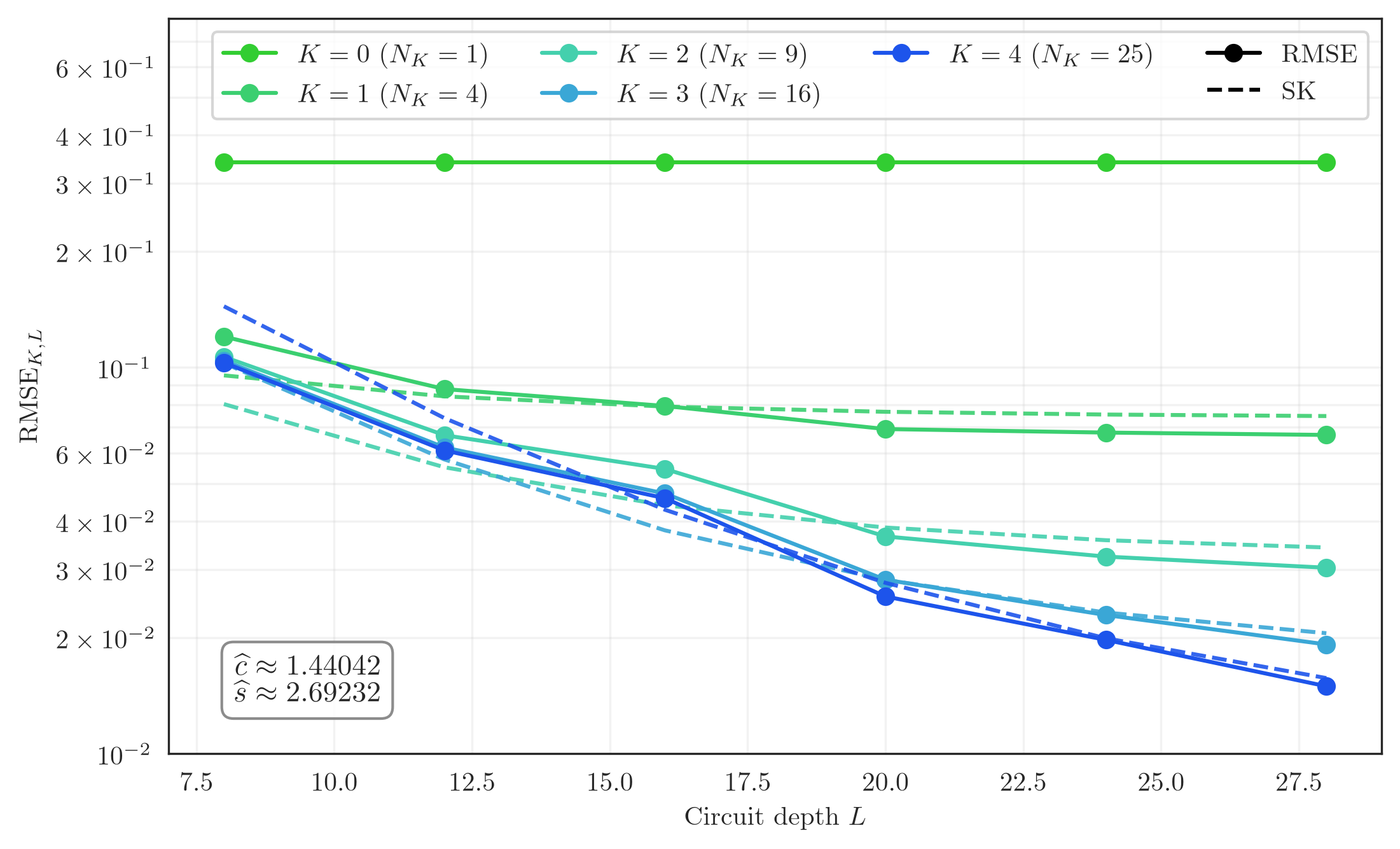}
	\caption{Empirical $\operatorname{RMSE}_{K,L}$ \eqref{eq:RMSE_CP1} between $f$ and its approximation $\mathcal{Q}_{K,L} f$ depending on the maximal circuit depth $L \in \lbrace 8, 12, 16, 20, 24, 28 \rbrace$, for different spectral cutoffs $K \in \lbrace 0, 1, 2, 3, 4 \rbrace$. Solid lines with markers show the RMSE computed over $2000$ i.i.d.~samples $[x_1],\ldots,[x_N] \sim \mu_{\mathbb{CP}^1}$, while dashed lines show the fitted theoretical rate from Theorem~\ref{thm:rate_CP}.}
	\label{fig:CP}
\end{figure}

\medskip

\noindent\textbf{Acknowledgments.} P.~Schmocker is supported by the FinsureTech Hub of ETH Zurich. 

\bibliographystyle{abbrv}
\bibliography{mybib}

\end{document}